\documentclass[10pt,twocolumn]{article}
\usepackage[a4paper,left=1.7cm,right=1.7cm,top=1.8cm,bottom=1.9cm]{geometry}
\usepackage[T1]{fontenc}
\usepackage{lmodern}
\usepackage{amsmath,amssymb,amsthm,mathtools}
\usepackage{bm}
\usepackage{graphicx}
\usepackage{booktabs}
\usepackage{longtable}
\usepackage{multirow}
\usepackage{array}
\usepackage{needspace}
\usepackage{float}
\usepackage{placeins}
\usepackage{xcolor}
\usepackage{tikz}
\usetikzlibrary{positioning,arrows.meta,fit}
\usepackage[numbers,sort&compress]{natbib}

\usepackage{authblk}

\usepackage[colorlinks=true,linkcolor=blue!50!black,citecolor=blue!50!black,urlcolor=blue!50!black]{hyperref}
\usepackage[capitalise]{cleveref}

\newtheorem{theorem}{Theorem}
\newtheorem{corollary}[theorem]{Corollary}
\newtheorem{proposition}[theorem]{Proposition}
\theoremstyle{definition}
\newtheorem{remark}[theorem]{Remark}
\theoremstyle{plain}
\newcommand{\E}{\mathbb{E}}
\newcommand{\R}{\mathbb{R}}
\newcommand{\C}{\mathbb{C}}
\newcommand{\tr}{\operatorname{tr}}
\newcommand{\supp}{\operatorname{supp}}
\newcommand{\Vbar}{\bar V}
\newcommand{\gbar}{\bar g}

\title{An Exact Polynomial Task--Risk Bridge\\ for Time-Multiplexed Photonic Quantum Reservoirs}
\author[1]{Yuqi Zhang}
\author[1]{Tianyu Zhou}
\author[1]{Yilun Jiang}
\author[2]{Tian Chen}
\author[1,*]{Hao Tang}
\affil[1]{School of Physics and Astronomy, Shanghai Jiao Tong University, Shanghai, China}
\affil[2]{School of Integrated Circuits, Shanghai Jiao Tong University, Shanghai, China}
\affil[*]{Corresponding author: Hao Tang (\href{mailto:htang2015@sjtu.edu.cn}{\texttt{htang2015@sjtu.edu.cn}})}
\date{}

\begin{document}
\twocolumn[{%
\begin{@twocolumnfalse}
\maketitle

\begin{abstract}
Time-multiplexed (TDM) photonic reservoirs are usually judged by their scores on a few datasets, which say little about a chip's overall performance, and designed by enumerating candidate circuits and simulating each one, which is inefficient and gives no guarantee of finding a good structure.
For a time-unrolled passive linear network with data-modulated gates or sources and a ridge readout, we derive an exact bridge from the task to the prediction risk in three steps.
(i)~\emph{Task features}: the finite-shot risk depends on the time series only through finitely many statistics of its training windows and their correlations with the target, selected by the encoding and the optical paths.
(ii)~\emph{Encoding}: every output moment is a polynomial in the features of the data-modulated sites, with degree and support bounded by tuples of optical paths. Gate encoding reads the characteristic function and creates interactions across time lags; squeezing encoding reads the moment-generating function and displacement encoding low-order moments, and with quadrature receivers they yield only additive and linear models, respectively, for every topology.
(iii)~\emph{Readout}: for homodyne, heterodyne and photon-number receivers the risk closes exactly, and its dependence on the number of shots is an explicit sum over signal-to-noise modes; threshold clicks admit finite-dictionary approximations with certified error. Parts of the network that share no light carry independent states, which a receiver combines only through measurement events that join them.
Checks on seven datasets confirm every exact statement to machine precision. The three steps thus yield a useful space of candidate TDM architectures, pruned by exact statements rather than by trial simulation, which can in future guide the selection of TDM chips for target requirements.
\end{abstract}
\vspace{0.5cm}
\end{@twocolumnfalse}
}]

\section{Introduction}\label{sec:intro}

Reservoir computing trains only a linear readout on the response of a fixed dynamical system~\cite{jaeger2004harnessing,lukosevicius2009reservoir}.
Delay-based reservoirs obtain many virtual nodes from a single nonlinear node and a delay line by time multiplexing~\cite{appeltant2011delay,paquot2012optoelectronic,brunner2013parallel},
and photonic reservoirs have been demonstrated on chip~\cite{vandoorne2014experimental}.
Quantum reservoirs replace the dynamical system by a quantum one~\cite{fujii2017harnessing,nakajima2019boosting,mujal2021opportunities};
Gaussian continuous-variable reservoirs are universal and versatile~\cite{nokkala2021gaussian}, and time-division multiplexing (TDM)
turns a single optical loop network into a large reservoir that processes a stream in real time~\cite{garciabeni2023scalable}.
Loop-based time-bin processing was proposed for scalable boson sampling~\cite{motes2014scalable}.
TDM is also the architecture behind large optical cluster states~\cite{menicucci2011temporal,yokoyama2013ultra,yoshikawa2016million,asavanant2019generation,larsen2019deterministic} and
programmable Gaussian boson sampling~\cite{madsen2022quantum,yu2023universal}, and space--time-multiplexed Gaussian boson sampling has recently been applied to learning tasks~\cite{fu2026chip}.
Loop processors can implement programmable Gaussian gates and temporal linear-optical transformations~\cite{enomoto2021programmable,yonezu2023time}.

A TDM reservoir is specified by its loop delays, number of rails, gate order within a time step,
input encoding (coupling angles, phases, source squeezing or displacement), retained output time bins,
measurement receiver and shot budget. Circuit-level simulation tools can evaluate any specified choice, and such evaluations are indispensable. Used as the only tool, however, they leave two problems. First, a circuit that scores well on a few datasets need not perform well in general: the scores do not show which time-lag interactions the circuit can represent, which of them a task uses, or which receiver can read them at a given shot budget. Second, designing a circuit by enumerating the choices above and simulating each candidate is costly, because the number of candidates grows combinatorially with the number of choices, and a finite enumeration gives no guarantee that a good structure is found. Stability and information-processing capacity provide useful task-independent criteria~\cite{dambre2012information,nokkala2021gaussian},
and approximate task-specific performance formulas have guided the parameters and input masks of classical delay reservoirs~\cite{grigoryeva2015optimal}.
For photonic Gaussian TDM reservoirs, loop reflectivity has also been related analytically to memory depth and finite-sample requirements~\cite{garciabeni2023scalable}, while the effect of squeezing under readout noise has been studied separately~\cite{garciabeni2024squeezing}.
What remains missing is a joint connection, for a given task, between its history statistics,
the path structure of a multi-rail Gaussian TDM circuit, receiver-dependent shot noise, and prediction risk.

We address this gap for time-multiplexed linear-optical reservoirs with a ridge readout, through the composite map
\begin{equation}\label{eq:chain}
G \xrightarrow{\text{enc.}} (\phi_g,\psi_s) \xrightarrow{\ \theta\ } \mathcal V_\theta \xrightarrow{\ b\ } z_b \xrightarrow{\text{ridge}} \hat y ,
\end{equation}
where $G$ is a window of drive values, $(\phi_g,\psi_s)$ the encoding features of the data-dependent gates and sources, $\mathcal V_\theta$ the output moments of the retained modes that the receiver uses (the covariance $V_\theta$ for zero-mean Gaussian states), and $z_b$ the receiver's features.
\Cref{fig:chain} follows this chain through the $U8$ circuit studied below: training windows supply the drive and the targets, the data enter at gate or source sites, the output moments are polynomials in the encoding features, the receiver maps them to features, and the risk depends on the task only through a few statistics of the windows.
Each arrow is treated exactly. The resulting bridge is not a faster simulator; several of its statements hold for every operating point and every shot budget, so they can exclude whole families of designs without simulating them. Our contributions follow the three steps of \eqref{eq:chain}: what a circuit extracts from the task, how the encoding and the optical paths imprint it on the light, and how the receiver reads it out.
\begin{enumerate}
\item[(i)] \emph{Task features.} For receivers whose features and shot noise are moment polynomials, the risk of the population ridge readout depends on the task only through finitely many statistics of the training windows and their correlations with the target (\cref{thm:bridge}): values of the characteristic function for gate encoding, of the moment generating function for squeezing encoding, and low-order moments for displacement encoding, at arguments fixed by the encoding; only drive components, and tuples of them, admitted by the optical paths enter (\cref{cor:task}). These statistics are computed from the data alone and state which features of a time series a given circuit can use.
\item[(ii)] \emph{Encoding.} Every output moment is a polynomial in per-site encoding features, with degree and support bounded by tuples of optical paths (\cref{thm:poly}); for Gaussian sources the covariance is affine in each source, whereas higher moments multiply sources (\cref{prop:gauss}). The encoding therefore fixes the function class available to a linear readout, and with it a classical baseline that no circuit of the class can beat: with quadrature receivers, gate encoding gives trigonometric interactions between lags joined by optical paths, squeezing encoding an additive model and displacement encoding a linear one, for every topology (\cref{cor:class}). The paths exclude whole families of interactions for every operating angle (\cref{thm:zeros}), which gives an optimistic, topology-only risk floor (\cref{cor:topology-floor}); when the path graph splits into components, the output state factorizes exactly (\cref{thm:components}).
\item[(iii)] \emph{Readout.} A receiver reads interactions between components only through events that span them (\cref{cor:span}). Exclusion certificates therefore extend from quadrature receivers to nonlinear component-local receivers such as single-mode clicks, and whether an excluded direction becomes readable is decided by the receiver's events. The finite-shot risk is an explicit sum over single-shot signal-to-noise modes (\cref{prop:shots}), which compares receivers on a given task without sampling. Threshold clicks, which are not moment polynomials, are approximated uniformly by finite dictionaries with a certified risk error (\cref{prop:click}).
\item[(iv)] \emph{Numerical verification.} On seven forecasting datasets and synthetic circuits, every exact statement is confirmed to machine precision, the predicted support of the click receiver's interactions matches the computed one pair by pair, and a sparse second-order truncation reproduces the exact risk to within $0.37\%$ (\cref{sec:truncation,sec:numerics}).
\end{enumerate}
All results concern the risk on the empirical distribution of training windows, not held-out generalization. The theorems use only the linearity of the optics and the form of the encoding. Nonclassicality enters through the source moments and does not change the polynomial or factorized structure, so the identities hold equally for classical light; ``quantum'' refers to the physical platform, whose TMSV and Fock-state sources we treat on the same footing. How the theory could guide circuit prediction is discussed in \cref{sec:discussion}.

\paragraph{Related work.}
Task-adaptive reservoir design is not new. In particular, Grigoryeva \emph{et al.}~\cite{grigoryeva2015optimal} derived an approximate task-dependent capacity model for classical delay reservoirs and used it to optimize architecture parameters and input masks; a magnetic implementation gives another task-adaptive example~\cite{lee2024taskadaptive}.
Classical capacity and Volterra views also describe available temporal computations~\cite{dambre2012information,boyd1985fading,gauthier2021next}; echo-state networks have an independent universality theory~\cite{grigoryeva2018universal}.
Quantum reservoir theory has established Gaussian universality and recurrent quantum-network approximation results~\cite{nokkala2021gaussian,gonon2026feedback,schuette2025expressive}.
Real-time photonic reservoirs and recent continuous-variable, integrated-photonic, and Gaussian-boson-sampler experiments demonstrate memory, forecasting, or large-scale correlations~\cite{garciabeni2023scalable,paparelle2026memory,dibartolo2026forecasting,cimini2026large}; feedback-based time-series quantum reservoirs provide another physical route~\cite{kobayashi2024feedback}.
Finite Fourier representations of data-encoded quantum circuits are known~\cite{schuld2021effect,perezsalinas2020data,yu2022power}; circuit-spectrum-based classical approximations have also been proved~\cite{landman2023classically}.
For linear photonic circuits with Fock-state inputs, the spectrum of a phase-encoded model is set by the number of input photons~\cite{gan2022fock}; \cref{cor:fock} recovers this as a special case.
Most directly, photonic photon-number-resolving reservoir work identifies an encoding-controlled Fourier spectrum and quantifies expressivity by the rank of the observable Gram matrix and a shot-limited conditioned rank, evaluated task-independently for a memoryless (extreme-learning) architecture~\cite{nerenberg2025photon}.
Thus neither finite spectra, task adaptation nor finite-shot effects
alone are claimed as new. The distinction here is that one theorem covers gate, source and displacement encodings in time-unrolled circuits with memory, that the risk is closed exactly in terms of a few statistics of the data, and that this closure yields consequences that hold for every operating point: factorization over path components, a function class fixed by the encoding, and an exact shot law. Gaussian click formulas, including the no-click determinant approximated in \cref{sec:clicks}, and
their relation to Gaussian boson sampling remain standard~\cite{hamilton2017gaussian,kruse2019detailed,quesada2018gaussian,bulmer2022threshold}.

Proofs are in \cref{app:proofs}.

\begin{figure*}[t]
\centering
\begin{tikzpicture}[x=1cm,y=1cm,>=Stealth,
  panel/.style={draw=black!65,rounded corners=2pt,fill=blue!2,line width=.6pt},
  title/.style={font=\small\bfseries,anchor=west},
  note/.style={font=\scriptsize,align=center},
  line/.style={font=\scriptsize,anchor=west}]
\draw[panel] (0,9.70) rectangle (16.2,11.40);
\node[title] at (.22,11.11) {1. Task: training windows, targets and their statistics};
\begin{scope}[xshift=0.52cm]
\node[font=\scriptsize,anchor=west] at (.22,10.49) {Traffic};
\node[font=\scriptsize,anchor=west] at (5.19,10.49) {Exchange};
\node[font=\scriptsize,anchor=west] at (10.08,10.49) {Solar};
\draw[gray!55] (1.39,10.25)--(4.66,10.25) (6.85,10.25)--(9.61,10.25) (11.08,10.25)--(14.44,10.25);
\draw[blue!75!black,thick] (1.42,10.35)--(1.83,10.30)--(2.12,10.74)--(2.44,10.29)--(2.87,10.37)--(3.17,10.76)--(3.51,10.29)--(4.42,10.39);
\draw[orange!85!black,thick] (6.88,10.57)--(7.12,10.31)--(7.43,10.63)--(7.69,10.35)--(7.99,10.41)--(8.27,10.71)--(8.62,10.46)--(9.44,10.33);
\draw[red!72!black,thick] (11.11,10.29)--(11.51,10.37)--(11.89,10.67)--(12.30,10.78)--(12.75,10.72)--(13.18,10.53)--(13.62,10.31)--(14.28,10.27);
\end{scope}
\node[note] at (8.1,9.91) {training windows $\longrightarrow$ drive $G$;\qquad forecast target $y$};
\draw[->,line width=.8pt] (8.1,9.65)--(8.1,9.34);
\draw[panel] (0,3.33) rectangle (16.2,9.28);
\node[title] at (.22,9.02) {2. Multi-rail TDM circuit and its encoding sites};
\node[font=\scriptsize,anchor=east] at (15.97,9.02) {DeepQuantum-rendered $U8$ step at $G=0$};
\node[inner sep=0pt] at (8.1,6.33)
  {\includegraphics[width=13.0cm]{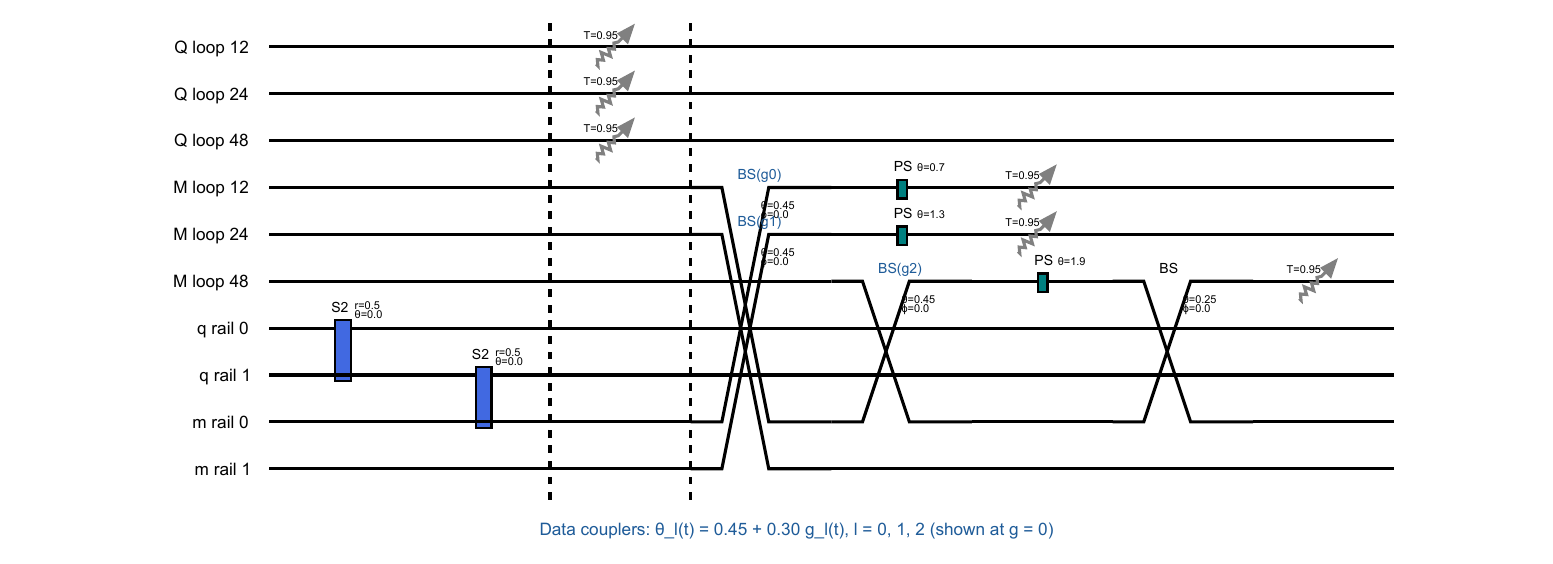}};
\begin{scope}[shift={(1.6,8.69)},x=0.0130811cm,y=-0.0130812cm]
  \draw[orange!85!black,line width=.9pt,dashed,rounded corners=2pt] (205,198) rectangle (232,251);
  \draw[orange!85!black,line width=.9pt,dashed,rounded corners=2pt] (295,228) rectangle (322,281);
  \draw[blue!70!black,line width=.9pt,dashed,rounded corners=2pt] (462,104) rectangle (512,312);
  \draw[blue!70!black,line width=.9pt,dashed,rounded corners=2pt] (552,164) rectangle (602,284);
\end{scope}
\node[note] at (8.1,3.80)
  {$R=2$ rails,\quad $\tau=(12,24,48)$,\quad $W=8$ readout ages};
\node[note] at (8.1,3.55)
  {encoding sites:\quad \textcolor{blue!70!black}{\textbf{gate sites} $\phi_g$} (data couplers, $\theta=\theta^0+\kappa G$)\quad or\quad \textcolor{orange!85!black}{\textbf{source sites} $\psi_s$} (squeezers, $r=r^0+\Delta G$)};
\draw[panel] (0,0) rectangle (4.90,2.95);
\draw[panel] (5.30,0) rectangle (11.30,2.95);
\draw[panel] (11.70,0) rectangle (16.2,2.95);
\draw[->,line width=.8pt] (8.1,3.29)--(8.1,3.15)--(2.45,3.15)--(2.45,2.99);
\draw[->,line width=.8pt] (4.95,1.47)--(5.24,1.47);
\draw[->,line width=.8pt] (11.35,1.47)--(11.64,1.47);
\node[title] at (.22,2.68) {3. Output moments $\mathcal V_\theta(G)$};
\draw[fill=blue!19,draw=blue!65!black] (.25,1.45) rectangle (.88,2.05);
\draw[fill=red!22,draw=red!70!black] (.91,1.45) rectangle (1.54,2.05);
\draw[fill=red!22,draw=red!70!black] (.25,.82) rectangle (.88,1.42);
\draw[fill=blue!19,draw=blue!65!black] (.91,.82) rectangle (1.54,1.42);
\node[font=\tiny] at (.565,1.75) {$V_Q$};
\node[font=\tiny] at (1.225,1.75) {$C_{QM}$};
\node[font=\tiny] at (.565,1.12) {$C_{QM}^{\mathsf T}$};
\node[font=\tiny] at (1.225,1.12) {$V_M$};
\node[font=\tiny] at (.895,.62) {Gaussian: $V_\theta$};
\node[line,align=left] at (1.66,1.44)
  {polynomial in $\phi_g,\psi_s$\\[2pt]
   $V$: affine per source\\[2pt]
   $\langle n_in_j\rangle$: source products};
\node[note] at (2.45,.22) {degree and support: path tuples};
\node[title] at (5.52,2.68) {4. Receiver $b$};
\draw[rounded corners=2pt,draw=blue!65!black,fill=blue!7] (5.50,1.86) rectangle (11.10,2.34);
\draw[rounded corners=2pt,draw=green!45!black,fill=green!6] (5.50,1.29) rectangle (11.10,1.77);
\draw[rounded corners=2pt,draw=red!65!black,fill=red!6] (5.50,.72) rectangle (11.10,1.20);
\node[font=\scriptsize\bfseries,anchor=west] at (5.60,2.10) {homodyne, heterodyne};
\node[font=\scriptsize,anchor=east] at (11.02,2.10) {affine in $V$};
\node[font=\scriptsize\bfseries,anchor=west] at (5.60,1.53) {photon-number moments};
\node[font=\scriptsize,anchor=east] at (11.02,1.53) {polynomial};
\node[font=\scriptsize\bfseries,anchor=west] at (5.60,.96) {threshold clicks};
\node[font=\scriptsize,anchor=east] at (11.02,.96) {not polynomial};
\node[note] at (8.3,.36) {$z_b=\Phi_b(\mathcal V_\theta)$;\quad shot noise $\bar\Omega_b/S$};
\node[title] at (11.92,2.68) {5. Exact risk};
\node[font=\scriptsize,anchor=west] at (11.85,2.17)
  {$R=R_0-\frac1q\sum_i\|\tilde c_i\|^2\frac{a_i+2\lambda}{(a_i+\lambda)^2}$};
\node[line,align=left] at (11.85,1.02)
  {task enters only through\\[2pt]
   gate: $\Gamma(\omega)=\E\,e^{i\omega\cdot G}$\\[2pt]
   squeezing: $\mathcal L(\zeta)=\E\,e^{\zeta\cdot G}$\\[2pt]
   displacement: $\E G$, $\operatorname{Cov}G$, \dots};
\draw[->,line width=.9pt,gray!60!black] (16.2,10.55)--(16.62,10.55)--(16.62,1.47)--(16.24,1.47);
\node[font=\scriptsize,text=gray!30!black,rotate=-90] at (16.86,6.0) {task statistics fixed by the encoding (\cref{cor:task})};
\end{tikzpicture}
\caption{The task--path--receiver bridge~\eqref{eq:chain}. (1) Training windows give the drive $G$ and targets $y$; the named tasks are examples from the seven evaluated datasets and the traces are schematic. (2) One time step of the $U8$ circuit, rendered with DeepQuantum~\cite{he2025deepquantum}; the full circuit is unrolled over $L=96$ steps. Data can enter at gate sites (the three couplers, blue: gate encoding, used in most of \cref{sec:numerics}) or at source sites (the squeezers, orange: squeezing encoding, \cref{tab:squeeze}). (3) Every retained output moment is a polynomial in the encoding features, with degree and support fixed by path tuples (\cref{thm:poly}); the covariance is affine in each source, while higher moments multiply sources (\cref{prop:gauss}). (4) Receivers differ in how their features depend on the moments; each has its own shot covariance. (5) The exact risk~\eqref{eq:risk} depends on the task only through statistics of $G$ and $y$ fixed by the encoding (\cref{cor:task}), indicated by the bypass arrow.}
\label{fig:chain}
\end{figure*}

\section{Setting}\label{sec:setting}

\subsection{Multi-rail TDM circuit}
The circuit has $R$ rails and $n_L$ fiber loops with delays $\tau_1,\dots,\tau_{n_L}$ (in time steps); loop $l$ couples to rail $r(l)$.
At each time step $t=1,\dots,L$ a two-mode squeezed vacuum (TMSV)
pair with squeezing $r$ is injected on every rail: one mode enters the
loop network (memory arm) and the other a reference arm. Temporally
multiplexed EPR-pair generation and programmable squeezed-light generation have been demonstrated experimentally~\cite{larsen2019epr,tomoda2023programmable}.
In a time step, the time-bin mode of each rail meets the loops in a fixed gate order.
For every loop $l$ the step applies an optional fixed \texttt{cross} gate between adjacent loops, a coupling beam splitter between loop $l$ and the time bin of rail $r(l)$,
optionally an electro-optic phase modulator (EOM) on the loop, and a fixed phase $\psi_l$. The step ends with a readout beam splitter of fixed angle,
optional inter-rail beam splitters, and loop loss $\eta_{\text{loop}}$.
Output time bins of both arms are retained at $W$ readout ages, giving $m=2WR$ retained modes.
Universal multiport decompositions provide one implementation route for programmable inter-rail linear optics~\cite{clements2016optimal}; no such universal mesh is assumed in the model.

\paragraph{How the drive enters.} The input is a multivariate series, standardized on the training segment and projected on its $P$ leading principal
directions $w_1,\dots,w_P$ (estimated on the training segment). A window of length $L$ gives the drive $G\in(-1,1)^{L\times P}$, $G_{t,c}=\tanh(w_c^\top s_t)$.
Channel $c$ drives every loop $l$ with $c(l)=c$, where $c(l)=l \bmod P$. At step $t$ the drive enters affinely:
\begin{equation}\label{eq:affine}
\begin{aligned}
\theta_{t,l}&=\theta_l^0+\kappa G_{t,c(l)}
  &&\text{(coupling angle)},\\
\varphi_{t,l}&=aG_{t,c(l)}
  &&\text{(EOM phase)}.
\end{aligned}
\end{equation}
The encoding is \emph{coupling} ($\kappa\neq0$, no EOM), \emph{phase}, or \emph{both}. For a drive component $\alpha=(t,c)$ let
$B_\alpha$ and $E_\alpha$ be the numbers of coupling gates and EOM gates it feeds; with the channel assignment above, $B_\alpha=E_\alpha=\#\{l:c(l)=c\}$ whenever the corresponding encoding is on.

\paragraph{General model: encoding sites.}
The theory below does not depend on this template. Consider any time-unrolled passive linear network, with losses represented by beam splitters coupled to vacuum environment modes, acting on a product of source states (a TMSV pair counts as one source). Call a gate or a source an \emph{encoding site} if it depends on the drive. A gate site $g$ is admitted when its matrix is affine in a finite feature vector $\phi_g(G)$ and its complex conjugate. A source site $s$ is admitted when each of its normally ordered moments of order at most $k$ is a polynomial in a finite feature vector $\psi_s(G)$ and its conjugate; $d_s(k)$ denotes the largest such degree. Examples:
coupling and EOM gates, with $\phi_g=(e^{i\kappa G_\alpha},e^{-i\kappa G_\alpha})$ or $e^{iaG_\alpha}$;
\emph{squeezing encoding}, where the squeezing $r=r(G_\alpha)$ of a source is data dependent and $\psi_s=(\sinh^2r,\sinh r\cosh r)$, with $d_s(k)=\lfloor k/2\rfloor$ (\cref{prop:gauss});
squeezing-phase encoding, with $\psi_s=e^{i\vartheta(G_\alpha)}$ and the same degrees;
and \emph{displacement encoding}, with coherent amplitude $\psi_s=\beta(G_\alpha)$ and $d_s(k)=k$.
Features may be arbitrary nonlinear functions of the drive; only polynomial dependence of the site on its features is required. Drive-independent gates, sources and losses contribute constants ($d_s=0$); Fock-state and other non-Gaussian sources are admitted in this way.

\subsection{Propagation of moments}
All gates are passive and linear. We write the circuit as a linear map $a_{\text{out}}=T a_{\text{in}}+Eb$ on annihilation operators, where losses are beam splitters with vacuum environment modes $b$, so $T$ is a submatrix of a unitary.
The circuit is unrolled in time: every gate at every step is a separate factor of $T$.
The mean $\mu_j=\langle a_j\rangle$ and the moment matrices $N_{jk}=\langle a_j^\dagger a_k\rangle$ and $M_{jk}=\langle a_j a_k\rangle$ transform as
\begin{equation}\label{eq:NM}
\mu_{\text{out}}=T\mu_{\text{in}},\quad
N_{\text{out}}=\bar T N_{\text{in}} T^\top,\quad M_{\text{out}}=T M_{\text{in}} T^\top,
\end{equation}
and every higher normally ordered moment transforms with one factor of $T$ or $\bar T$ per operator (\cref{sec:encoding}).
When all sources are zero-mean Gaussian, as in the template above and in gate and squeezing encoding, the output state is a zero-mean Gaussian state fixed by its covariance matrix
$V\in\R^{2m\times2m}$ (quadrature ordering $xxpp$, $\hbar=2$, vacuum $V=I$; physicality $V+i\Omega\succeq0$ with $\Omega$ the symplectic form~\cite{weedbrook2012gaussian}), which is an affine function of $(N,M,\bar M)$.
Displacement encoding adds the mean $\mu$; for non-Gaussian sources the higher moments must be propagated themselves.

\subsection{Receivers}\label{sec:setting-receivers}
Each receiver uses some of the normally ordered moments of the retained output modes. We collect them in $\mathcal V_\theta(G)$ and write $z_b=\Phi_b(\mathcal V_\theta)$ for its mean features and $\Omega_b(\mathcal V_\theta)$ for their single-shot noise covariance. For zero-mean Gaussian states $\mathcal V_\theta=V_\theta$; displacement adds the mean, and non-Gaussian inputs add the higher moments the receiver needs (for threshold clicks, in general all of them).
Detection efficiency $\eta_{\det}$ is applied as the same loss before every receiver; for Gaussian states $V_\eta=\eta_{\det}V+(1-\eta_{\det})I$. We consider
(i) \emph{heterodyne}: features are second moments of $V'=V_\eta+I$;
(ii) \emph{homodyne} with a finite family of local-oscillator settings, each measured on a fraction of the shots, whose moments are combined into a fixed feature set (time-multiplexed homodyne operations have been demonstrated~\cite{asavanant2021timedomain});
(iii) \emph{threshold click}: features are single-mode and selected two-mode click probabilities, which are non-polynomial functions of $V_\eta$~\cite{quesada2018gaussian,bulmer2022threshold} (approximated uniformly in \cref{sec:clicks});
(iv) \emph{photon-number moments}: features are $\langle n_i\rangle$ and $\langle n_in_j\rangle$ estimated from photon-number-resolving counts.
We call a receiver \emph{moment-polynomial} if its mean feature and single-shot noise covariance are polynomials in the normally ordered output moments. Heterodyne and homodyne are moment-polynomial and, for zero-mean states, moreover \emph{affine}: their raw mean feature is affine in $V$, and their single-shot noise covariance is a quadratic polynomial in $V$ (Wick's theorem). Photon-number moments are moment-polynomial; for zero-mean Gaussian states $\langle n_in_j\rangle$ is quadratic in $V$. Threshold clicks are not moment-polynomial for Gaussian states.

\subsection{Learning task and finite-shot ridge readout}
Each window has a target $y\in\R^q$. With $S$ shots per window the receiver returns $\hat z$ with $\E[\hat z\mid G]=z_b(G)$ and $\operatorname{Cov}(\hat z\mid G)=\Omega_b(\mathcal V_\theta(G))/S$.
A ridge readout with unpenalized intercept is trained on standardized features. Henceforth $\Phi_b$ and $\Omega_b$ include the fixed diagonal rescaling determined on the training segment (after dropping constant coordinates); the raw receiver map and noise law are rescaled accordingly. This scaling is task-dependent but introduces no new drive frequencies.
Throughout, ``risk'' means the prediction risk of the \emph{population} ridge readout, fitted to the exact moments of the empirical distribution of training windows under finite-shot feature noise. It contains no error from fitting a readout to one finite noisy sample and is not a held-out generalization bound.

\paragraph{Conditions.} Throughout:
(C1) moments are taken under the empirical measure of the $n$ training windows;
(C2) the measurement is conditionally unbiased with the covariance above;
(C3) given $G$, measurement noise is uncorrelated with $y$;
(C4) features and targets are centered by their training means and the intercept is not penalized;
(C5) $\lambda$ and the feature standardization are fixed;
(C6) at $\lambda=0$ a pseudo-inverse is used on the range of the Gram matrix.

\section{The exact polynomial bridge}\label{sec:bridge}
The bridge follows the chain~\eqref{eq:chain} in three steps: what the
risk reads from the task (\cref{sec:task-side}), how the encoding and
the optical paths imprint the drive on the light (\cref{sec:encoding}),
and how the receiver reads it out (\cref{sec:readout}). In
\cref{fig:chain} these steps are panels (1) and (5), panels (2) and (3),
and panel (4), respectively. The path structure itself is analyzed in
\cref{sec:structure}.

We first isolate the learning calculation from the optical one. Set
$\Sigma_z=\operatorname{Cov}(z_b)$, $C_z=\operatorname{Cov}(z_b,y)$,
$\bar\Omega_b=\E\,\Omega_b(\mathcal V_\theta(G))$, and
$R_0=\E\|y-\E y\|^2/q$. Under (C1)--(C6), the measured feature has
covariance $A=\Sigma_z+\bar\Omega_b/S$ and its ridge head is
$W_\lambda=(A+\lambda I)^{-1}C_z$. Diagonalize
$A=U\operatorname{diag}(a_i)U^\top$ and put
$\tilde c_i=(U^\top C_z)_{i,:}$. Expanding the prediction error gives
\begin{equation}\label{eq:risk}
\begin{aligned}
R_{\theta,b,S}(\lambda)
&=\frac1q\E\|y-\E y-W_\lambda^\top(\hat z-\E\hat z)\|^2\\
&=R_0-\frac1q\sum_i\|\tilde c_i\|^2
  \frac{a_i+2\lambda}{(a_i+\lambda)^2}.
\end{aligned}
\end{equation}
This is the risk of the population ridge readout on the empirical
distribution of training windows under finite-shot feature noise; it
contains no finite-sample fitting error and is not a held-out bound.
At this population level $\partial_\lambda R\ge0$, so the formula
cannot select $\lambda$; architecture comparisons use a common fixed
value. At $\lambda=0$ the shot noise itself regularizes, and the whole
dependence on the shot budget is explicit (\cref{prop:shots}).

\subsection{Task features: what the risk reads from the data}\label{sec:task-side}
The drive enters the circuit only through the encoding features
$\phi_g(G)$ and $\psi_s(G)$. \Cref{sec:encoding} shows that the output
moments, and with them the features and shot noise of every
moment-polynomial receiver, are finite sums of monomials in these
features. The learning calculation~\eqref{eq:risk} then touches the data
only through moments of such monomials.
Let the (empirical) characteristic function of the drive windows and
its label-weighted version be
\begin{equation}\label{eq:gamma}
\Gamma(\omega)=\E\,e^{i\omega\cdot G},\qquad
\Gamma_y(\omega)=\E\,e^{i\omega\cdot G}y^\top.
\end{equation}

\begin{theorem}[Exact task--risk bridge]\label{thm:bridge}
Under (C1)--(C6), let the receiver be moment-polynomial. By
\cref{thm:poly}, its mean feature and single-shot noise covariance are
finite sums $z_b(G)=\sum_{m\in\mathcal M}\hat z_m\,m(G)$ and
$\Omega_b(G)=\sum_{m\in\mathcal M'}\hat\Omega_m\,m(G)$ over monomials
in the encoding features. Then
$\E z_b=\sum_m\hat z_m\E m$,
$\E(z_bz_b^\top)=\sum_{m,m'}\hat z_m\hat z_{m'}^\top\E(mm')$,
$\E(z_by^\top)=\sum_m\hat z_m\E(my^\top)$ and
$\bar\Omega_b=\sum_m\hat\Omega_m\E m$, so the risk~\eqref{eq:risk}
depends on the task only through the joint empirical moments of the
finite dictionary $\mathcal M\cup\mathcal M\mathcal M\cup\mathcal M'$,
their label cross-moments, $\E y$ and $R_0$.
For gate encoding (\cref{cor:gate}) and an affine receiver the
monomials are $e^{i\omega\cdot G}$, $\hat z_\omega=H_b\hat V_\omega$ for
$\omega\ne0$, and the moments are characteristic-function values:
\begin{equation}\label{eq:spectral}
\begin{aligned}
\E z_b&=\sum_\omega\hat z_\omega\Gamma(\omega),\\
\E(z_bz_b^\top)&=\sum_{\omega,\omega'}
  \hat z_\omega\hat z_{\omega'}^\top\Gamma(\omega+\omega'),\\
\E(z_by^\top)&=\sum_\omega\hat z_\omega\Gamma_y(\omega),\\
\bar\Omega_b&=\sum_\nu\hat\Omega_\nu\Gamma(\nu),
\end{aligned}
\end{equation}
with $\nu\in\mathcal W_\theta+\mathcal W_\theta$. The risk then depends on the
task only through $\Gamma|_{\mathcal W_\theta+\mathcal W_\theta}$,
$\Gamma_y|_{\mathcal W_\theta}$, $\E y$ and $R_0$; the circuit supplies
$\{\hat V_\omega\}$, the raw receiver supplies its feature map and
shot-noise law, and the task moments also fix the training-segment
standardization included in $(H_b,\Omega_b)$.
\end{theorem}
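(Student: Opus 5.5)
The proof treats the risk formula~\eqref{eq:risk} as a fixed function of three objects, $A=\Sigma_z+\bar\Omega_b/S$, $C_z$ and $R_0$, and shows that each is a fixed linear combination of task moments over the stated dictionary. First, I would check that~\eqref{eq:risk} needs nothing beyond $(\Sigma_z,\bar\Omega_b,C_z,R_0)$. Under (C2)--(C3) the law of total covariance gives $\operatorname{Cov}(\hat z)=\Sigma_z+\E\operatorname{Cov}(\hat z\mid G)=\Sigma_z+\bar\Omega_b/S$. Also $\operatorname{Cov}(\hat z,y)=\operatorname{Cov}(z_b,y)=C_z$, because the noise is conditionally centred and uncorrelated with $y$. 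With (C4)--(C6) fixing the centring, $\lambda$ and the pseudo-inverse convention, expanding the square in the first line of~\eqref{eq:risk} then yields the second line. That step is routine algebra in the eigenbasis of $A$: the cross term contributes $-2\|\tilde c_i\|^2/(a_i+\lambda)$ and the quadratic term $+\|\tilde c_i\|^2a_i/(a_i+\lambda)^2$. So the task enters only through $\Sigma_z=\E(z_bz_b^\top)-\E z_b\,\E z_b^\top$, $C_z=\E(z_by^\top)-\E z_b\,\E y^\top$, $\bar\Omega_b$ and $R_0$.

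Second, I would invoke \cref{thm:poly} to write $z_b$ and $\Omega_b$ as finite monomial sums, with coefficients $\hat z_m,\hat\Omega_m$ that depend only on the circuit, the receiver and the fixed standardization (C5). Linearity of the empirical expectation (C1) then gives the four stated identities term by term. The product $z_bz_b^\top$ expands over pairs $(m,m')$, so its moments are those of $\mathcal M\mathcal M$. Substituting these identities into the expressions for $\Sigma_z$ and $C_z$ shows that the risk is a fixed function of the moments of $\mathcal M\cup\mathcal M\mathcal M\cup\mathcal M'$, their label cross-moments, $\E y$ and $R_0$.

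Third, I would specialize to gate encoding, where \cref{cor:gate} gives $V_\theta(G)=\sum_{\omega\in\mathcal W_\theta}\hat V_\omega e^{i\omega\cdot G}$. For an affine receiver, $z_b=H_bV_\theta+h_0$, so $\hat z_\omega=H_b\hat V_\omega$ for $\omega\neq0$, with the constant absorbed into $\omega=0$. Since $e^{i\omega\cdot G}e^{i\omega'\cdot G}=e^{i(\omega+\omega')\cdot G}$, the monomial moments become $\Gamma(\omega)$, $\Gamma(\omega+\omega')$ and $\Gamma_y(\omega)$. The noise covariance is quadratic in $V$ by Wick's theorem, so its frequencies lie in $\mathcal W_\theta+\mathcal W_\theta$, which gives~\eqref{eq:spectral}.

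The step I expect to need the most care is the role of standardization and complex coefficients. Realness of $z_b$ forces $\hat z_{-\omega}=\overline{\hat z_\omega}$. The second moment is then better written with $\hat z_{\omega'}^\top$ exactly as stated, pairing $\omega$ with $\omega'$ so that all needed arguments stay in $\mathcal W_\theta+\mathcal W_\theta$. I would verify that $\mathcal W_\theta$ is symmetric under negation so this sum covers $\omega-\omega'$ as well. The standardization rescaling is itself computed from the empirical variances, which are again $\Gamma$-values on the same set. Because it is a diagonal rescaling, it introduces no new frequencies, consistent with the remark in the setting, and I would state this explicitly to close the argument.
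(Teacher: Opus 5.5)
Your proposal is correct and follows the paper's own argument. You derive \eqref{eq:risk} from $A=\Sigma_z+\bar\Omega_b/S$ and $C_z$, average the finite monomial sums term by term and substitute them into $\Sigma_z$ and $C_z$, and for gate encoding use Wick's identity to place the noise frequencies in $\mathcal W_\theta+\mathcal W_\theta$, with $\hat z_{-\omega}=\overline{\hat z_\omega}$ justifying the ordinary transpose; your treatment of the diagonal standardization (fixed by $\operatorname{diag}\Sigma_z$, creating no new frequencies) and of the constant absorbed at $\omega=0\in\mathcal W_\theta$ also matches the paper.
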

\begin{proof}
Multiplying the finite sums and averaging over training windows gives
the moments; in particular
$\Sigma_z=\sum_{m,m'}\hat z_m\hat z_{m'}^\top[\E(mm')-\E m\,\E m']$ and
$C_z=\sum_m\hat z_m[\E(my^\top)-\E m\,\E y^\top]$. For zero-mean
Gaussian states, Wick's identity makes the homodyne and heterodyne $\Omega_b$
quadratic in $V$, and $\langle n_in_j\rangle$ and the variances of the
counts are polynomials in $V$ by Isserlis' theorem; in general every
entry is a polynomial in normally ordered moments, to which
\cref{thm:poly} applies.
Substitution in \eqref{eq:risk} completes the proof; the fixed feature
standardization is determined by $\operatorname{diag}\Sigma_z$. The
gate-encoding case is detailed in \cref{app:proof-trig}.
\end{proof}

For gate encoding the finite dictionary may be exponentially large, so the exact
identity is a structural description rather than a faster evaluator;
for squeezing encoding with an affine receiver it has two monomials per
source.

Which statistics of the data these monomial moments are depends only
on how the encoding features depend on the drive. Let $\mathcal W^{(k)}$ collect the frequency vectors with
components $\omega_\alpha=n_1\kappa+n_2a$, $|n_1|\le kB_\alpha$,
$|n_2|\le kE_\alpha$, so that $\mathcal W_\theta=\mathcal W^{(2)}$. For squeezing
encoding with $r=r^0+\Delta G$, let
$\Lambda=\{0\}\cup\{\pm2\Delta e_\alpha\}$. Define the empirical
two-sided Laplace transform of the windows and its label-weighted
version,
\begin{equation}\label{eq:mgf}
\mathcal L(\zeta)=\E\,e^{\zeta\cdot G},\qquad
\mathcal L_y(\zeta)=\E\,e^{\zeta\cdot G}y^\top,\qquad \zeta\in\C^{L\times P}.
\end{equation}
The characteristic function is $\Gamma(\omega)=\mathcal L(i\omega)$; on
real arguments $\mathcal L$ is the moment generating function.

\begin{corollary}[What the task supplies]\label{cor:task}
Under the conditions of \cref{thm:bridge}, the risk depends on the task
only through $\E y$, $R_0$ and the statistics of the training windows
listed in \cref{tab:task}. Only drive components, and pairs or tuples of
them, admitted by the path rule enter.
\end{corollary}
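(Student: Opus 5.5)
The plan is to start from \cref{thm:bridge}. It already reduces the risk~\eqref{eq:risk} to $\E y$, $R_0$, and the joint empirical moments $\E m$, $\E(mm')$, $\E(my^\top)$ of a finite dictionary of monomials $m(G)$ in the encoding features $\phi_g,\psi_s$. What remains is to identify these monomial moments, encoding by encoding, with the data statistics listed in \cref{tab:task}, and then to check which drive components can occur.

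\emph{Gate encoding.} Each $\phi_g$ is a vector of exponentials $e^{\pm i\kappa G_\alpha}$ or $e^{iaG_\alpha}$. A monomial is therefore $e^{i\omega\cdot G}$, and by \cref{thm:poly} its frequency $\omega$ lies in $\mathcal W^{(k)}$, with $k$ the moment order the receiver uses ($k=2$ for affine receivers). Products of two monomials add frequencies. Hence $\E m=\Gamma(\omega)$, $\E(mm')=\Gamma(\omega+\omega')$ and $\E(my^\top)=\Gamma_y(\omega)$, which is exactly~\eqref{eq:spectral}.

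\emph{Squeezing encoding with $r=r^0+\Delta G_\alpha$.} We expand $\sinh^2 r=(\cosh 2r-1)/2$ and $\sinh r\cosh r=\tfrac12\sinh 2r$ into exponentials $e^{\pm2r}$. Every monomial in $\psi_s$ then becomes a finite linear combination of $e^{\zeta\cdot G}$ with $\zeta$ in sums of copies of $\Lambda$. Since $d_s(k)=\lfloor k/2\rfloor$ (\cref{prop:gauss}), the number of summands from $\Lambda$ is bounded. The moments are thus values of $\mathcal L$ and $\mathcal L_y$ at those $\zeta$.

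\emph{Displacement encoding.} Here $\psi_s=\beta(G_\alpha)$ has $d_s(k)=k$. When $\beta$ is affine in $G$, the monomials are polynomials of degree at most $k$ in the drive components, so the moments reduce to mixed moments of $G$ of order at most $2k$, together with label cross-moments of order at most $k$. For affine receivers this means $\E G$, $\operatorname{Cov}G$ and low-order cross-moments with $y$. For a general feature map, the statistics are simply $\E$ of products of the features, and I would state them that way in the table.

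\emph{Path rule.} The second claim comes from the support part of \cref{thm:poly}. A monomial carries a nonzero coefficient only if its sites form a tuple admitted by the optical paths. Only the coordinates $\alpha$ feeding such sites appear in $\omega$, in $\zeta$, or in the polynomial. So the arguments of $\Gamma$ and $\mathcal L$ are supported on admitted components. Pair products $mm'$ give sums of two admitted tuples, and these are the pairs or tuples referred to in the corollary.

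The standardization included in $(H_b,\Omega_b)$ depends only on $\operatorname{diag}\Sigma_z$, which is built from the same moments, so it adds no new statistics. The main obstacle is the bookkeeping of the exponent sets: the frequency and Laplace arguments produced by products must be matched precisely to the entries of \cref{tab:task}, including the $\mathcal M\mathcal M$ terms and the noise monomials $\mathcal M'$ of degree up to twice that of the features. I would handle this uniformly by writing each site feature as a finite sum of $e^{\zeta_\alpha G_\alpha}$ (or of powers of $G_\alpha$) and tracking Minkowski sums of the admitted exponent sets.
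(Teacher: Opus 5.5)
Your route is the paper's: start from \cref{thm:bridge}, rewrite the dictionary monomials encoding by encoding (as $e^{i\omega\cdot G}$, as combinations of $e^{\zeta\cdot G}$, or as polynomials in $G$), read off $\Gamma,\Gamma_y$, $\mathcal L,\mathcal L_y$ or low-order moments, and get the support claim from \cref{thm:poly}(c). The paper's appendix proof leaves that last step implicit. However, the bookkeeping you defer is the whole content of \cref{tab:task}, and the ingredient you name for the squeezing rows would not produce it. There the depth $j$ in $\Lambda^{+j}$ is the \emph{total} degree of a monomial in all the $(u_t,v_t)$ together. The per-source bound $d_s(k)=\lfloor k/2\rfloor$ does not limit how many sources multiply. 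Used alone, it lets an entry of $V$ carry degree $1$ in each of two sources. That would require $\mathcal L_y$ on $\Lambda^{+2}$ and $\mathcal L$ on $\Lambda^{+4}$ for an affine receiver, instead of $\Lambda$ and $\Lambda^{+2}$. The correct counts come from \cref{prop:gauss}: $V$ is affine in each source with no cross-source products (total degree $1$); its Wick noise and the photon-number second moments have degree $2$; and the noise of the latter, built from order-$8$ moments, has degree $4$. Equivalently, bound the factor of source $s$ by $d_s(k_s)$, with $k_s$ the number of legs starting at $s$, so that $\sum_s d_s(k_s)\le\lfloor k/2\rfloor$.

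One row is missing and one is misstated. The bounded-photon row is not treated. It follows from \cref{cor:fock}: each gate contributes at most $2n_g\le 2N$ units, which puts detection probabilities in $\mathcal W^{(2N)}$ and their multinomial noise $p_i\delta_{ij}-p_ip_j$ in $\mathcal W^{(2N)}+\mathcal W^{(2N)}$. For displacement, index by the order of the moment the receiver reads: $k=1$ for quadrature means, $k=2$ for second moments and $\langle n_i\rangle$. Quadrature means are affine in $G$ with drive-independent noise, which gives exactly $\E G$, $\operatorname{Cov}G$ and $\operatorname{Cov}(G,y)$. Second-moment receivers, though affine in $V$ for zero-mean states, are quadratic in $G$ here through the mean products $\bar\mu_i\mu_j$. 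So your sentence ``for affine receivers this means $\E G$, $\operatorname{Cov}G$'' is wrong in the paper's sense of ``affine''. Your own count with $k=2$ does correctly yield moments of $G$ to order $4$ together with $\E(G_\alpha G_\beta y)$ and $\E(G_\alpha y)$. The squeezing row for a general $r(G)$ follows directly from \cref{cor:squeeze}.
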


\begin{table}[t]
\centering\footnotesize
\caption{Task statistics that enter the exact risk (\cref{cor:task}). Feature and noise orders $k,k'$ count the normally ordered moments involved: $(2,4)$ for receivers affine in $V$, $(4,8)$ for photon-number second moments, $(4wd,8wd)$ for the order-$d$ click approximant of \cref{prop:click}. $\Lambda^{+j}$ is the $j$-fold sum of $\Lambda$.}
\label{tab:task}
\setlength{\tabcolsep}{2.5pt}
\begin{tabular}{@{}>{\raggedright}p{2.3cm}>{\raggedright}p{2.0cm}>{\raggedright\arraybackslash}p{3.8cm}@{}}
\toprule
Encoding & Receiver & Task statistics\\ \midrule
gate & orders $(k,k')$ & $\Gamma$ on $(\mathcal W^{(k)}+\mathcal W^{(k)})\cup\mathcal W^{(k')}$; $\Gamma_y$ on $\mathcal W^{(k)}$\\
gate, $\le N$ photons & detection probabilities & $\Gamma$ on $\mathcal W^{(2N)}+\mathcal W^{(2N)}$; $\Gamma_y$ on $\mathcal W^{(2N)}$\\
squeezing, $r=r^0+\Delta G$ & affine in $V$ & $\mathcal L$ on $\Lambda^{+2}$; $\mathcal L_y$ on $\Lambda$\\
squeezing, $r=r^0+\Delta G$ & quadratic in $V$ & $\mathcal L$ on $\Lambda^{+4}$; $\mathcal L_y$ on $\Lambda^{+2}$\\
squeezing, any $r(G)$ & affine in $V$ & first and second moments of $(u_t,v_t)$; their covariances with $y$\\
displacement, $\beta=\beta^0+bG$ & quadrature means & $\E G$, $\operatorname{Cov}G$, $\operatorname{Cov}(G,y)$\\
displacement, $\beta=\beta^0+bG$ & second moments, $\langle n_i\rangle$ & moments of $G$ to order $4$; $\E(G_\alpha G_\beta y)$, $\E(G_\alpha y)$\\ \bottomrule
\end{tabular}
\end{table}

The table orders the encodings by what they can read. Displacement
encoding with quadrature means reads only second-order statistics, so its
risk is bounded below by linear regression on the reachable drive
components. Squeezing encoding with an affine receiver reads the moment
generating function of single encoded values and of pairs of them at a
few real arguments, but its features contain no product across time. Gate encoding
reads the characteristic function on a lattice whose vectors involve
many lags at once. Gate and squeezing encodings therefore sample one
object, the two-sided Laplace transform $\mathcal L$ of the window
distribution, on the imaginary and on the real axis respectively. All entries are
empirical transforms of the training windows, available before any
circuit is simulated.

\subsection{Encoding: output moments are polynomials in encoding features}\label{sec:encoding}
For a passive linear network, $a_{\rm out}=Ta_{\rm in}+Eb$ with
vacuum environment modes $b$, which do not contribute to normally
ordered moments. Each output operator in an order-$k$ normally
ordered moment therefore contributes one entry of $T$ or $\bar T$,
and the moment is a sum over $k$-tuples of input modes and optical
paths (\cref{sec:structure}) of the path amplitudes times the
corresponding order-$k$ input moment. Call such a $k$-tuple
\emph{source-compatible} when that joint input moment is nonzero.

\begin{theorem}[Encoding-feature polynomial representation]\label{thm:poly}
For every admitted network of \cref{sec:setting}, each normally
ordered output moment of order $k$ is a polynomial in the encoding
features $\{\phi_g\}$, $\{\psi_s\}$ and their complex conjugates, and:
\begin{enumerate}
\item[(a)] its degree in the features of a gate site $g$ is at most the
number of legs of a source-compatible $k$-tuple whose paths can pass
$g$, hence at most $k$;
\item[(b)] its degree in the features of a source site $s$ is at most
$d_s(k)$;
\item[(c)] a monomial containing features of the gate sites in $S$ and
of the source sites in $\Sigma$ can occur only if some
source-compatible $k$-tuple of paths passes every gate in $S$ and has a
leg starting at a mode of every source in $\Sigma$.
\end{enumerate}
\end{theorem}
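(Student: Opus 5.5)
The plan is to expand the output moment into a sum over optical paths and read off the degree and support bounds term by term. Write $a_{\rm out}=Ta_{\rm in}+Eb$ and fix an order-$k$ normally ordered output moment, say $\langle a^\dagger_{j_1}\cdots a^\dagger_{j_p}a_{j_{p+1}}\cdots a_{j_k}\rangle$. Since the environment modes $b$ are in vacuum, every term containing an annihilator $b$ on the right, or a creator $b^\dagger$ on the left, vanishes. Normal ordering of the output operators therefore carries over to the input expansion, and only the $T$ part survives. This gives
\[
\Big\langle \textstyle\prod_{r\le p} a^\dagger_{j_r}\prod_{r>p}a_{j_r}\Big\rangle
=\sum_{i_1,\dots,i_k}\prod_{r\le p}\bar T_{j_ri_r}\prod_{r>p}T_{j_ri_r}\,
\Big\langle \textstyle\prod_{r\le p} a^\dagger_{i_r}\prod_{r>p}a_{i_r}\Big\rangle_{\rm in}.
\]
Because the circuit is unrolled as a product $T=\prod_\ell T^{(\ell)}$ of gate factors, padded with identities on untouched modes, each entry $T_{ji}$ is a sum over paths $\pi$ from input $i$ to output $j$. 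The summand is the product of the gate matrix entries met along the path. Substituting this, the moment becomes a sum over $k$-tuples of paths, called legs, of (product of leg amplitudes) $\times$ (joint input moment of the starting modes). Terms whose input moment vanishes drop out, so only source-compatible tuples remain.

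Next I will bound the degrees. An admitted gate $g$ has a matrix affine in $\phi_g,\bar\phi_g$. A single path passes each gate at most once per unrolled factor, and in the unrolled circuit each gate site is a single factor. So each leg's amplitude is affine in $(\phi_g,\bar\phi_g)$ if the leg passes $g$, and independent of $\phi_g$ otherwise. Legs carrying $\bar T$ contribute conjugate entries, which is why conjugate features appear. Multiplying over the legs gives degree at most the number of legs passing $g$, which is at most $k$; this proves (a). The input moment factorizes over sources, since the input is a product state. Each factor is a normally ordered moment of order at most $k$ of one source. By admissibility it is a polynomial of degree at most $d_s(k)$ in $\psi_s,\bar\psi_s$, and drive-independent sources are constant. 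This proves (b), and shows that the whole expression is a finite polynomial.

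For (c), I take any monomial in the final sum. It arises from at least one surviving term, indexed by a source-compatible $k$-tuple. In that term, features of $g$ appear only if some leg passes $g$. Features of $s$ appear only through that source's factor of the input moment, which is the constant $1$ unless some leg starts at a mode of $s$. Hence a monomial involving all of $S$ and $\Sigma$ must come from a single tuple with the stated properties. Cancellations between terms can only remove monomials, never create them, so the support statement holds.

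The main obstacle is the bookkeeping claim that a single leg's amplitude is affine, rather than of higher degree, in each gate's features. This requires every gate site to be a distinct factor in the unrolled $T$, and a gate reused across time steps with the same drive component to be treated as separate sites. I would check this against the template: the same $G_\alpha$ feeding $B_\alpha$ gates yields degree up to $kB_\alpha$ in $e^{i\kappa G_\alpha}$, which is consistent with $\mathcal W^{(k)}$. I would also check that the environment modes' vacuum kills all cross terms. This holds because normally ordered products with a vacuum annihilator on the right, or a vacuum creator on the left, vanish, and normal ordering is preserved under the linear substitution.
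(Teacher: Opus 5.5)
Your proposal is correct and follows essentially the same route as the paper's proof. Both expand $T$ as an ordered product with each gate instance occurring once, giving path amplitudes that are multilinear in the gate features, and both drop the vacuum environment terms from the normally ordered moment. The remaining steps also match: the gate degree is bounded by counting legs through $g$, the source degree by factorizing the input moment over sources, and the support claim in (c) is read off term by term, since cancellation can only remove monomials.
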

\begin{proof}[Proof sketch]
The transfer matrix is an ordered product in which each gate instance
occurs once, so every path amplitude is multilinear in the gate
features on that path. An order-$k$ moment multiplies $k$ path
amplitudes (or their conjugates) by one input moment of order $k$,
which factorizes over sources for a product input state; a source
reached by no leg contributes the factor $1$. Summing terms can cancel
monomials but never create one absent from every term. Details are in
\cref{app:proof-poly}.
\end{proof}

The absence of cross-source products is a second-order property.
\begin{proposition}[Gaussian sources: covariance versus higher moments]\label{prop:gauss}
Let every source be zero-mean Gaussian, with second moments affine in
its features and their conjugates, as in squeezing and squeezing-phase
encoding.
\begin{enumerate}
\item[(i)] The output moments $N$ and $M$, and hence $V$, are affine in
each $\psi_s$ and contain no product of features of two distinct
sources.
\item[(ii)] An output moment of order $2j$ is a sum of products of $j$
output second moments; its total degree in the source features is at
most $j$, and for $j\ge2$ it can contain products of features of
distinct sources. Odd orders vanish.
\end{enumerate}
\end{proposition}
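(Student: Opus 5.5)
The plan is to work entirely at the level of normally ordered moments and use the linear propagation rule \eqref{eq:NM} together with the product structure of the input state. For part (i), I first write the input second moments as block-diagonal over sources. Because the input is a product state of zero-mean sources, $\langle a_j^\dagger a_k\rangle_{\rm in}$ and $\langle a_j a_k\rangle_{\rm in}$ vanish unless $j$ and $k$ belong to the same source: for distinct sources the cross moment factorizes into a product of first moments, which are zero. Hence $N_{\rm in}=\bigoplus_s N_s$ and $M_{\rm in}=\bigoplus_s M_s$, where by hypothesis each block is affine in $\psi_s,\bar\psi_s$ and independent of the other sources' features. Environment modes are vacuum and contribute nothing to normally ordered moments.

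Substituting into $N_{\rm out}=\bar T N_{\rm in}T^\top$ and $M_{\rm out}=TM_{\rm in}T^\top$ then gives sums $\sum_s \bar T_{:,s}N_sT_{:,s}^\top$ and the analogous sum for $M$. Each summand is affine in one $\psi_s$, and the sum contains no product of features of two distinct sources. Here $T$ may depend on gate features, but not on source features, since sources are not gates. $V$ is affine in $(N,M,\bar M)$, so (i) follows. This is essentially \cref{thm:poly}(b),(c) with $k=2$: a source-compatible pair must start both legs at the same source, so only one source's features can enter.

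For part (ii), I would invoke that the output state is zero-mean Gaussian, since passive linear maps with vacuum environments preserve Gaussianity and the zero mean. Isserlis/Wick's theorem for normally ordered moments of a Gaussian state then expresses any order-$2j$ normally ordered moment as a sum over perfect pairings of products of $j$ pairwise moments, each an entry of $N_{\rm out}$, $M_{\rm out}$ or $\bar M_{\rm out}$. Odd orders vanish because the state has zero mean and is invariant under $a\mapsto -a$.

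By (i), each factor has total degree at most $1$ in source features, so each product has total degree at most $j$. To show that products of distinct sources can actually occur for $j\ge2$, I would give an explicit example. Take two independent single-mode squeezed sources $s_1,s_2$ mixed on a balanced beam splitter, and consider $\langle n_1n_2\rangle$ or $\langle a_1^{\dagger2}a_1^2\rangle$. Its Wick expansion contains a term like $|M_{12}|^2$, and $M_{12}$ receives contributions from both sources. This yields a cross term proportional to $\sinh r_1\cosh r_1\,\sinh r_2\cosh r_2$, which is nonzero for generic parameters.

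The main obstacle is the care needed in applying Wick's theorem to normally ordered moments rather than symmetrized ones. I would state it for normally ordered products, in which case no commutator terms appear and the pairings use exactly $N$, $M$ and $\bar M$. It also requires checking in the example that the cross term is not cancelled by other pairings.
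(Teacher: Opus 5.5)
Your proposal is correct and follows the paper's proof essentially step for step. For (i), both arguments use block-diagonal input moments propagated source by source through \eqref{eq:NM}; for (ii), both use Wick's theorem for normally ordered products, with every pairing an entry of $N$, $M$ or $\bar M$ that is affine in one source's features. The only difference is the witness for cross-source products. The paper takes two independent unmixed modes, where $\langle n_1n_2\rangle=N_{11}N_{22}=u_1u_2$ and no cancellation check is needed. In your balanced-beam-splitter example, the $u_1u_2$ terms of $\langle n_1n_2\rangle$ in fact cancel between $N_{11}N_{22}$ and $|N_{12}|^2$, so the cross-source product comes only from $|M_{12}|^2$, which is proportional to $v_1v_2$ times a cosine of the relative squeezing phase.
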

For example, with $u=\sinh^2r$, a single squeezed vacuum has
$\langle a^{\dagger2}a^2\rangle=3u^2+u$, and two independent squeezed
vacua on unmixed modes have $\langle n_1n_2\rangle=u_1u_2$. Receivers
affine in $V$ therefore see no cross-source products, while
photon-number correlations and click probabilities do.

The theorem specializes as follows.

\begin{corollary}[Gate encoding: a finite trigonometric polynomial]\label{cor:gate}
With coupling and EOM encoding~\eqref{eq:affine} and drive-independent
zero-mean Gaussian sources, at every drive value
\begin{equation}\label{eq:trig}
\begin{aligned}
V_\theta(G)&=\sum_{\omega\in\mathcal W_\theta}
  \hat V_\omega e^{i\omega\cdot G},\\
\mathcal W_\theta&\subseteq\prod_\alpha\mathcal W_\alpha,
  \qquad\mathcal W_\alpha=\{n_1\kappa+n_2a\},
\end{aligned}
\end{equation}
where $|n_1|\le2B_\alpha$, $|n_2|\le2E_\alpha$,
$\omega\cdot G=\sum_\alpha\omega_\alpha G_\alpha$, and
$\hat V_{-\omega}=\overline{\hat V_\omega}$.
The candidate set $\mathcal W_\theta$ includes the zero frequency;
coefficients that vanish are allowed, and coincident numerical
frequencies are collected.
\end{corollary}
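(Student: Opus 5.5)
The plan is to specialize \cref{thm:poly} to $k=2$ and to make the polynomial explicit as a trigonometric sum. The covariance $V$ is affine in $(N,M,\bar M)$, so it suffices to treat $N_{\rm out}=\bar T N_{\rm in}T^\top$ and $M_{\rm out}=TM_{\rm in}T^\top$. All sources are drive independent, so $N_{\rm in}$ and $M_{\rm in}$ are constant matrices; the environment modes are vacuum and drop out of normally ordered moments. Each entry of $N_{\rm out}$ or $M_{\rm out}$ is therefore a finite sum of products of two entries of $T$ (or of $\bar T$ and $T$) with constant coefficients.

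The first step is to expand $T$ as the ordered product of gate matrices over the unrolled circuit. A coupling beam splitter with $\theta_{t,l}=\theta_l^0+\kappa G_\alpha$ has entries $\cos\theta$ and $\pm\sin\theta$. Each of these is a linear combination of $e^{i\theta}$ and $e^{-i\theta}$, and hence of $e^{\pm i\kappa G_\alpha}$ with constant coefficients absorbing $e^{\pm i\theta_l^0}$. An EOM gate contributes the factor $e^{i aG_\alpha}$ on its loop mode and $1$ elsewhere. All other gates are constant. Each gate instance occurs once on any path (the multilinearity in \cref{thm:poly}), so a single path amplitude $T_{jk}$ is a sum of monomials with at most one factor $e^{\pm i\kappa G_\alpha}$ per coupling gate fed by $\alpha$ and at most one factor $e^{iaG_\alpha}$ per EOM gate fed by $\alpha$. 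Collecting exponents, the frequency of $T_{jk}$ in component $\alpha$ has the form $n_1\kappa+n_2a$ with $|n_1|\le B_\alpha$ and $0\le n_2\le E_\alpha$. For $\bar T$ the signs flip, giving $-E_\alpha\le n_2\le0$.

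A second-order moment multiplies two such amplitudes, which is part (a) of \cref{thm:poly} with $k=2$. The frequencies add, so the product has components with $|n_1|\le2B_\alpha$ and $|n_2|\le2E_\alpha$. This yields \eqref{eq:trig} with $\mathcal W_\theta\subseteq\prod_\alpha\mathcal W_\alpha$. Terms with equal numerical frequency, for instance when $\kappa$ and $a$ are commensurate, are merged into one coefficient. The zero frequency is listed, and any coefficient is allowed to vanish, so the inclusion is only a candidate set.

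The conjugate symmetry $\hat V_{-\omega}=\overline{\hat V_\omega}$ follows because $V$ is real for every real $G$. Equating $V=\bar V$ and using uniqueness of coefficients of a finite exponential sum with distinct real frequencies (linear independence of $e^{i\omega\cdot G}$ as functions on an open set of $G$) gives the relation. The main care point is this uniqueness step, together with checking that the affine map from $(N,M,\bar M)$ to $V$ preserves the frequency set. The map is affine with constant coefficients, and $\bar M$ has frequencies $-\omega$, which already lie in the symmetric set $\mathcal W_\alpha$, so this is routine. The $G$-independent vacuum term of $V$ enters $\hat V_0$.
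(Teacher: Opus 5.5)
Your proposal is correct and follows essentially the paper's own proof in the appendix. Both arguments expand each gate into harmonics ($\pm\kappa e_\alpha$ for coupling gates, $\{0,ae_\alpha\}$ for EOMs) and distribute the ordered product for $T$. Both substitute into $N_{\rm out}=\bar TN_{\rm in}T^\top$ and $M_{\rm out}=TM_{\rm in}T^\top$, so that each gate contributes at most two units per component across the two legs, pass to $V$ by affinity, and invoke reality of $V$. Your path-by-path bookkeeping and the explicit linear-independence justification of $\hat V_{-\omega}=\overline{\hat V_\omega}$ are minor refinements of the same argument, not a different route.
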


\begin{corollary}[Squeezing encoding: a linear filter]\label{cor:squeeze}
Let all gates be drive independent and let the source injected at step
$t$ have squeezing $r_t=r(G_t)$ with a fixed squeezing phase. Then,
with $u_t=\sinh^2r_t$ and $v_t=\sinh r_t\cosh r_t$,
\begin{equation}\label{eq:squeeze}
V(G)=V_0+\sum_t\bigl(A_tu_t+B_tv_t\bigr)
\end{equation}
for fixed matrices $V_0,A_t,B_t$; no product of two time steps occurs
in the covariance. An affine receiver's features are a fixed linear filter of
$(u_t,v_t)$, and its risk depends on the task only through the empirical
means and lagged covariances of $(u_t,v_t)$ and their covariances with
$y$, at a cost polynomial in the number of sources. Cross-time products
arise only from a receiver that is nonlinear in $V$; photon-number second
moments, for example, are quadratic in $(u_t,v_t)$ (\cref{prop:gauss}).
\end{corollary}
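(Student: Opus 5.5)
The statement to prove is \cref{cor:squeeze}. The plan is to specialize \cref{prop:gauss}(i) to this setting and then feed the result into \cref{thm:bridge}.

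\emph{Covariance.} With all gates drive independent, the transfer matrix $T$ is a fixed matrix. By \eqref{eq:NM}, $N_{\rm out}=\bar T N_{\rm in}T^\top$ and $M_{\rm out}=TM_{\rm in}T^\top$ are then linear in the input moment matrices. For a product of independent zero-mean sources, $N_{\rm in}$ and $M_{\rm in}$ are block diagonal, one block per source. For the source injected at step $t$, a squeezed vacuum or TMSV pair with fixed phase has $\langle a^\dagger a\rangle=\sinh^2 r_t=u_t$ and $\langle aa\rangle=-e^{i\vartheta}\sinh r_t\cosh r_t=-e^{i\vartheta}v_t$. In the TMSV case the corresponding entries are $\langle a_1^\dagger a_1\rangle=\langle a_2^\dagger a_2\rangle=u_t$ and $\langle a_1a_2\rangle\propto v_t$.

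This lets us write $N_{\rm in}=\sum_t u_t P_t$ and $M_{\rm in}=\sum_t v_t Q_t$ for fixed matrices $P_t,Q_t$; drive-independent sources contribute a constant. Since $V$ is affine in $(N,M,\bar M)$ and $\vartheta$ is fixed, $V(G)=V_0+\sum_t(A_tu_t+B_tv_t)$ with fixed real matrices $A_t,B_t$. Every entry is a linear combination of single-step quantities, so no product across time steps can appear. This is \cref{prop:gauss}(i) made explicit, and the representation of the covariance follows directly.

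\emph{Risk dependence.} An affine receiver has $z_b=H_b\,\mathrm{vec}(V)+h_0$, which is a fixed linear filter $z_b=c_0+\sum_t(\alpha_tu_t+\beta_tv_t)$. \Cref{thm:bridge} with dictionary $\mathcal M=\{1,u_t,v_t\}$ then gives $\Sigma_z=\sum_{t,t'}[\ldots]\operatorname{Cov}(\cdot_t,\cdot_{t'})$, which involves only the lagged covariances of $(u_t,v_t)$. Likewise $C_z$ involves only $\operatorname{Cov}(u_t,y)$ and $\operatorname{Cov}(v_t,y)$. The standardization uses $\operatorname{diag}\Sigma_z$, so it depends on the same quantities.

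The remaining point is $\bar\Omega_b$. It is quadratic in $V$ by Wick's theorem, so it is a priori quadratic in $(u_t,v_t)$. However, $\bar\Omega_b=\E\,\Omega_b(V(G))$, and $\E$ of a quadratic in $V$ involves only the means of $V$ and $\E(V\otimes V)$. Both reduce to the means and the second moments $\E(u_tu_{t'})$, $\E(u_tv_{t'})$, $\E(v_tv_{t'})$, which are determined by the means and lagged covariances. So the listed statistics suffice.

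\emph{Cost.} For $L$ sources the dictionary has $O(L)$ monomials and the moment tables have $O(L^2)$ entries. Forming $\Sigma_z$, $C_z$ and $\bar\Omega_b$ therefore costs polynomially in $L$ times the fixed feature dimension, and \eqref{eq:risk} adds one eigendecomposition.

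\emph{Final claim.} A receiver nonlinear in $V$, such as photon-number second moments, is quadratic in $V$ by Isserlis' theorem. It therefore contains the products $u_tu_{t'}$ and $u_tv_{t'}$, consistent with \cref{prop:gauss}(ii); the two-mode example $\langle n_1n_2\rangle=u_1u_2$ already shows that such cross-time products actually occur.

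The step that needs the most care is the shot-noise term. One must check that averaging a quadratic noise law stays within second moments of $(u_t,v_t)$ and does not require higher statistics. The bookkeeping of the TMSV sign and phase convention in $M_{\rm in}$ is routine.
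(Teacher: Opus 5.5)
Your proposal is correct and follows essentially the paper's own proof: it inserts explicit per-source blocks $N_s=uI$, $M_s\propto v$ into the block-diagonal sum behind \cref{prop:gauss}(i), then writes $z_b=z_0+F\psi$ with $\psi=(u_1,v_1,\dots,u_L,v_L)$, which gives $\Sigma_z=F\operatorname{Cov}(\psi)F^\top$, $C_z=F\operatorname{Cov}(\psi,y)$ and a Wick-quadratic noise law needing only $\E\psi$ and $\E\psi\psi^\top$. One harmless slip is in your last paragraph: each Wick term of a photon-number moment contains as many $M$ as $\bar M$ factors, so with fixed phases $\langle n_in_j\rangle=N_{ii}N_{jj}+|N_{ij}|^2+|M_{ij}|^2$ ($i\ne j$) contains products $u_tu_{t'}$ and $v_tv_{t'}$ but no $u_tv_{t'}$, which leaves the claimed quadratic dependence on $(u_t,v_t)$ intact.
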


\begin{remark}[Displacement encoding]\label{cor:displace}
If all gates are drive independent and the Gaussian source $s$ carries
a data-dependent coherent amplitude $\beta_s(G)$, the output mean is
the linear filter $\mu=\sum_sT_{:,s}\beta_s$ and the covariance is drive
independent. Homodyne mean features are then linear in $\{\beta_s\}$,
and photon-number features add terms quadratic in
$\{\beta_s,\bar\beta_s\}$.
\end{remark}

\begin{corollary}[Bounded photon number]\label{cor:fock}
If the input state contains at most $N$ photons, for example a Fock
state, every detection probability is a polynomial of degree at most
$2n_g$ in the features of gate $g$, where $n_g\le N$ is the largest
number of photons whose paths can pass $g$. A coupling gate therefore
contributes frequencies $|\omega|\le2n_g\kappa$ and a phase gate
$|\omega|\le n_ga$, because each amplitude contains only nonnegative
powers of $e^{i\varphi}$; the bounds of gates fed by one drive
component add. For a single phase-encoded gate passed by all
$N$ photons this is the spectrum $\{-N,\ldots,N\}$ of~\cite{gan2022fock}.
\end{corollary}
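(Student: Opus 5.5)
The plan is to specialize \cref{thm:poly} to an input state with at most $N$ photons, using the amplitude expansion in the Schrödinger picture rather than normally ordered moments.

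First, fix the input state. Write it as $\rho=\sum_{n,n'\le N}\rho_{nn'}|n\rangle\langle n'|$ in the Fock basis of the input modes. Because every gate is passive, the unitary implementing $T$ (dilated by the vacuum environment modes) conserves total photon number, so each $n$-photon component stays in the $n$-photon sector.

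Next, express the amplitudes in terms of the transfer matrix. An output amplitude $\langle m|U|n\rangle$ for $|n|=|m|=n_{\rm tot}\le N$ is a permanent-type sum over assignments of the $n_{\rm tot}$ input photons to output modes. Each term is a product of $n_{\rm tot}$ entries of $T$, one per photon, and it contains no $\bar T$. Each entry of $T$ is a sum over optical paths of ordered products of gate matrix entries, and each gate occurs once per path. So, by the multilinearity argument in the proof of \cref{thm:poly}, an amplitude has degree at most $n_g$ in the features of gate $g$: only the photons whose paths can pass $g$ contribute a factor. A detection probability (POVM element $\Pi$ on the retained modes, with the environment traced out) is $\sum\rho_{nn'}\langle n'|U^\dagger(\Pi\otimes I)U|n\rangle$, a sum of products of one amplitude and one conjugate amplitude. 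Its degree in $\phi_g,\bar\phi_g$ is therefore at most $2n_g$, and the support statement follows exactly as in \cref{thm:poly}(c).

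Now read off the frequencies. For a coupling gate, $\phi_g=(e^{i\kappa G_\alpha},e^{-i\kappa G_\alpha})$, and each matrix entry involves $\cos$ and $\sin$ of $\theta^0+\kappa G_\alpha$, i.e.\ frequencies $\pm\kappa$. An amplitude then has frequencies in $\{-n_g\kappa,\dots,n_g\kappa\}$, and the probability in $[-2n_g\kappa,2n_g\kappa]$. For an EOM the matrix entry is $e^{i\varphi}$, carrying only the nonnegative frequency $a$. An amplitude is therefore a polynomial in $e^{iaG_\alpha}$ with powers $0,\dots,n_g$, and its conjugate has powers $-n_g,\dots,0$. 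Their product has frequencies in $[-n_ga,n_ga]$; this is the halving relative to the generic bound $2n_g$. When several gates are fed by the same component $\alpha$, the exponentials multiply, so the frequency bounds add. Taking the special case of a single EOM passed by all $N$ photons gives the spectrum $\{-N,\dots,N\}a$, which is the result of~\cite{gan2022fock}.

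I expect the main obstacle to be making the phase-gate halving rigorous in the presence of loss and environment modes. The trace over the environment produces sums of amplitude times conjugate amplitude, with a common input sector on both sides. The worry is that this structure might spoil the nonnegative-power property. It does not: the environment starts in vacuum, so photon-number conservation of the dilated unitary still applies, and every term remains of the form $A\bar A'$ with $A$ and $A'$ polynomials in $e^{iaG_\alpha}$ of degree at most $n_g$. Because the bound is stated per term, cancellations between terms cannot enlarge the spectrum.
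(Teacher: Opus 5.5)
Your proposal is correct and follows the paper's proof essentially step by step. Like the paper, you write the output amplitude as a sum over assignments of photons to output modes, each term a product of transfer-matrix entries with no $\bar T$, and expand those entries into paths to get degree at most $n_g$ per amplitude and $2n_g$ per probability; you then use the fact that a phase gate contributes only nonnegative powers of $e^{iaG_\alpha}$. One point you should state explicitly, though it does not change the argument: with loss, some per-photon factors are environment-output entries of the dilated unitary rather than entries of $T$. These are also path sums through the same gates, so your degree count goes through unchanged, and the paper glosses this in the same way with its ``$+(\text{environment})$''.
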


\begin{remark}[Drive noise damps harmonics]\label{prop:noise}
If the drive component $G_\alpha$ is perturbed in each shot by
$\epsilon\sim\mathcal N(0,\sigma^2)$, independent of the window and
shared by the gates that $G_\alpha$ feeds, every expectation value of
the shot-averaged state whose dependence on $G_\alpha$ is a finite or
absolutely convergent sum of harmonics, in particular every
moment-polynomial feature under gate encoding, keeps its expansion with
each coefficient of $e^{i\omega_\alpha G_\alpha}$ multiplied by
$e^{-\omega_\alpha^2\sigma^2/2}$. Independent noise on different
components multiplies the factors, and the feature noise covariance
gains the variance of the conditional mean over $\epsilon$. Fabrication
errors that are fixed in time only change the coefficients of these
expansions, not their form.
\end{remark}

The encoding therefore fixes the function class a linear readout can
use, and with it a classical baseline that no circuit of the class can
beat.

\begin{corollary}[Encoding decides the function class]\label{cor:class}
Let the sources be Gaussian and the receiver affine in $V$ for
zero-mean states, or return quadrature means, and let
$\mathcal R$ be the readable drive components.
\begin{enumerate}
\item[(i)] Displacement encoding, $\beta_s=\beta_s^0+b_sG_{\alpha(s)}$, with
quadrature means: $z_b$ is affine in $G$, and
$R\ge R_{\rm or}(G_{\mathcal R})$, the risk of linear regression on
$G_{\mathcal R}$.
\item[(ii)] Squeezing encoding with an affine receiver: $z_b$ is the
additive model
$z_0+\sum_s[a_s\,u(G_{\alpha(s)})+b_s\,v(G_{\alpha(s)})]$, and
$R\ge R_{\rm or}(\psi_{\mathcal R})$. A target orthogonal to every
additive function of single drive components is not captured at all,
for every topology, operating point and shot budget.
\item[(iii)] Gate encoding with an affine receiver: $z_b$ is a
trigonometric polynomial whose interactions join only sites connected
by source-compatible path pairs, all inside one component of
\cref{thm:components}.
\end{enumerate}
\end{corollary}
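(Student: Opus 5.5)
The plan is to treat the three parts separately. Each part first fixes the functional form of $z_b(G)$ using the representations already established. The lower bounds in (i) and (ii) then follow from one fact about the population ridge readout: its prediction is an affine function of the measured features.

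\emph{Common lower-bound step.} Suppose the mean features have the form $z_b(G)=z_0+K\xi(G)$ for a fixed matrix $K$ and some vector $\xi(G)$ of functions of the readable components. Under (C2)--(C4), the measured feature is $\hat z=z_b(G)+\epsilon$, with $\E[\epsilon\mid G]=0$ and $\epsilon$ uncorrelated with $y$. Write the residual as $y-\E y-W^\top(\hat z-\E\hat z)$. The noise term is orthogonal to everything else, so it only adds $\E\|W^\top\epsilon\|^2\ge0$. What remains is the residual of an affine predictor in $\xi$. Its mean-square error is therefore at least the best affine least-squares risk $R_{\rm or}(\xi)$. The penalty $\lambda\ge0$ can only raise the risk further, and at $\lambda=0$ the pseudo-inverse of (C6) gives the same conclusion. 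This argument holds for every $W$, so it also holds for $W_\lambda$ at every $S$ and $\lambda$.

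\emph{(i) Displacement encoding.} I would invoke \cref{cor:displace}. Because the gates are drive independent, the output mean is $\mu=\sum_sT_{:,s}(\beta_s^0+b_sG_{\alpha(s)})$, which is affine in $G$. Quadrature means are real linear combinations of $\mu$ and $\bar\mu$, and the fixed standardization keeps them affine. Their support is the set of drive components whose sources have a path to a retained mode, namely $\mathcal R$. Taking $\xi=G_{\mathcal R}$ in the common step gives $R\ge R_{\rm or}(G_{\mathcal R})$.

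\emph{(ii) Squeezing encoding.} \Cref{cor:squeeze} (equivalently \cref{prop:gauss}(i)) gives $V=V_0+\sum_s(A_su_s+B_sv_s)$. An affine receiver then gives $z_b=z_0+\sum_s[a_su(G_{\alpha(s)})+b_sv(G_{\alpha(s)})]$, with $a_s,b_s$ vanishing unless source $s$ reaches a retained mode. The common step with $\xi=\psi_{\mathcal R}$ gives the bound. Now suppose $y-\E y$ is orthogonal in $L^2$ of the empirical measure to every additive function $\sum_\alpha f_\alpha(G_\alpha)$. Then $C_z=0$, every $\tilde c_i$ in \eqref{eq:risk} vanishes, and $R=R_0$. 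This holds for every topology, operating point and $S$, since none of them enters the additive form.

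\emph{(iii) Gate encoding.} \Cref{cor:gate} makes $V_\theta$ a trigonometric polynomial, and $z_b=H_bV_\theta$ stays one. The interaction claim comes from \cref{thm:poly}(c) with $k=2$. A monomial containing features of the gates in a set $S$ appears only if some source-compatible path pair passes every gate in $S$. Those two legs start at modes of one source, or of correlated modes of one source, such as the two arms of a TMSV. Two sites joined by such a pair therefore lie in the same component of the path graph of \cref{thm:components}. I expect the main obstacle to be the bookkeeping here. One must check that the definition of a component in \cref{thm:components} counts exactly these joint source-compatible pairs as connections. If it does, placement in a single component is immediate. If it uses a coarser graph, the inclusion still holds, because a finer connection implies the coarser one.
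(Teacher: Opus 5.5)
Your proposal is correct and follows the paper's proof essentially step for step. In (i) and (ii) you read off the affine and additive forms from \cref{cor:displace} and \cref{cor:squeeze} and apply the same noise-orthogonality and completion-of-the-square argument that the paper cites from \cref{app:proof-topology-floor}, with $C_z=0$ giving $R=R_0$ for the orthogonal target. In (iii) you use the path rule, citing \cref{thm:poly}(c) with $k=2$ where the paper cites the equivalent \cref{thm:zeros}, together with \cref{thm:components}.

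The bookkeeping you worry about in (iii) is settled by the paper's definition of components. That definition joins a source to every retained mode it reaches. Both endpoints of a source-compatible pair therefore attach to the same source node, and all sites on the pair lie on paths into that one component.
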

Products across time steps therefore enter a linear readout only through
the gates in (iii), or at detection, through receivers nonlinear in $V$
or product observables spanning several components
(\cref{cor:span}).

\subsection{Readout: receivers, shot noise and threshold clicks}\label{sec:readout}
The receiver turns the output moments into the features
$z_b=\Phi_b(\mathcal V_\theta)$ of the ridge readout
(\cref{sec:setting-receivers}). The state identities of
\cref{sec:encoding} hold for any receiver, and the risk closes exactly
through \cref{thm:bridge} for every moment-polynomial receiver, which
includes homodyne, heterodyne and photon-number moments.
For Gaussian states, threshold-click probabilities are not polynomial in $V$; they
admit finite dictionaries of arbitrary precision (\cref{sec:clicks}) and
a cheaper local expansion (\cref{sec:truncation}), and with at most $N$
input photons they are polynomial (\cref{cor:fock}).
Two further properties are decided by the receiver alone: how the risk
depends on the shot budget, and which interactions between independent
parts of the network a linear readout can combine (\cref{sec:components}).

\paragraph{Shot noise.} At $\lambda=0$ the dependence of the
risk~\eqref{eq:risk} on the shot budget has a closed form in the
receiver's single-shot signal-to-noise spectrum.
\begin{proposition}[Exact shot law]\label{prop:shots}
Under (C1)--(C6) with $\lambda=0$, drop drive-independent features and
let $\bar\Omega_b\succ0$. Let $\Sigma_zw_i=g_i\bar\Omega_bw_i$ with
$w_i^\top\bar\Omega_bw_j=\delta_{ij}$, and put
$\omega_i=\|w_i^\top C_z\|^2/(qg_i)$ for $g_i>0$. Then
\begin{equation}\label{eq:shotlaw}
R_{\theta,b,S}(0)=R_\infty+\sum_{i:\,g_i>0}\frac{\omega_i}{1+g_iS},
\qquad R_\infty=R_0-\sum_i\omega_i .
\end{equation}
\end{proposition}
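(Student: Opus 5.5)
We start from the closed form \eqref{eq:risk} at $\lambda=0$. With the pseudo-inverse convention (C6) it reads $R=R_0-\frac1q\tr\bigl(C_z^\top A^{+}C_z\bigr)$, where $A=\Sigma_z+\bar\Omega_b/S$. Since $\bar\Omega_b\succ0$ and $\Sigma_z\succeq0$, $A$ is positive definite for every finite $S$, so $A^{+}=A^{-1}$ and no range issues arise.

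The plan is to diagonalize $\Sigma_z$ and $\bar\Omega_b$ simultaneously by the generalized eigenproblem of the statement. Let $W=[w_1,\dots,w_d]$, normalized so that $W^\top\bar\Omega_bW=I$ and $W^\top\Sigma_zW=\operatorname{diag}(g_i)$, with $g_i\ge0$. Such a $W$ exists because $\bar\Omega_b\succ0$: it is the $\bar\Omega_b^{-1/2}$-whitening followed by an orthogonal eigendecomposition of $\bar\Omega_b^{-1/2}\Sigma_z\bar\Omega_b^{-1/2}$. Then $W^\top AW=\operatorname{diag}(g_i+1/S)$, and since $W$ is invertible,
\[
A^{-1}=W\operatorname{diag}\bigl((g_i+1/S)^{-1}\bigr)W^\top .
\]
This gives
\[
\tr\bigl(C_z^\top A^{-1}C_z\bigr)=\sum_i\frac{\|w_i^\top C_z\|^2}{g_i+1/S}=\sum_i\frac{S\,\|w_i^\top C_z\|^2}{1+g_iS}.
\]

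The key step is to show that modes with $g_i=0$ carry no label covariance, i.e.\ $w_i^\top C_z=0$ whenever $g_i=0$. This follows from Cauchy--Schwarz on the centered joint distribution: $w_i^\top C_z=\operatorname{Cov}(w_i^\top z_b,y)$ and $\operatorname{Var}(w_i^\top z_b)=w_i^\top\Sigma_zw_i=g_i=0$, so $w_i^\top z_b$ is a.s.\ constant under the empirical measure and its covariance with $y$ vanishes. This is the only point where one must argue beyond algebra, and it is also where dropping drive-independent features matters: it guarantees $\bar\Omega_b\succ0$ on the remaining coordinates. For the remaining modes, $g_i>0$, so we may write $\|w_i^\top C_z\|^2=q g_i\omega_i$ by the definition of $\omega_i$.

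Finally, substitute and use the partial fraction
\[
\frac{Sg_i}{1+g_iS}=1-\frac{1}{1+g_iS}.
\]
This gives
\[
R=R_0-\sum_{g_i>0}\omega_i\Bigl(1-\frac1{1+g_iS}\Bigr)=R_\infty+\sum_{g_i>0}\frac{\omega_i}{1+g_iS},
\]
which is \eqref{eq:shotlaw}.

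As a consistency check, $S\to\infty$ gives $R_\infty=R_0-\frac1q\tr(C_z^\top\Sigma_z^{+}C_z)$, the noiseless least-squares risk, because $\sum_{g_i>0}\|w_i^\top C_z\|^2/g_i$ equals that trace computed in the whitened basis. The main care point is the treatment of the null modes of $\Sigma_z$; everything else is a change of basis in \eqref{eq:risk}.
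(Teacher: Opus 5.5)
Your proof is correct and follows the paper's argument essentially step for step: the same simultaneous diagonalization $A^{-1}=W\operatorname{diag}\bigl((g_i+S^{-1})^{-1}\bigr)W^\top$, the same observation that $g_i=0$ forces $w_i^\top C_z=0$ because $w_i^\top z_b$ is then a.s.\ constant, and the same partial fraction $g_iS/(1+g_iS)=1-1/(1+g_iS)$. One aside is inaccurate: dropping drive-independent features does not by itself guarantee $\bar\Omega_b\succ0$, which is a separate hypothesis of the statement, but nothing in your argument depends on that remark.
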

Here $g_i$ is the single-shot signal-to-noise ratio of feature direction
$i$ and $\omega_i$ the share of target variance it can capture. Each
direction contributes half its share at $S=1/g_i$; the number of
task-relevant directions resolved at budget $S$ is
$\#\{i:g_iS\ge1\}$, and $S\le\sum_i\omega_i/(g_i\varepsilon)$ suffices to
come within $\varepsilon$ of $R_\infty$. A receiver thus enters the
finite-shot risk only through its spectrum $\{(g_i,\omega_i)\}$, which
compares receivers on a given task without sampling. Unlike a
task-independent conditioned rank~\cite{nerenberg2025photon}, the
weights $\omega_i$ select the directions the task uses.

\paragraph{Threshold clicks: finite dictionaries of arbitrary precision.}\label{sec:clicks}
For a zero-mean Gaussian state the probability that no mode of a set
$J$ clicks is~\cite{quesada2018gaussian,bulmer2022threshold}
\begin{equation}\label{eq:p0}
p_0(J\mid G)=\det Q_J(G)^{-1/2},\qquad
Q_J=\tfrac12\bigl(V_{\eta,J}(G)+I\bigr),
\end{equation}
where $V_{\eta,J}$ is the $2|J|\times2|J|$ block of the modes in $J$.
The probability that every mode of $A$ clicks is
$p_A=\sum_{J\subseteq A}(-1)^{|J|}p_0(J)$, and the single-shot noise
covariance of the click features is $p_{A\cup B}-p_Ap_B$. The determinant
is not a polynomial in $V$, but on a spectrally bounded set it is
uniformly approximable by polynomials.

\begin{proposition}[Threshold clicks: finite dictionaries of arbitrary precision]\label{prop:click}
Let the network be passive with vacuum environment modes and let every
source be Gaussian with covariance spectrum in
$[e^{-2r},e^{2r}]$. Then, for every drive value and mode set $J$, the
spectrum of $Q_J$ lies in $[a,b]$ with
$a=1-\tfrac12\eta_{\det}(1-e^{-2r})$ and
$b=1+\tfrac12\eta_{\det}(e^{2r}-1)$. Let $p_d$ be a polynomial of
degree $d$ with $\varepsilon_d=\max_{q\in[a,b]}|\sqrt q\,p_d(q)-1|$; for
the Chebyshev interpolant of $q^{-1/2}$ on $[a,b]$,
$\varepsilon_d=O(\varrho^{-d})$ for every
$1<\varrho<(\sqrt b+\sqrt a)/(\sqrt b-\sqrt a)$. Then
$P_{J,d}=\det p_d(Q_J)$ is a polynomial of degree $2|J|d$ in the
entries of $V$, and uniformly in $G$
\begin{equation}\label{eq:p0-approx}
|P_{J,d}-p_0(J)|\le\bigl[(1+\varepsilon_d)^{2|J|}-1\bigr]\,p_0(J).
\end{equation}
Replacing $p_0$ by $P_{J,d}$ in the click features and their noise
defines the \emph{order-$d$ click approximant}, a moment-polynomial
receiver. For events on at most $w$ modes its features and noise are
polynomials in $V$ of degrees $2wd$ and $4wd$; for gate encoding they
are trigonometric polynomials with frequencies in $\mathcal W^{(4wd)}$ and
$\mathcal W^{(8wd)}$, so its risk obeys \cref{thm:bridge} exactly. Its
feature for event $A$ differs from the click feature by at most
$e_A=\sum_{\emptyset\ne J\subseteq A}[(1+\varepsilon_d)^{2|J|}-1]$, and
its noise entry by at most $(e_{A\cup B}+e_A+e_B+e_Ae_B)/S$, uniformly
in $G$. With these errors the certificate of
\cref{app:proof-riskerror} bounds the difference between the click risk
and the approximant's risk whenever its spectral condition holds.
\end{proposition}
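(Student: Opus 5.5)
The proof has four parts: a spectral bound on $Q_J$, a determinant error bound from a scalar approximation, the degree count, and the propagation to click features and noise.

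\emph{Spectral bound.} The output covariance is obtained from the source covariances by a passive map. With losses written as beam splitters to vacuum environments, the full input covariance (sources plus environment) has spectrum in $[e^{-2r},e^{2r}]$, since vacuum has $V=I$. A passive unitary acts on it by an orthogonal symplectic congruence, which preserves the spectrum. Taking a principal block (the retained modes, or the modes in $J$) keeps the spectrum inside $[e^{-2r},e^{2r}]$ by Cauchy interlacing, and so does tracing out loss modes. Detection efficiency then gives $V_\eta=\eta_{\det}V+(1-\eta_{\det})I$, with spectrum in $[1-\eta_{\det}(1-e^{-2r}),\,1+\eta_{\det}(e^{2r}-1)]$. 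The map $Q=\tfrac12(V+I)$ halves the deviation from $1$, which gives exactly $[a,b]$. All of this holds uniformly in $G$, since the encoding only changes the passive unitary and the source parameters within the stated range.

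\emph{Determinant bound.} $Q_J$ is real symmetric with eigenvalues $q_1,\dots,q_{2|J|}\in[a,b]$. Functional calculus gives
\[
\det p_d(Q_J)=\prod_k p_d(q_k),
\qquad
p_0(J)=\prod_k q_k^{-1/2}.
\]
Hence $P_{J,d}/p_0(J)=\prod_k \sqrt{q_k}\,p_d(q_k)=\prod_k(1+\delta_k)$ with $|\delta_k|\le\varepsilon_d$. The elementary bound $\bigl|\prod_k(1+\delta_k)-1\bigr|\le(1+\varepsilon_d)^{2|J|}-1$, proved by expanding the product, yields \eqref{eq:p0-approx}.

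\emph{Chebyshev rate and degrees.} The function $q^{-1/2}$ is analytic off $(-\infty,0]$. After mapping $[a,b]$ to $[-1,1]$, the singularity at $0$ sits at $-(b+a)/(b-a)$. The Bernstein ellipse through that point has parameter $\rho_*=(\sqrt b+\sqrt a)/(\sqrt b-\sqrt a)$. The standard Chebyshev interpolation estimate gives $\|q^{-1/2}-p_d\|_\infty=O(\varrho^{-d})$ for every $\varrho<\rho_*$. Multiplying by $\sqrt q\le\sqrt b$ keeps the rate.

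For the degree: $\det p_d(Q_J)$ is a polynomial in the entries of $p_d(Q_J)$ of degree $2|J|$. Each entry is a polynomial of degree $d$ in the entries of $Q_J$, which are affine in $V$. So $P_{J,d}$ has degree $2|J|d\le 2wd$ in $V$. A noise entry $p_{A\cup B}-p_Ap_B$ involves events on at most $2w$ modes and products of two features, so its degree is at most $4wd$.

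\emph{Gate encoding and the risk.} $V$ is a trigonometric polynomial with frequencies in $\mathcal W^{(2)}$ (\cref{cor:gate}). A degree-$j$ polynomial in $V$ therefore has frequencies in the $j$-fold sum $\mathcal W^{(2j)}$, which gives $\mathcal W^{(4wd)}$ and $\mathcal W^{(8wd)}$. The approximant is thus a moment-polynomial receiver, and \cref{thm:bridge} applies verbatim.

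\emph{Feature and noise errors.} Write $p_A=\sum_J(-1)^{|J|}p_0(J)$ and apply the triangle inequality. With $p_0(J)\le1$, this gives a feature error of $\sum_J[(1+\varepsilon_d)^{2|J|}-1]$. The term $J=\emptyset$ contributes $0$, since $p_0(\emptyset)=1$ is exact, so the error is at most $e_A$.

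For the noise, write $\hat p_A=p_A+\delta_A$ with $|\delta_A|\le e_A$. Then
\[
\hat p_{A\cup B}-\hat p_A\hat p_B-(p_{A\cup B}-p_Ap_B)=\delta_{A\cup B}-\delta_Ap_B-p_A\delta_B-\delta_A\delta_B .
\]
Using $|p|\le1$, its absolute value is at most $e_{A\cup B}+e_A+e_B+e_Ae_B$. Dividing by $S$ gives the stated noise bound.

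The last sentence of the proposition defers to the certificate of \cref{app:proof-riskerror}, so here I only supply its inputs, namely these uniform feature and noise errors.

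The main obstacle is the spectral containment through loss and partial trace, which must hold uniformly in $G$. I would handle it with the dilation to a passive unitary on sources plus vacuum, followed by the interlacing argument; everything else is routine.
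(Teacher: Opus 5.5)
Your proposal is correct and follows the paper's own proof essentially step for step: the orthogonal-symplectic dilation over sources plus vacuum environment with Cauchy interlacing for the spectral bound, the eigenvalue-wise factorization $P_{J,d}=p_0(J)\prod_i(1+\delta_i)$, the Bernstein-ellipse rate for $q^{-1/2}$, the degree and frequency counts, and the triangle-inequality propagation to $e_A$ and to the noise entries. The only step the paper adds is converting these uniform entrywise errors into the inputs the certificate actually uses. With training standard deviations $s_A$ in the standardized coordinates, these are $\rho^2\le\sum_A e_A^2/s_A^2$ and $|(\bar\Omega_b-\bar\Omega_d)_{AB}|\le(e_{A\cup B}+e_A+e_B+e_Ae_B)/(S\,s_As_B)$, where $\bar\Omega_d$ is the approximant's averaged noise covariance.
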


The approximant is a finite dictionary of arbitrary precision, not an
exact finite representation of the click receiver, whose risk is the
limit $d\to\infty$. The spectral bound is uniform because passive
optics cannot amplify: every output covariance block has its spectrum
between those of the sources and of the vacuum. Unlike the second-order
truncation of \cref{sec:truncation}, which expands in the drive around
its mean, the approximant holds for every drive value, at the price of a
larger dictionary. At $U8$ ($r=0.5$, $\eta_{\det}=0.85$) the interval is
$[0.731,1.730]$ and $\varepsilon_d$ falls by about a factor of five per
degree (\cref{sec:numerics}). For squeezing encoding the same
approximant, with $r$ the largest squeezing, is a polynomial of degree
$2wd$ in the $(u_t,v_t)$.

\section{Topology-imposed limits}\label{sec:structure}

\subsection{Path amplitudes and structural zeros}
Unroll the circuit in time and gate order into a directed acyclic graph
whose vertices are (mode, gate) incidences; \cref{fig:dag} shows this
graph for a single rail with two loops over four steps. A transfer entry is the sum
of its source-to-output path amplitudes,
\begin{equation}\label{eq:path-sum}
T_{os}=\sum_{\pi:s\to o}\operatorname{amp}(\pi),\qquad
\operatorname{amp}(\pi)=\prod_{e\in\pi}w_e.
\end{equation}
A mixing gate contributes $\cos\theta$ for staying on a mode and
$\pm\sin\theta$ for switching modes; a phase gate contributes
$e^{i\varphi}$ and a loss contributes $\sqrt\eta$.
The staying factors include beam splitters that a time bin traverses without entering the loop, such as the readout gate on every round trip (\cref{fig:dag}).
Within a single time step, the time-bin mode meets the loops in gate order, so a path can move from loop $l$ to loop $l'$ within that step only if $l'$ comes later in the gate order.
We call this gate-order-induced directed loop switching. It creates structural zeros that number-theoretic arguments on the delays alone do not see.

\begin{figure*}[t]
\centering
\begin{tikzpicture}[>=Stealth, font=\scriptsize, x=1.2cm, y=1.05cm,
  gnode/.style={draw, fill=white, minimum size=4.6mm, inner sep=0pt},
  snode/.style={draw, circle, fill=gray!15, minimum size=5mm, inner sep=0pt},
  onode/.style={draw, circle, minimum size=5mm, inner sep=0pt},
  eg/.style={->, gray!60, thin},
  pa/.style={->, line width=1.1pt, red!75!black},
  pb/.style={->, line width=1.1pt, blue!70!black},
  pc/.style={->, line width=1.1pt, green!50!black}]
\foreach \t in {1,...,4} {
  \pgfmathsetmacro{\x}{3.0*(\t-1)}
  \node[snode] (s\t) at (\x,3) {$s_{\t}$};
  \node[anchor=west, font=\tiny, text=orange!85!black, inner sep=1pt] at (s\t.east) {$\psi_{\t}$};
  \node[gnode] (a\t) at (\x,2) {};
  \node[gnode] (b\t) at (\x+1.0,1) {};
  \node[onode] (o\t) at (\x+2.0,0) {$o_{\t}$};
  \node[anchor=south west, font=\tiny, inner sep=1pt] at (a\t.north east) {$\theta_{\t,1}$};
  \node[anchor=north east, font=\tiny, inner sep=1pt] at (b\t.south west) {$\theta_{\t,2}$};
  \draw[eg] (s\t) -- (a\t);
  \draw[eg] (a\t) -- (b\t);
  \draw[eg] (b\t) -- (o\t);
}
\draw[eg] (a1) -- (a2); \draw[eg] (a2) -- (a3); \draw[eg] (a3) -- (a4); \draw[eg] (a4) -- ++(1.2,0);
\draw[eg] (b1) to[bend left=18] (b3);
\draw[eg] (b2) to[bend right=18] (b4);
\draw[pa, transform canvas={xshift=-1.2pt}] (s1) -- (a1);
\draw[pa] (a1) -- (a2);
\draw[pa] (a2) -- (b2);
\draw[pa] (b2) -- (o2);
\draw[pb, transform canvas={xshift=1.2pt}] (s1) -- (a1);
\draw[pb] (a1) -- (b1);
\draw[pb] (b1) to[bend left=18] (b3);
\draw[pb, transform canvas={xshift=1.2pt}] (b3) -- (o3);
\draw[pc] (s2) -- (a2);
\draw[pc] (a2) -- (a3);
\draw[pc] (a3) -- (b3);
\draw[pc, transform canvas={xshift=-1.2pt}] (b3) -- (o3);
\node[align=left, anchor=north west, font=\scriptsize] at (-0.4,-0.5) {%
  horizontal arrows: loop 1 ($\tau_1=1$);\ \ arcs: loop 2 ($\tau_2=2$);\ \ vertical/diagonal arrows: time bin within a step;\ \ \textcolor{orange!85!black}{$\psi_t$}: source features\\[2pt]
  \textcolor{red!75!black}{$\pi$}: $\operatorname{amp}(\pi)=\sin\theta_{1,1}\,\sin\theta_{2,1}\,\cos\theta_{2,2}$ \qquad
  \textcolor{blue!70!black}{$\pi'$}: $\operatorname{amp}(\pi')=\cos\theta_{1,1}\,\sin\theta_{1,2}\,\sin\theta_{3,2}$ \qquad
  \textcolor{green!50!black}{$\pi''$}: $\operatorname{amp}(\pi'')=\sin\theta_{2,1}\,\sin\theta_{3,1}\,\cos\theta_{3,2}$};
\end{tikzpicture}
\caption{Time-unrolled graph of a single rail with two loops ($\tau_1=1$, $\tau_2=2$) over four steps; fixed gates (readout, phases, loss) are omitted.
Each square is a coupling beam splitter with angle $\theta_{t,l}=\theta^0_l+\kappa G_{t,c(l)}$, a gate site under gate encoding; each source $s_t$ is a source site with features $\psi_t$ under squeezing encoding. Within a step the time bin meets loop~1 before loop~2, so a path can switch from loop~1 to loop~2 within a step but not back.
Second order: two paths leave the same source $s_1$. This path pair's contribution to the covariance between $o_2$ and $o_3$ depends only on the drives of gates on $\pi\cup\pi'$ and on $\psi_1$; the full covariance sums all source-compatible path pairs (\eqref{eq:path-sum} and \cref{thm:zeros}).
The pair $(\theta_{1,1},\theta_{2,1})$ interacts along one path, $(\theta_{2,1},\theta_{3,2})$ across the two legs, and $\theta_{1,1}$ appears on both legs; these are the three mechanisms separated by the gate-level recursion in \cref{app:gate-recursion}.
Fourth order: $\langle n_{o_2}n_{o_3}\rangle$ also contains the product of a leg pair from $s_1$ along $\pi$ and a leg pair from $s_2$ along $\pi''$. This term carries $\psi_1\psi_2$, which no covariance entry contains (\cref{prop:gauss}), and $\theta_{2,1}$ lies on all four legs, so it can carry harmonics up to $4\kappa$ in that gate's drive, twice the covariance bound (\cref{thm:poly}).}
\label{fig:dag}
\end{figure*}

\begin{theorem}[All-order structural zeros]\label{thm:zeros}
Call a pair of paths $\pi,\pi'$ \emph{source-compatible} when their input modes are coupled by a nonzero entry of $N_{\rm in}$ or $M_{\rm in}$; the two modes may be distinct members of one TMSV pair.
For $\omega\ne0$, if $\hat V_\omega\neq0$, there exist a source-compatible pair ending in retained outputs and a selection of local gate harmonics on that path pair whose signed sum equals $\omega$ (with the conjugate sign on a $\bar T$ leg). The vacuum contribution to the constant coefficient is excluded from this statement.
In particular, for every $\alpha\in\supp\omega$, at least one modulated gate fed by $G_\alpha$ lies on $\pi\cup\pi'$.
Thus source-compatible path pairs give an outer support for every Fourier coefficient, not necessarily its exact nonzero support.
For a set $S$ of drive components, define its interaction part as
$\mathcal I_S V_\theta(G)=\sum_{\omega:S\subseteq\supp\omega}\hat V_\omega e^{i\omega\cdot G}$.
If no source-compatible path pair contains at least one gate fed by each component in $S$, then $\mathcal I_S V_\theta\equiv0$.
More generally (\cref{thm:poly}(c)), for order-$k$ moments and for source sites, a product of features of the gate sites in $S$ and of the source sites in $\Sigma$ vanishes identically unless a source-compatible $k$-tuple of paths passes every gate in $S$ and has a leg starting at every source in $\Sigma$.
\end{theorem}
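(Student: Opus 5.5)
The proof follows the path expansion \eqref{eq:path-sum} inserted into the moment propagation \eqref{eq:NM}. Since $V_\theta$ is an affine function of $(N_{\rm out},M_{\rm out},\bar M_{\rm out})$ restricted to retained modes, it suffices to treat one entry. We write $(N_{\rm out})_{jk}=\sum_{s,s'}\bar T_{js}(N_{\rm in})_{ss'}T_{ks'}$, and $M_{\rm out}$ analogously with $T_{js}T_{ks'}$. The vacuum environment modes $b$ contribute only to the constant vacuum part of $V$, which is excluded from the statement. Each nonzero term therefore requires $(N_{\rm in})_{ss'}\ne0$ or $(M_{\rm in})_{ss'}\ne0$, which is the definition of a source-compatible input pair $(s,s')$. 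Expanding $T_{js}$ and $T_{ks'}$ by \eqref{eq:path-sum} writes the entry as a sum over path pairs $(\pi,\pi')$ ending at the retained outputs $j,k$. Each summand is $\overline{\operatorname{amp}(\pi)}\,\operatorname{amp}(\pi')$ times a drive-independent input moment.

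Next we compute the Fourier content of one summand. Each gate instance occurs at most once on a path, since the unrolled graph is acyclic. Hence $\operatorname{amp}(\pi)$ is a product over the gates on $\pi$ of local factors $\cos\theta_{t,l}$, $\pm\sin\theta_{t,l}$ and $e^{i\varphi_{t,l}}$, together with constants. Under \eqref{eq:affine} each local factor is a sum of local harmonics $e^{\pm i\kappa G_\alpha}$, or a single harmonic $e^{iaG_\alpha}$. Expanding the product therefore gives a sum of exponentials $e^{i\omega\cdot G}$. Each $\omega$ is a signed sum of one chosen local harmonic per modulated gate on $\pi$ and on $\pi'$, and the conjugation on the $\bar T$ leg flips the sign of that leg's harmonics. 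A gate lying on both legs contributes one harmonic from each leg, which is consistent with the degree bound of \cref{thm:poly}(a).

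Collecting coefficients then gives the main claim. $\hat V_\omega$ is a finite sum, over source-compatible pairs and over harmonic selections whose signed sum equals $\omega$, of the corresponding products of coefficients. If $\hat V_\omega\ne0$, at least one such term exists, because summation can cancel terms but never creates an exponent absent from every term. Cancellation is exactly why the statement gives only an outer support. For $\alpha\in\supp\omega$ we have $\omega_\alpha\ne0$, and a selection can produce a nonzero component $\omega_\alpha$ only if some chosen harmonic depends on $G_\alpha$. That requires a gate fed by $G_\alpha$ on $\pi\cup\pi'$. The interaction statement is then the contrapositive: every $\omega$ with $S\subseteq\supp\omega$ would need a pair carrying gates for all of $S$, so if no such pair exists, every such $\hat V_\omega$ vanishes and $\mathcal I_S V_\theta\equiv0$. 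The general order-$k$ and source-site version is \cref{thm:poly}(c), invoked directly.

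The main obstacle is bookkeeping rather than depth. First, several gates fed by the same $G_\alpha$ (multiple loops with equal $c(l)$ and repeated EOMs) add their harmonics in the $\alpha$ coordinate. One must check that $\omega_\alpha\ne0$ still forces at least one such gate onto the pair. It does, because an empty selection gives $\omega_\alpha=0$. Second, one must make precise that fixed gates, losses and the \texttt{cross} gates contribute only constants, so they never add frequencies.
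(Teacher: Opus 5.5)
Your proposal is correct and follows the paper's proof essentially step by step. You substitute the path sum into $N_{\rm out}=\bar TN_{\rm in}T^\top$ and $M_{\rm out}=TM_{\rm in}T^\top$, expand each amplitude into local harmonics (sign flipped on the $\bar T$ leg), note that a term not depending on $G_\alpha$ has $\omega_\alpha=0$ and that summation only cancels frequencies, and defer the order-$k$ and source-site version to \cref{thm:poly}(c). One small precision: in these normally ordered moments the vacuum environment modes contribute nothing at all, and the excluded constant is the identity that arises when $V$ is formed from $(N,M,\bar M)$, not a contribution of the modes $b$.
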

\begin{proof}
By \eqref{eq:path-sum} and~\eqref{eq:NM}, each entry of $N$ or $M$ is a sum over sources and path pairs $(\pi,\pi')$ of products of edge weights,
and a term depends only on the drives of gates on $\pi\cup\pi'$.
Expanding those edge weights into their finite harmonics, each term has a frequency equal to the sum of its selected local harmonics.
Consequently a nonzero coefficient needs at least one source-compatible path pair that produces its frequency; cancellations between path pairs may still make an allowed coefficient zero.
\end{proof}
For one component, \cref{thm:zeros} says that a drive component is
\emph{readable} only if one of its gates reaches a retained output.
For two components it recovers the second-order structural zeros:
no common gate, ordered gate pair on one path, or compatible source
with the two gates on separate legs means no interaction. These are
necessary support conditions, not sufficient nonzero conditions.
A path accumulates delay only by circulating in loops, so the offsets between a modulated gate and a retained output that a path can realize lie in the numerical semigroup generated by the delays,
independently of $\kappa$. If $\gcd(\tau)=g>1$, only gate-to-output offsets that are multiples of $g$ are reachable (the readable lags are these offsets shifted by the readout ages), and a nonzero second-order interaction between components at times $t,t'$ requires $t\equiv t'\pmod g$.
Rails that share no gate decouple exactly. These statements are necessary conditions only. In the $U8$ design, about $20\%$ of the components that pass the number-theoretic test are structural zeros of \cref{thm:zeros}.

\subsection{Independent components and the receiver's event span}\label{sec:components}
The path rule constrains single moments. A stronger, all-order statement
follows when the circuit splits into parts that share no light. Connect a
source to a retained mode whenever some path leads from a mode of the
source to it. The connected components of this graph partition the
retained modes into $C_1,\dots,C_K$; the sites on paths into different
components are disjoint.

\begin{theorem}[Component factorization]\label{thm:components}
Let the input be a product over sources. Then the retained output state
is a product, $\rho_{\rm out}=\bigotimes_k\rho_k$, where $\rho_k$ depends
only on the encoding features of the sites on paths into $C_k$. Hence
every product observable $O=\bigotimes_kO_k$ has
$\langle O\rangle=\prod_k\operatorname{tr}(\rho_kO_k)$. In particular:
\begin{enumerate}
\item[(a)] for zero-mean states, every quadrature second moment between
modes of different components vanishes for all drive values;
\item[(b)] the probability of a threshold-click pattern, or a
photon-number moment, on a mode set $E$ is a product
$\prod_{k\in K(E)}f_{E,k}$ over the components $K(E)$ that $E$ meets, with
$f_{E,k}$ depending only on the sites of $C_k$.
\end{enumerate}
\end{theorem}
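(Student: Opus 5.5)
The plan is to prove the factorization at the level of the unitary dilation of the passive network, and to read (a) and (b) off the product structure. First, I will group the input modes by source: each source $s$ (a TMSV pair counting as one source) occupies a set of input modes $I_s$, and the input state is $\rho_{\rm in}=\bigotimes_s\sigma_s\otimes|0\rangle\langle0|_{\rm env}$. By the definition of the component graph, every source that reaches a retained mode is attached to exactly one component; write $\mathcal S_k$ for the sources attached to $C_k$ and $\mathcal S_0$ for those reaching no retained mode. The key linear-algebra step is that the retained-output rows of $T$ have a block structure: for $o\in C_k$, $T_{os}=0$ unless the input mode $s$ belongs to a source in $\mathcal S_k$. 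This is immediate from the path sum \eqref{eq:path-sum}, since a nonzero entry requires a path, and a path from a mode of $s$ to $o$ puts $s$ in the component of $o$. Hence $a_{{\rm out},o}=\sum_{s\in\mathcal S_k}T_{os}a_{{\rm in},s}+(Eb)_o$ involves only the input modes of $C_k$'s sources and the environment.

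Second, I will handle the environment, which is the only place where different components could share something. The environment modes are in vacuum and enter only through $E$; rather than tracking which environment modes feed which component, I will argue via normally ordered moments, as in \cref{thm:poly}. Every normally ordered moment of the retained modes is a sum over tuples of input legs of path amplitudes times input normally ordered moments; vacuum environment modes contribute nothing to normally ordered moments. For a monomial $\prod_k O_k$ in which $O_k$ is a normally ordered word in the operators of $C_k$, each leg from $O_k$ lands on an input mode of a source in $\mathcal S_k$. Because the $\mathcal S_k$ are disjoint and the input is a product over sources, the input moment factorizes as $\prod_k\langle\cdot\rangle_{\mathcal S_k}$. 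Regrouping the finite sum then gives $\langle\prod_k O_k\rangle=\prod_k\langle O_k\rangle$. This identity holds for all mixed normally ordered moments, which determine the retained state. Whenever the characteristic function is analytic, and in general through the normally ordered characteristic function $\chi_N(\xi)=\langle\prod_o e^{\xi_oa_o^\dagger}e^{-\bar\xi_oa_o}\rangle$, the same computation applied to $\chi_N$ itself gives $\chi_N=\prod_k\chi_{N,k}$, and thus $\rho_{\rm out}=\bigotimes_k\rho_k$. To justify the last step, I will do it directly: the displacement operator of the outputs pulls back to a product of displacements on the inputs and the environment, and the environment vacuum contributes a Gaussian factor $e^{-\|E^\dagger\xi\|^2/2}$. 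Because $TT^\dagger+EE^\dagger=I$ and $T$ is block-separated across components on the retained rows, $E^\dagger\xi$ has cross terms $\xi_k^\dagger E_kE_l^\dagger\xi_l=-\xi_k^\dagger T_kT_l^\dagger\xi_l=0$ for $k\ne l$. The vacuum factor therefore also splits across components. I expect this environment-cross-term check to be the main obstacle, and the identity $E_kE_l^\dagger=-T_kT_l^\dagger$ from unitarity of the dilation is what I would rely on. The dependence claim, that $\rho_k$ depends only on the sites on paths into $C_k$, follows because $T_k$ and the states of $\mathcal S_k$ involve only those sites, as in \cref{thm:poly}(c).

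Finally, the consequences follow directly. For (a), a zero-mean cross second moment $\langle x_ix_j\rangle$ with $i\in C_k$, $j\in C_l$, $k\neq l$, factorizes as $\langle x_i\rangle\langle x_j\rangle=0$. For (b), a click pattern projector or a photon-number monomial on $E$ is a product observable $\bigotimes_{k\in K(E)}O_k$ times the identity on untouched components. It therefore equals $\prod_k\tr(\rho_kO_k)=\prod_k f_{E,k}$.
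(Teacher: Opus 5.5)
Your proof is correct and follows the paper's skeleton: the retained rows of $T$ are block-supported on the disjoint source sets $\mathcal S_k$, a characteristic function factorizes over the product input, and (a) and (b) are read off product observables. The difference is which characteristic function carries the argument.

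The paper works only with the normally ordered $\chi_N$. Substituting $a_{\rm out}=Ta_{\rm in}+Eb$ gives $\chi_{N,{\rm out}}(\xi)=\chi_{N,{\rm in}}(T^\dagger\xi)\,\langle e^{\zeta\cdot b^\dagger}e^{-\bar\zeta\cdot b}\rangle_{\rm vac}$ with $\zeta=E^\dagger\xi$. The vacuum factor is exactly $1$ because $e^{-\bar\zeta\cdot b}|0\rangle=|0\rangle$. For $\chi_N$ there is therefore no Gaussian factor and no environment obstacle. The factor $e^{-\|E^\dagger\xi\|^2/2}$ you invoke belongs to the displacement-operator (symmetrically ordered) function $\chi_W(\xi)=\chi_N(\xi)e^{-\|\xi\|^2/2}$, so do not present it as part of the $\chi_N$ computation.

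Your displacement route is nevertheless valid. The retained rows of $[T\ E]$ are rows of a unitary, so $TT^\dagger+EE^\dagger=I$. The cross blocks $E_kE_l^\dagger=-T_kT_l^\dagger$ then vanish by the block support, and the Gaussian factor splits over components. This route gives you a manifestly bounded, state-determining function and an explicit reason why vacuum ports shared between components cannot correlate them.

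It costs an extra unitarity step, which you need a second time for the dependence claim. The factor of $\rho_k$ contains $E_k$, and only $E_kE_k^\dagger=I-T_kT_k^\dagger$ shows that it depends on nothing beyond the $C_k\times\mathcal S_k$ block of $T$. Your sentence cites only $T_k$ and the source states.

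Finally, the intermediate claim that normally ordered moments determine the retained state requires a moment-determinacy condition and fails for general inputs. The characteristic-function argument does not use it, so drop it rather than lean on it.
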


\begin{corollary}[Event span]\label{cor:span}
A monomial that combines sites of several components can occur in the
feature of an event $E$ only if $E$ meets all of these components. In
particular:
\begin{enumerate}
\item[(i)] a receiver whose events each lie in one component, such as
homodyne and heterodyne for zero-mean states, single-mode clicks or
single-mode photon numbers, has features that are sums of
component-local functions; every exclusion certificate built from such
additive dictionaries, including~\eqref{eq:analytic-floor}, holds for
it exactly, not only for affine receivers;
\item[(ii)] a receiver whose events meet at most $w$ components reads
interactions of order at most $w$ across components, and only between
components that some event joins;
\item[(iii)] every quadrature second moment between different components
is a drive-independent feature and can be removed before any simulation.
\end{enumerate}
\end{corollary}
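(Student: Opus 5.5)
The plan is to deduce \cref{cor:span} directly from \cref{thm:components}. The one thing to check is that a receiver's mean feature for an event $E$, and its single-shot noise entries, are expectations of product observables over components.

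The first step fixes a drive value $G$. By \cref{thm:components}, $\rho_{\rm out}=\bigotimes_k\rho_k$, and $\rho_k$ depends only on the features of sites on paths into $C_k$; these site sets are disjoint across $k$. Any event $E$ supported on a mode set splits as $E=\bigcup_{k\in K(E)}E_k$ with $E_k\subseteq C_k$. The corresponding POVM element, whether a click pattern, a photon-number product or a quadrature monomial, is a tensor product $\bigotimes_{k\in K(E)}O_{E_k}$ with the identity on the other components. Its expectation is therefore $\prod_{k\in K(E)}f_{E,k}$, where $f_{E,k}=\tr(\rho_kO_{E_k})$ depends only on the sites of $C_k$; this is part (b) of the theorem.

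The second step expands this product in monomials. Whenever each $f_{E,k}$ is a polynomial (by \cref{thm:poly}), or more generally a function of its own component's features, every monomial of the product is a product of one monomial from each factor. It can therefore involve sites only of components in $K(E)$. The main statement of the corollary follows: a monomial combining sites of components $k_1,\dots,k_r$ needs $\{k_1,\dots,k_r\}\subseteq K(E)$.

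For click features, which are not polynomial, I would phrase the claim as functional dependence: $f_E$ depends on the features of $C_k$ only if $k\in K(E)$. The order-$d$ approximants of \cref{prop:click} inherit the monomial statement, because $\det p_d(Q_J)$ factorizes over the block-diagonal structure of $Q_J$ given by (a). The same factorization applies to the noise entries $p_{A\cup B}-p_Ap_B$; these involve events $A\cup B$ and therefore span only $K(A)\cup K(B)$.

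The three items then follow.

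For (i), if every event lies in one component, then $|K(E)|=1$ for all features, so each coordinate of $z_b$ is a component-local function. A linear readout then produces sums of component-local functions. Hence any exclusion certificate that only uses the fact that the features lie in such an additive dictionary transfers verbatim. For homodyne and heterodyne on zero-mean states, the only features that could join components are cross-component second moments, and I would reduce those to item (iii).

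For (ii), $|K(E)|\le w$ bounds the number of components per monomial by $w$, and the components joined are exactly those that some event meets.

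For (iii), part (a) of \cref{thm:components} gives vanishing cross-component quadrature second moments for all $G$. After standardization these are constant coordinates, which the setting already drops.

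The main obstacle is justifying that the feature of every receiver under consideration is really the expectation of a tensor-product observable. This matters for multi-setting homodyne, where the local-oscillator combinations are fixed linear combinations of single-setting moments. It also matters for standardization, which is a fixed diagonal rescaling. My approach is to note that both operations are linear and coordinatewise, so they preserve the support property. Linear combinations cannot create a monomial absent from every term, by the same argument as in the proof of \cref{thm:poly}.
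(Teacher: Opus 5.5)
Your proposal is correct and follows essentially the paper's own proof: use \cref{thm:components}(b) to write each event's feature as $\prod_{k\in K(E)}f_{E,k}$ with component-local factors, read (i) and (ii) off that product structure, and get (iii) from part (a). Your extra steps fill in what the paper leaves implicit: treating the non-polynomial click features as functional dependence, factorizing $\det p_d(Q_J)$ over the block-diagonal $Q_J$, and routing cross-component heterodyne moments through (iii). One small correction: in (ii) the argument only shows that interactions arise between components some event joins, not that the joined components are ``exactly'' those.
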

The theorem is a statement about light that never meets, so it applies
to classical light as well. A cross-component coincidence probability is
exactly the product of the two marginal probabilities; what the
coincidence adds is a product feature that a restricted linear head
cannot form from single-mode features, not a quantum resource.
In a TDM circuit with $\gcd(\tau)=g$ and independent sources, the
residues of the output time modulo $g$ label components on each rail
that no inter-rail gate joins. At $U8$ this gives $16$ components of two
modes each (one $Q$ and one $M$ mode of an age and rail). By
\cref{cor:span}(iii), $356$ of the $452$ heterodyne features, namely every
selected moment between different modes, are drive independent; with
the data-free reference arm and a phase symmetry only $32$ vary. Of the
$116$ click events, $68$ join two components, and these alone create the
click receiver's additional interactions (\cref{sec:numerics-components}).

\subsection{Topology-conditioned task subspace and risk floor}\label{sec:topology-floor}
In this subsection the data enter only through gates, and the source
moments are drive independent.
Fix the gate types, drive assignments, gains, source-correlation pattern and
retained outputs of a topology $\mathcal T$, while allowing its
unmodulated angles and source squeezing to vary. Source-compatible path pairs and their local
harmonics define an outer frequency set $\mathcal F_{\mathcal T}$.
Let $\phi_{\mathcal T}^{\rm ex}(G)$ contain the corresponding real sines
and cosines, with one representative per $\pm\omega$ pair. Cancellations
may remove terms, but cannot create frequencies outside this dictionary;
its frequencies are dual to drive \emph{values}, not physical time.
For any real dictionary $\phi$, set
$\Sigma_\phi=\operatorname{Cov}(\phi)$ and
$C_\phi=\operatorname{Cov}(\phi,y)$, and define the optimistic oracle risk
\begin{equation}\label{eq:oracle-floor}
R_{\rm or}(\phi)=R_0-\frac1q
  \tr(C_\phi^\top\Sigma_\phi^\dagger C_\phi).
\end{equation}
It grants a noiseless, unpenalized linear head access to every dictionary
coordinate on the same empirical distribution of training windows.

\begin{corollary}[Topology-conditioned oracle floor]
\label{cor:topology-floor}
Under (C1)--(C6), for gate encoding with drive-independent source
moments, every affine-receiver circuit of topology
$\mathcal T$, at any shot budget $S$ and fixed $\lambda\ge0$, obeys
\begin{equation}\label{eq:exact-floor}
R_{\theta,b,S}(\lambda)\ge R_{\rm or}(\phi_{\mathcal T}^{\rm ex}).
\end{equation}
If the receiver has at most $d_b$ features, let $\mu_i$ be the
descending eigenvalues of
$\Sigma_\phi^{\dagger/2}C_\phi C_\phi^\top
 \Sigma_\phi^{\dagger/2}$ for
$\phi=\phi_{\mathcal T}^{\rm ex}$. Then the stronger bound
\begin{equation}\label{eq:oracle-rank}
R_{\theta,b,S}(\lambda)\ge
R_0-\frac1q\sum_{i=1}^{d_b}\mu_i
\end{equation}
also holds, with missing eigenvalues taken as zero.
\end{corollary}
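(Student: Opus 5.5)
The plan is to show that any affine-receiver feature vector of a topology-$\mathcal T$ circuit lies, coordinate by coordinate, in the linear span of the dictionary $\phi_{\mathcal T}^{\rm ex}$ plus a constant. The floor then follows from two facts: the population ridge risk cannot beat the noiseless unpenalized least-squares risk on the same features, and that risk is monotone under enlarging the feature span.

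\emph{Step 1: features lie in the dictionary span.} By \cref{cor:gate}, for every choice of unmodulated angles and squeezing, $V_\theta(G)=\sum_\omega\hat V_\omega e^{i\omega\cdot G}$. By \cref{thm:zeros}, every nonzero $\hat V_\omega$ with $\omega\neq0$ has $\omega$ in the set generated by source-compatible path pairs and their local harmonics, that is $\omega\in\mathcal F_{\mathcal T}$. This set depends only on the fixed data of $\mathcal T$, not on the free angles or squeezing, since those change coefficients but not which gates are modulated. The vacuum contribution and $\omega=0$ give constants. The affine receiver gives $z_b=H_bV_\theta+h_0$ with a fixed standardization. Real-valuedness, via $\hat V_{-\omega}=\overline{\hat V_\omega}$, then turns each coordinate of $z_b$ into a real combination of a constant and the sines and cosines in $\phi^{\rm ex}_{\mathcal T}$. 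Hence $z_b=c_0+K\phi^{\rm ex}_{\mathcal T}(G)$ for some real matrix $K$ with $d_b$ rows. After centering (C4), $\Sigma_z=K\Sigma_\phi K^\top$ and $C_z=KC_\phi$.

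\emph{Step 2: ridge risk dominates the oracle risk on the features.} From \eqref{eq:risk}, write $A=\Sigma_z+\bar\Omega_b/S\succeq\Sigma_z$. For any head $W$, the risk equals $R_0-\frac2q\tr(W^\top C_z)+\frac1q\tr(W^\top AW)$, which is at least $R_0-\frac2q\tr(W^\top C_z)+\frac1q\tr(W^\top\Sigma_zW)$. Minimizing the right-hand side over $W$ gives $R_0-\frac1q\tr(C_z^\top\Sigma_z^\dagger C_z)$. This uses $C_z\in\operatorname{range}\Sigma_z$, which holds by Cauchy--Schwarz for the joint covariance of $(z_b,y)$. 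Thus $R_{\theta,b,S}(\lambda)\ge R_{\rm or}(z_b)$ for every $S$ and $\lambda\ge0$, including the pseudo-inverse convention (C6), since $W_\lambda$ is a particular head.

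\emph{Step 3: projection comparison.} In the inner product $\langle f,g\rangle=\E[fg]$ on centered functions, $\frac1q\tr(C_z^\top\Sigma_z^\dagger C_z)$ is the squared norm of the orthogonal projection of $y-\E y$ onto $\operatorname{span}(z_b)$, divided by $q$. Because $\operatorname{span}(z_b)\subseteq\operatorname{span}(\phi)$, this is at most the corresponding quantity for $\phi$. That proves \eqref{eq:exact-floor}.

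For \eqref{eq:oracle-rank}, substitute $z_b=K\phi$ and set $B=\Sigma_\phi^{1/2}K^\top$. The explained variance becomes $\tr(P_B M)$, where $M=\Sigma_\phi^{\dagger/2}C_\phi C_\phi^\top\Sigma_\phi^{\dagger/2}$ and $P_B$ is the orthogonal projector onto $\operatorname{range}B$, of rank at most $d_b$. The identity $\Sigma_\phi^{1/2}\Sigma_\phi^{\dagger/2}C_\phi=C_\phi$ holds again by the range argument. Ky Fan's maximum principle then bounds $\tr(P_BM)$ by $\sum_{i\le d_b}\mu_i$.

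\emph{Main obstacle.} I expect the hardest part to be the careful pseudo-inverse bookkeeping in Step 3: checking that $\tr(C_z^\top\Sigma_z^\dagger C_z)=\tr(P_BM)$ when $\Sigma_\phi$ is singular. I would handle it by first restricting to the range of $\Sigma_\phi$, and noting that dictionary directions of zero empirical variance are constants and drop out.

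A second point to verify is that the training-segment standardization, which is task-dependent, is only a diagonal rescaling. This keeps $z_b$ inside the same span, so it does not disturb Step 1.
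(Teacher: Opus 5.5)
Your proposal is correct and is essentially the paper's argument. In both, the centered affine features are a fixed real linear map of the centered outer sine--cosine dictionary, and the conditionally unbiased, positive-semidefinite shot noise can only add to the risk. The captured variance is then bounded by a projection onto the dictionary span, and for the budgeted bound you use the same rank-$\le d_b$ projector onto the range of $\Sigma_\phi^{1/2}K^\top$ (the paper's $M^\top$) with Ky Fan's principle to obtain \eqref{eq:oracle-rank}. The only difference is organizational: the paper completes the square directly in the dictionary coordinates $L=B^\top W$, which gives the exact gap identity \eqref{eq:oracle-gap} for every head at once, whereas you first drop the noise via $A\succeq\Sigma_z$, pass to the noiseless oracle risk on $z_b$, and then invoke span monotonicity.
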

\begin{proof}[Key step]
The exact affine features satisfy
$\tilde{\hat z}=B_\phi\tilde\phi+\xi$, with centered dictionary
$\tilde\phi$, conditionally unbiased measurement noise $\xi$, and
$\E\xi\xi^\top=\bar\Omega_b/S\succeq0$. Put
$L=B_\phi^\top W$. Completing the square gives
\begin{equation}\label{eq:oracle-gap}
\begin{aligned}
\mathcal R_\phi(W)=R_{\rm or}(\phi)
&+\frac1q\|\Sigma_\phi^{1/2}
 (L-\Sigma_\phi^\dagger C_\phi)\|_F^2\\
&+\frac1{qS}\tr(W^\top\bar\Omega_bW),
\end{aligned}
\end{equation}
so both omitted terms are nonnegative. The rank-$d_b$ form follows by
granting the receiver the best possible $d_b$-dimensional row space and
applying the variational principle. The singular case and projection
identity are proved in \cref{app:proof-topology-floor}.
\end{proof}
The proof uses only that the centered features are a fixed linear map
of a topology-determined dictionary. It therefore extends to
moment-polynomial receivers and to source encodings once
$\phi_{\mathcal T}^{\rm ex}$ is replaced by the outer monomial dictionary
that \cref{thm:poly}(c) and \cref{prop:gauss} allow (for squeezing
encoding with an affine receiver, the $(u_t,v_t)$ of the sources that
reach a retained output); we do not use these extensions here. For
threshold clicks, the order-$d$ approximant of \cref{prop:click} gives
$R\ge R_{\rm or}(\phi^{(d)})-\varepsilon_R$ with its dictionary
$\phi^{(d)}$ and its certified risk error $\varepsilon_R$.
The floor is rigorous but optimistic: it is exact as an inequality, not
tight.
A topology is excluded before evaluating optical amplitudes if its
applicable floor exceeds an incumbent risk on the same task and risk
scale. This is a one-sided certificate: a surviving design need not be
good, and a large exact dictionary can give a vacuous floor on finite
training data. The result alone does not establish a measured runtime
saving.

\paragraph{Analytic screening example: an entire delay family.}
Consider single-rail circuits $\mathcal T_d$ with one lossless loop of
integer delay $d$, independent TMSV pairs injected at each step, an
identity reference arm, and all output ages retained. The only
time-nonlocal element is the loop. Eight consecutive coupling gates,
at steps $2,\ldots,9$, encode $G_i=\gamma u_i$ through
$\theta_i=\pi/4+\delta u_i$, where $u_i\in\{-1,1\}$,
$0<\gamma<1$ and $0<\delta=\kappa\gamma<\pi/4$.
Let the $256$ training windows enumerate all sign vectors equally and
take the scalar target to be the normalized sum of \emph{all} pairwise
history interactions,
\begin{equation}\label{eq:analytic-target}
y(u)=\frac{1}{\sqrt{28}}\sum_{1\le i<j\le8}u_i u_j,
\qquad R_0=1.
\end{equation}

A path in $\mathcal T_d$ remains within one time-residue class modulo
$d$; source-compatible path pairs do too, because the TMSV pairs are
independent between steps. Set
$n_r=\#\{i\in\{1,\ldots,8\}:i\equiv r\pmod d\}$.
The $28$ pair monomials in \eqref{eq:analytic-target} are orthonormal
under the full-factorial training measure. Every cross-residue pair is
therefore orthogonal to every affine covariance feature. Even granting
noiseless access to \emph{all} within-residue pairs gives the
topology-only certificate
\begin{equation}\label{eq:analytic-floor}
R_{\mathcal T_d,b,S}(\lambda)\ge
1-\frac1{28}\sum_{r=0}^{d-1}\binom{n_r}{2}
\quad (b\text{ affine},\ S>0,\ \lambda\ge0).
\end{equation}
For $d=2,3,4,8$ its values are respectively
$4/7,3/4,6/7,1$; \cref{fig:analytic-witness}(b) plots the floor for
eight delays. The $d=4$ topology, for example, misses $24$ of the $28$
task directions (red in \cref{fig:analytic-witness}a) regardless of its
operating angles, squeezing, or shot budget. These are optimistic risk floors, not claims that the surviving
directions are actually readable.

The $d=1$ member provides a constructive incumbent. Inject the memory
mode of the TMSV pair created at step $1$ into the loop, keep it through all
eight modulated gates, and extract it at step $10$, with fixed
$\pi/4$ injection and extraction couplers, zero fixed phases, and a
transparent readout splitter. The cross-arm homodyne moment between
this output and the reference mode of the same pair has the exact form
\begin{equation}\label{eq:analytic-feature}
z(u)=A_0\,2^{-4}\prod_{i=1}^8(c-su_i),
\end{equation}
where $c=\cos\delta$, $s=\sin\delta$, and
$|A_0|=\sinh r\cosh r>0$ for squeezing $r>0$.
Walsh orthogonality gives
$\operatorname{Cov}(z,y)^2=A_0^2 2^{-8}28c^{12}s^4$ and
$\operatorname{Var}(z)=A_0^2 2^{-8}(1-c^{16})$.
With this single feature and $\lambda=0$, its exact finite-shot risk is
\begin{equation}\label{eq:analytic-gap}
R_{\mathcal T_1,\mathrm{hom},S}(0)
=1-\frac{28c^{12}s^4}
 {1-c^{16}+2^8\bar\Omega_z/(A_0^2S)},
\end{equation}
where $0<\bar\Omega_z<\infty$ is the averaged single-shot variance.
At $\delta=\pi/6$ the infinite-shot limit is
$38563/58975\approx0.654$, the dashed line in
\cref{fig:analytic-witness}(b). Hence, at sufficiently large \emph{finite}
$S$, this incumbent lies below the $3/4$ floor and excludes all
$d\in\{3,\ldots,8\}$ before their optical path amplitudes are
evaluated; the $d=2$ floor does not settle that case.
This is a training-window, affine-receiver certificate for the stated
task, not a universal preference for short delays or a quantum
advantage claim.

The residue classes are the components of \cref{thm:components}, so the
certificate is not specific to affine receivers. By \cref{cor:span}(i)
it holds exactly for single-mode clicks, single-mode photon numbers and
coincidences within one class, and it fails precisely for events that
join two classes. The receiver, not the optics, then decides whether the
excluded directions are reachable: at $d=8$, where the floor is $1$, the
same optical state read by two-mode click coincidences reaches the
oracle risk $0$ (\cref{sec:numerics-components}). Conversely, by
\cref{cor:class}(ii) and (i), squeezing encoding with an affine receiver,
or displacement encoding read through quadrature means, has risk
exactly $R_0=1$ on this task for every delay, including $d=1$: its
features are additive over steps, and every pair character is orthogonal
to additive functions.

\begin{figure*}[t]
\centering
\begin{tikzpicture}[x=1cm,y=1cm,font=\scriptsize]
  \node[anchor=west,font=\footnotesize\bfseries] at (0.2,4.65)
    {(a) Task pairs at delay $d=4$};
  \foreach \i in {1,...,8} {
    \pgfmathsetmacro{\yy}{4.05-0.43*(\i-1)}
    \node[anchor=east] at (0.66,\yy-0.20) {$u_{\i}$};
    \pgfmathsetmacro{\xx}{0.78+0.43*(\i-1)}
    \node[anchor=south] at (\xx+0.20,4.14) {$u_{\i}$};
    \foreach \j in {1,...,8} {
      \pgfmathsetmacro{\xcell}{0.78+0.43*(\j-1)}
      \ifnum\j>\i
        \pgfmathtruncatemacro{\rem}{mod(\j-\i,4)}
        \ifnum\rem=0
          \fill[blue!45] (\xcell,\yy-0.40) rectangle ++(0.40,0.40);
        \else
          \fill[red!23] (\xcell,\yy-0.40) rectangle ++(0.40,0.40);
        \fi
      \else
        \fill[gray!5] (\xcell,\yy-0.40) rectangle ++(0.40,0.40);
      \fi
      \draw[gray!48,line width=0.25pt]
        (\xcell,\yy-0.40) rectangle ++(0.40,0.40);
    }
  }
  \node[anchor=west,align=left] at (4.55,3.35)
    {\textcolor{blue!70!black}{4 potentially reachable}\\
     \textcolor{red!65!black}{24 structurally absent}\\[3pt]
     $R_{\mathcal T_4}\ge 6/7$};
  \node[anchor=west,font=\footnotesize\bfseries] at (8.20,4.65)
    {(b) Architecture-family risk certificates};
  \draw[->] (8.46,0.47) -- (15.50,0.47) node[right] {delay $d$};
  \draw[->] (8.46,0.47) -- (8.46,4.05);
  \node[rotate=90] at (7.43,2.28) {risk floor};
  \foreach \val/\lab in {0/0,0.5/{0.5},1/1} {
    \pgfmathsetmacro{\ytick}{0.47+3.25*\val}
    \draw[gray!65] (8.40,\ytick) -- (8.46,\ytick);
    \node[anchor=east] at (8.35,\ytick) {$\lab$};
  }
  \foreach \d/\v in {1/0,2/0.57143,3/0.75000,4/0.85714,5/0.89286,6/0.92857,7/0.96429,8/1.00000} {
    \pgfmathsetmacro{\bx}{8.57+0.77*(\d-1)}
    \pgfmathsetmacro{\bh}{3.25*\v}
    \ifnum\d<3
      \fill[gray!48] (\bx,0.47) rectangle ++(0.50,\bh);
    \else
      \fill[red!47] (\bx,0.47) rectangle ++(0.50,\bh);
    \fi
    \node[anchor=north] at (\bx+0.25,0.42) {$\d$};
  }
  \pgfmathsetmacro{\inc}{0.47+3.25*0.653887}
  \draw[blue!75!black,densely dashed,line width=1pt]
    (8.46,\inc) -- (15.40,\inc);
  \draw[blue!75!black,densely dashed,line width=1pt]
    (9.05,4.16) -- (9.55,4.16);
  \node[anchor=west,text=blue!70!black]
    at (9.68,4.16) {$d=1$ feasible risk $R_{\infty}\simeq0.654$};
\end{tikzpicture}
\caption{An analytic, topology-only screening certificate for the
all-pairs task in \eqref{eq:analytic-target}. (a) At delay $4$, only four
same-residue drive pairs remain potentially reachable (blue); the
other $24$ pairs (red) are structural zeros for affine covariance
readout. (b) The optimistic floors from \eqref{eq:analytic-floor} for
eight one-loop delays. The dashed line is the risk of one explicit
$d=1$ homodyne feature at $S\to\infty$; for sufficiently large finite
$S$, every $d\ge3$ floor exceeds that feasible risk. The result is
task- and receiver-specific; it does not claim held-out performance or
quantum advantage.}
\label{fig:analytic-witness}
\end{figure*}

\section{The computable second-order truncation}\label{sec:truncation}
For gate encoding the exact dictionary can be exponentially large (for squeezing encoding with an affine receiver no truncation is needed, \cref{cor:squeeze}), and so can the click approximants of \cref{prop:click}. A local expansion in the drive is cheaper. Center the
drive at its training mean, $G=\gbar+x$, and truncate the exact state
identity along each window:
\begin{equation}\label{eq:taylor}
\begin{aligned}
T_k(x)
&=\sum_\omega\hat V_\omega e^{i\omega\cdot\gbar}
  \sum_{j\le k}\frac{(i\omega\cdot x)^j}{j!}\\
&=\sum_{j\le k}\frac1{j!}
  D^jV_\theta(\gbar)[x,\dots,x].
\end{aligned}
\end{equation}
The second-order member is the lowest non-affine one. In this
zero-mean Gaussian setting $\mathcal V_\theta=V_\theta$, and applying a
receiver $\Phi_b$ gives a single composite expansion,
\begin{equation}\label{eq:second}
\begin{aligned}
z_b(\gbar+x)
={}&z_0+H_bK_\theta x
 +\tfrac12 H_bD^2V_\theta[x,x]\\
&+\tfrac12 D^2\Phi_b[K_\theta x,K_\theta x]+r(x),
\end{aligned}
\end{equation}
where $z_0=\Phi_b(\Vbar)$, $\Vbar=V_\theta(\gbar)$,
$K_\theta=DV_\theta(\gbar)$, and $H_b=D\Phi_b(\Vbar)$.
The last displayed quadratic term vanishes for homodyne and heterodyne
but can add directions for threshold clicks.

To specify the sparse task dictionary once, let $\mathcal R_{\mathcal T}$
be the drive components readable by a source-compatible path pair and
$\mathcal P_{\mathcal T}^{\rm state}$ its quadratic path-pair outer
support. Define
\begin{equation}\label{eq:second-support}
\begin{aligned}
\mathcal P_{\mathcal T,b}&=\mathcal P_{\mathcal T}^{\rm state}\cup\mathcal P_b^{\rm rec},\\
\mathcal P_b^{\rm rec}&=\bigcup_{E\in\mathcal E_b}
 \bigl\{\{\alpha,\beta\}:\alpha,\beta\in\mathcal R(E)\bigr\},
\end{aligned}
\end{equation}
with $\mathcal P_b^{\rm rec}=\emptyset$ for affine receivers. Here
$\mathcal E_b$ is the event set of a nonlinear receiver and
$\mathcal R(E)$ the drive components readable at the modes of $E$.
Receiver curvature can multiply two directions that the state-layer
paths do not join, but only if one event reads both; across components
this requires an event that meets both (\cref{cor:span}). Let
$u_{\alpha\beta}=w_{\alpha\beta}
(x_\alpha x_\beta-\E[x_\alpha x_\beta])$, with
$w_{\alpha\alpha}=1$, $w_{\alpha\beta}=2$ for $\alpha\ne\beta$,
and $\psi_b=[x;u_{\mathcal P_{\mathcal T,b}}]$ after removing unreadable
linear coordinates. The centered expansion has the lifted form
\begin{equation}\label{eq:lifted}
\begin{aligned}
z_b-\E z_b&=B_{\theta,b}\psi_b+r_c,\\
\Sigma_z&=B\Sigma_{\psi_b}B^\top+\Delta_\Sigma,\\
C_z&=BC_{\psi_b}+\Delta_C,
\end{aligned}
\end{equation}
where $r_c=r-\E r$ and the columns of $B$ are
$H_bK_\theta e_\alpha$ and
$\tfrac12(H_bD^2V_\theta[e_\alpha,e_\beta]
+D^2\Phi_b[K_\theta e_\alpha,K_\theta e_\beta])$ on the selected
coordinates. Thus this surrogate requires task moments only through
order four on the stated support and label cross-moments through order
two; $\operatorname{Cov}(G,y)$ alone does not determine it.

Let $\phi_{\mathcal T,b}^{(2)}$ contain these readable linear and
quadratic drive coordinates. Repeating the argument of
\eqref{eq:oracle-gap} after setting $r_c=0$ gives the \emph{surrogate}
bound
\begin{equation}\label{eq:second-floor}
R_{\theta,b,S}^{(2)}(\lambda)
\ge R_{\rm or}(\phi_{\mathcal T,b}^{(2)}).
\end{equation}
It assumes conditionally unbiased, positive-semidefinite shot noise.
For the full nonlinear click receiver, it implies a floor only if an
independent certificate gives
$|R_{\theta,b,S}-R_{\theta,b,S}^{(2)}|\le\varepsilon_R$;
the resulting floor is
$R_{\rm or}(\phi_{\mathcal T,b}^{(2)})-\varepsilon_R$.
Within that validity regime the noiseless second-order map is a
topology-sparse quadratic NVAR on the \emph{same encoded drives},
not a claim of finite-sample dominance over an NVAR baseline~\cite{gauthier2021next}.

Directional derivatives of the time-ordered transfer matrix compute
$K_\theta$ and $D^2V_\theta$ without forming a full Hessian. A gate-level
recursion separates single-gate curvature, ordered pairs of gates on
one path, and pairs on the two legs of a source; it costs about five
circuit propagations per direction pair. The update equations are in
\cref{app:gate-recursion}. Which is cheaper, one direction per window
or one per supported pair, depends on the window count and support size.

\subsection{Validity of the approximation}\label{sec:validity}
For every $x$ and order $k$, the scalar Taylor remainder of each
harmonic in \eqref{eq:trig} gives
\begin{equation}\label{eq:remainder}
\|V_\theta(\gbar+x)-T_k(x)\|
\le\sum_\omega\|\hat V_\omega\|
 \frac{|\omega\cdot x|^{k+1}}{(k+1)!}.
\end{equation}
The series converges for every $x$, but a low-order truncation is useful
only when the phase $|\omega\cdot x|$ accumulated along relevant paths
is small. Large accumulated EOM phases can make early orders worse
before they improve (\cref{sec:numerics-limits}).

A state error need not induce the same prediction-risk error. If the
centered feature remainder has
$\rho^2=\E\|r_c\|^2$, then
$\|\Delta_C\|_F\le\rho\sqrt{qR_0}$ and
$\|\Delta_\Sigma\|\le2\rho\sqrt{\E\|z^{(k)}\|^2}+\rho^2$.
Together with a bound on the shot-covariance difference, the
spectral-stability condition
$\|A-A_k\|<\lambda_{\min}(A_k)+\lambda$
gives an explicit, though generally loose, risk-remainder certificate
in \cref{app:proof-riskerror}. We report state and risk errors
separately rather than identifying either one with the other.

At the weather $U8$ operating point, the click curvature term in
\eqref{eq:second} resolves $997$ drive pairs above a relative
$10^{-6}$ threshold. \Cref{fig:support} shows where they lie among all
$41{,}616$ drive pairs: $238$ lie in the state-layer structural outer
support and $759$ are added by the receiver. The
event-span support~\eqref{eq:second-support} predicts exactly these
$997$ pairs, compared with $2346$ for the union of all readable pairs:
all $759$ additions join two components through one of the $68$
cross-component coincidence events, and no curvature appears between
components that no event joins (\cref{sec:numerics-components}).
Classical quadratic post-processing can generate the same products, so
the difference is structural, not a quantum-advantage claim.

\begin{figure*}[t]
\centering
\begin{tikzpicture}[x=1cm,y=1cm,>=Stealth,font=\small]
\draw[rounded corners=3pt,fill=gray!5,draw=black!60] (0,0) rectangle (14.8,3.55);
\node[anchor=west,font=\small\bfseries] at (.25,3.24)
  {All unordered drive pairs: $41{,}616$};
\draw[rounded corners=3pt,fill=blue!5,draw=blue!55!black] (.52,.30) rectangle (14.28,2.91);
\node[anchor=west,font=\small] at (.80,2.60)
  {Pairs of first-order readable components: $2{,}346$};
\draw[rounded corners=2pt,fill=blue!19,draw=blue!70!black] (1.0,1.02) rectangle (6.62,2.17);
\node[align=center,font=\small] at (3.81,1.595)
  {$238$ in the TDM path-pair\\state-layer outer support};
\draw[rounded corners=2pt,fill=orange!19,draw=orange!75!black] (7.08,1.02) rectangle (13.82,2.17);
\node[align=center,font=\small] at (10.45,1.595)
  {$759$ cross-component pairs added by\\the $68$ joining click coincidences};
\node[font=\scriptsize] at (7.4,.64)
  {$997=238+759$ nonzero pairs at weather $U8$, equal to the event-span prediction~\eqref{eq:second-support}};
\end{tikzpicture}
\caption{Second-order sparsity and the distinction between state and receiver mechanisms. The nesting and disjoint additions are schematic, not area-proportional. The $238$ state pairs form a structural \emph{outer} support; cancellations can remove terms. The click numbers count receiver-curvature directions resolved above a relative $10^{-6}$ threshold at one operating point; the event-span support predicts them with no miss and no false alarm. Affine homodyne/heterodyne receivers have no curvature addition.}
\label{fig:support}
\end{figure*}

\section{Numerical verification}\label{sec:numerics}

\subsection{Protocol}\label{sec:numerics-setup}
We use seven multivariate series: weather, electricity, traffic, solar, ETTh1, PM2.5 and exchange~\cite{wu2021autoformer,lai2018modeling,zhou2021informer,liang2015pm25}, split $70/10/20$ in time~\cite{wu2021autoformer}.
Standardization, principal directions, the baseline weights below and every readout are fitted on training rows only. The window length is $L=96$ and the horizon $H=96$.
The target is the residual of a per-variable linear predictor on the variable's own history (NLinear-type~\cite{zeng2023transformers}) fitted on the training segment, so $q=H\cdot D$.
The reservoir therefore supplies corrections to a linear baseline. There are $n=512$ training windows.
Exchange is encoded by first differences. Every risk in this section is computed on training windows. Two inputs come from an earlier development stage of the project that used the validation rows: the ridge penalty of the baseline, selected there once and held fixed, and the choice of the operating point $U8$ below. They affect which target and which circuit are studied, not the identities being checked.

The operating point $U8$ has $\tau=(12,24,48)$, $R=2$, $P=3$, coupling encoding with $\theta^0=0.45$ and $\kappa=0.3$, TMSV squeezing $r=0.5$,
readout angle $0.25$, identity reference arm, and $W=8$ equally spaced readout ages $(95,81,68,54,41,27,14,0)$, giving $m=32$ retained modes.
The receivers use $S=2\times10^4$ shots and $\eta_{\det}=0.85$.
Heterodyne uses $452$ selected quadrature second moments; three
homodyne local-oscillator settings resolve $363$ combinations and six
resolve all $452$, with shots split among settings. The threshold-click
receiver uses $116$ events: all $32$ single-mode clicks and $84$
selected two-mode coincidences. Their full noise covariances, rather
than diagonal approximations, enter the risk.
The replay used for all states is built on DeepQuantum~\cite{he2025deepquantum} and agrees with an independent NumPy transfer-matrix oracle to $1.8\times10^{-16}$.
The gate-level derivatives of \cref{app:gate-recursion} agree with Richardson finite differences to $4\times10^{-9}$--$2\times10^{-8}$.
The noise covariances $\Omega_b$ were checked against raw quadrature samples, and every state was checked for physicality.

\paragraph{Exact statements.}\label{sec:numerics-exact}
Every exact statement of \cref{sec:bridge,sec:structure} was checked numerically before use; \cref{app:checks} gives the tables. For gate encoding at $U8$ and at F1-29, discrete Fourier transforms over a full period reconstruct $V_\theta$ at random off-grid points to $4\times10^{-13}$, with out-of-band energy below $2\times10^{-26}$, and the characteristic-function form~\eqref{eq:spectral} reproduces a directly computed risk to $2.4\times10^{-15}$ (\cref{tab:exact}). On small synthetic circuits the source-encoding statements hold to $10^{-14}$, and the path rule predicts exactly which gate--source and source--source products appear: $36$ of $100$ and $239$ of $1620$ candidates, with no miss and no false alarm. The Fock-state bounds hold and are attained for coupling encoding, and the drive-noise damping agrees with direct quadrature to $8\times10^{-12}$ (\cref{tab:general}).

\subsection{Squeezing encoding on the seven tasks}\label{sec:numerics-squeeze}
To use \cref{cor:squeeze,cor:task} on real data, we keep the $U8$ geometry with every gate fixed at its operating angle and move the data into the sources: the TMSV pair injected at step $t$ on rail $\rho$ has squeezing $r_{t,\rho}=0.5+0.3\,G_{t,\rho}$, driven by the first two principal channels. For the heterodyne receiver we compute the exact training-window risk in three ways. The first propagates every window. The second uses the response matrices $A_t,B_t$, obtained from $385$ circuit evaluations, together with the empirical mean and covariance of $(u_t,v_t)$ and their covariance with $y$. The third uses the same response matrices and nothing from the task except $6385$ values of $\mathcal L$ and $113\times q$ values of $\mathcal L_y$ at real arguments~\eqref{eq:mgf}. \Cref{tab:squeeze} lists the resulting risks: all three routes agree to within $10^{-14}$ in the captured variance $R_0-R$ on every dataset.

The path rule fixes the dictionary before any data are used. Of the $192$ sources, $56$ reach a retained output, as the delay semigroup predicts: $36$ on the rail with $\tau=(12,48)$ and $20$ on the rail with $\tau=24$. The anomalous feature $v_t$ needs both modes of a pair to arrive, which with an identity reference arm happens only for the $16$ sources injected at readout steps. The dictionary therefore has $72$ coordinates, and its oracle floor bounds every affine-receiver circuit with this topology and encoding. These are identity checks and certificates on training windows, not a comparison between encodings.

\begin{table}[tb]
\centering\footnotesize
\caption{Squeezing-encoded $U8$ with heterodyne receiver ($\lambda=1$, $S=2\times10^4$). $R_{\rm or}$ is the oracle floor of the $72$-coordinate dictionary. The last two columns give $10^{15}\times$ the maximal relative difference of $R_0-R$ over $\lambda\in\{0.01,1,100\}$ from direct propagation, for the moment route and the $\mathcal L$ route.}
\label{tab:squeeze}
\setlength{\tabcolsep}{3pt}
\begin{tabular*}{\columnwidth}{@{\extracolsep{\fill}}lccccc@{}}
\toprule
Dataset & $R_0$ & $R$ & $R_{\rm or}$ & moments & $\mathcal L$\\ \midrule
ETTh1       & $0.3774$ & $0.3566$ & $0.3048$ & $0$   & $2.7$\\
exchange    & $0.1206$ & $0.1154$ & $0.0984$ & $2.1$ & $2.1$\\
weather     & $0.5219$ & $0.4852$ & $0.3967$ & $2.2$ & $8.0$\\
PM2.5       & $0.5513$ & $0.4884$ & $0.4109$ & $0.9$ & $1.8$\\
solar       & $0.2936$ & $0.2537$ & $0.2006$ & $1.3$ & $0.6$\\
electricity & $0.1986$ & $0.1934$ & $0.1630$ & $3.1$ & $9.3$\\
traffic     & $0.4097$ & $0.3744$ & $0.2752$ & $2.2$ & $1.1$\\ \bottomrule
\end{tabular*}
\end{table}

\subsection{Components decide what a receiver can read}\label{sec:numerics-components}
\Cref{tab:components} tests \cref{thm:components,cor:span,cor:class} on the delay family of \cref{sec:topology-floor}, with the same $256$ windows and all-pairs target; each entry is the oracle risk of the receiver's feature span, a lower bound on its risk at every shot budget and ridge parameter. The output state factorizes exactly over residue classes: cross-class moments $N_{ij}$ and $M_{ij}$ vanish, and cross-class coincidence probabilities and photon-number correlations equal products of marginals to $2\times10^{-16}$. Component-local receivers never cross the floor~\eqref{eq:analytic-floor}, and for $d=3,4,8$ the second moments and the within-class coincidences attain it, so the certificate is tight there. Events that join classes cross it: at $d=8$, where no component-local receiver captures anything, two-mode coincidences and photon-number products reach $R=0$. Squeezing encoding read by all second moments, and displacement encoding read by quadrature means, give $R=1$ to $10^{-12}$ for $d=4$ and for $d=1$, as \cref{cor:class} requires.

At the weather $U8$ operating point the same structure explains the click receiver. The $32$ retained modes form $16$ components, and each of the $68$ readable drive components lies in exactly one of them. The event-span support~\eqref{eq:second-support} predicts $997$ curvature pairs; the numerically nonzero set is exactly these $997$, with no miss and no false alarm. All $759$ pairs beyond the state-layer support join two components through one of the $68$ cross-component coincidence events, and the $1349$ cross-component pairs that no event joins have curvature exactly zero.

\begin{table}[tb]
\centering\scriptsize
\caption{Oracle risk of each receiver's feature span on the all-pairs task (delay family, gate encoding, $256$ windows); $F_d$ is the floor~\eqref{eq:analytic-floor}. Component-local receivers: all second moments (2nd), single-mode clicks (1-clk), single-mode clicks with within-class coincidences (in). Joining receivers: all two-mode coincidences (all), photon-number first and second moments (PN).}
\label{tab:components}
\setlength{\tabcolsep}{2.5pt}
\begin{tabular*}{\columnwidth}{@{\extracolsep{\fill}}rcccccc@{}}
\toprule
$d$ & $F_d$ & 2nd & 1-clk & in & all & PN\\ \midrule
$2$ & $0.571$ & $0.581$ & $0.808$ & $0.579$ & $0.334$ & $0.436$\\
$3$ & $0.750$ & $0.750$ & $0.862$ & $0.750$ & $0.358$ & $0.408$\\
$4$ & $0.857$ & $0.857$ & $0.917$ & $0.857$ & $0.262$ & $0.328$\\
$8$ & $1.000$ & $1.000$ & $1.000$ & $1.000$ & $0.000$ & $0.000$\\ \bottomrule
\end{tabular*}
\end{table}

\subsection{Shot budgets}\label{sec:numerics-shots}
\Cref{prop:shots} agrees with direct evaluation of the risk at $S=10^2,\dots,10^6$ on all seven datasets and four receivers at $U8$, to $7\times10^{-13}$ in the captured variance ($\lambda=0$). \Cref{tab:shots} reads the spectra. The homodyne-type receivers have small function classes that are nearly used up at the operating budget: at $S=2\times10^4$ they realize $66$--$91\%$ of their attainable risk reduction, $90\%$ needs $2\times10^4$--$1.2\times10^5$ shots, and $12$--$16$ of their $32$ varying directions are resolved. The click receiver's larger class, which contains the cross-component products of \cref{cor:span}, lowers $R_\infty$ by $0.015$--$0.09$ on every dataset, but its additional directions have low single-shot signal-to-noise ratio. At $S=2\times10^4$ it realizes only $45$--$67\%$ of its reduction, with $28$--$29$ of $66$ directions resolved, and $90\%$ would need $1.3\times10^6$--$1.3\times10^7$ shots, $30$--$100$ times more. The trade-off between function class and shot cost is read off the spectrum $\{(g_i,\omega_i)\}$ without sampling.

\begin{table*}[t]
\centering\footnotesize
\caption{Exact shot law at $U8$ ($\lambda=0$). For each receiver: infinite-shot risk $R_\infty$, fraction of the attainable reduction $R_0-R_\infty$ realized at $S=2\times10^4$, and the shots $S_{90}$ needed for $90\%$ of it. Homodyne has the same $R_\infty$ as heterodyne; with three settings its $S_{90}$ is within $2\%$ of heterodyne, and with six settings about $20\%$ lower, because it avoids the extra vacuum noise of joint quadrature measurement.}
\label{tab:shots}
\begin{tabular}{@{}lccccccc@{}}
\toprule
 & & \multicolumn{3}{c}{heterodyne} & \multicolumn{3}{c}{threshold clicks}\\ \cmidrule(lr){3-5}\cmidrule(l){6-8}
Dataset & $R_0$ & $R_\infty$ & at $2\times10^4$ & $S_{90}$ & $R_\infty$ & at $2\times10^4$ & $S_{90}$\\ \midrule
ETTh1       & $0.377$ & $0.353$ & $77\%$ & $5.8\times10^{4}$ & $0.297$ & $53\%$ & $5.7\times10^{6}$\\
exchange    & $0.121$ & $0.116$ & $76\%$ & $6.2\times10^{4}$ & $0.101$ & $45\%$ & $5.2\times10^{6}$\\
weather     & $0.522$ & $0.473$ & $77\%$ & $6.6\times10^{4}$ & $0.393$ & $58\%$ & $5.1\times10^{6}$\\
PM2.5       & $0.551$ & $0.484$ & $89\%$ & $2.3\times10^{4}$ & $0.401$ & $66\%$ & $1.5\times10^{6}$\\
solar       & $0.294$ & $0.257$ & $82\%$ & $4.2\times10^{4}$ & $0.205$ & $67\%$ & $1.3\times10^{6}$\\
electricity & $0.199$ & $0.182$ & $74\%$ & $7.3\times10^{4}$ & $0.152$ & $52\%$ & $7.7\times10^{6}$\\
traffic     & $0.410$ & $0.370$ & $66\%$ & $1.2\times10^{5}$ & $0.280$ & $53\%$ & $1.3\times10^{7}$\\ \bottomrule
\end{tabular}
\end{table*}

\subsection{Approximations}\label{sec:numerics-hierarchy}
\paragraph{Taylor hierarchy.}
\Cref{fig:truncation}(a), and \cref{tab:hierarchy} in \cref{app:checks}, show the Taylor truncations~\eqref{eq:taylor} of the state at $U8$ on five datasets, computed along each window's path with fourth-order finite-difference stencils and Richardson extrapolation.
The state error falls from about $27\%$ at first order to $2.6$--$4.5\%$ at second order, $0.8$--$1.2\%$ at third and $0.15$--$0.45\%$ at fourth.
At the feature level the error decreases monotonically for every receiver. At the risk level it occasionally does not, because errors of opposite sign cancel, but it is at most $0.05\%$ at fourth order everywhere.
The residual risk error of the second-order truncation on the traffic task with homodyne-type receivers ($0.26$--$0.28\%$) falls to $0.14\%$ at third order and to zero at fourth order, so it is truncation error.
Truncating in interaction order instead (\cref{tab:interaction}) places exact single components plus exact structurally allowed pairs between the second- and third-order Taylor truncations: about half of the second-order residual comes from higher powers of one or two components, and half from interactions among three or more, which arise because photons pass modulated gates repeatedly. At this operating point ($\kappa|G|\le0.3$) truncating in total order is more economical.

\begin{figure*}[t]
\centering
\includegraphics[width=.92\linewidth]{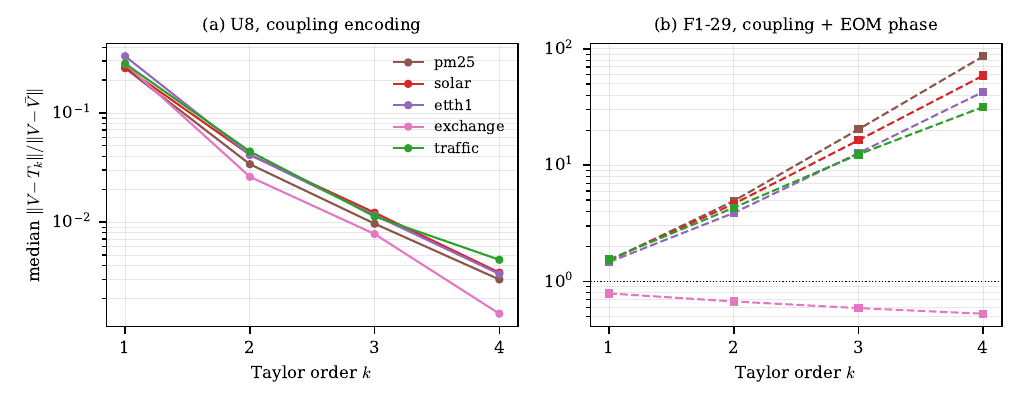}
\caption{Median relative state error of the order-$k$ Taylor truncation. (a) $U8$ (coupling encoding): geometric convergence on five datasets.
(b) F1-29 ($\tau=(1,2,4,8,16,32)$, coupling and EOM phase): accumulated phase puts low orders in the pre-asymptotic regime.}
\label{fig:truncation}
\end{figure*}

\paragraph{Risk accuracy.}\label{sec:numerics-risk}
\Cref{tab:risk} compares the first- and second-order truncations (state second order, receiver linear) with the exact risk at $U8$.
The $\lambda$ used for each receiver is selected on a training fit/hold split and is common to all approximations.
The first-order error exceeds $1\%$ in eight of the $28$ cells. The second-order error is at most $0.37\%$ in every cell.

\begin{table}[tb]
\centering\scriptsize
\caption{Relative risk error (\%) at $U8$: first $\to$ second order. The latter uses a second-order state and linear receiver.}
\label{tab:risk}
\setlength{\tabcolsep}{1.5pt}
\begin{tabular*}{\columnwidth}{@{\extracolsep{\fill}}lcccc@{}}
\toprule
Dataset & click & het. & hom. (3) & hom. (6)\\ \midrule
weather     & $0.22\to0.03$ & $0.28\to0.05$ & $0.28\to0.05$ & $0.28\to0.07$\\
electricity & $0.84\to0.05$ & $1.12\to0.11$ & $1.12\to0.11$ & $1.13\to0.10$\\
traffic     & $1.19\to0.10$ & $0.34\to0.28$ & $0.33\to0.28$ & $0.38\to0.26$\\
solar       & $1.47\to0.20$ & $1.13\to0.20$ & $1.13\to0.20$ & $1.15\to0.18$\\
ETTh1       & $0.28\to0.00$ & $0.02\to0.16$ & $0.02\to0.16$ & $0.01\to0.14$\\
PM2.5       & $0.55\to0.02$ & $0.82\to0.37$ & $0.82\to0.37$ & $0.82\to0.36$\\
exchange    & $0.01\to0.01$ & $0.00\to0.00$ & $0.00\to0.00$ & $0.00\to0.00$\\ \bottomrule
\end{tabular*}
\end{table}

\paragraph{Click receivers from finite dictionaries.}\label{sec:numerics-click}
We tested \cref{prop:click} on the same $116$-event click receiver at $U8$ on all seven datasets (\cref{tab:click}). Over all training windows and all mode sets of up to four modes, the spectra of $Q_J$ lie in $[0.734,1.724]$, inside the predicted interval $[0.731,1.730]$. The error of $P_{J,d}$ stays within its a-priori bound and falls by about a factor of five per degree, and so, less regularly, does the risk error of the order-$d$ approximant: it is at most $0.045\%$ at $d=4$, comparable to the second-order errors of \cref{tab:risk}, and $3\times10^{-5}$ at $d=6$. The risk certificate built from the a-priori errors alone is informative from $d=6$ at $\lambda=100$ and from $d=10$ at $\lambda=1$; with the actual size of the remainder it is informative from $d=2$ and $d=6$, respectively. At $\lambda=0.01$ it becomes informative only at $d=12$, even a posteriori, because $A_k$ is numerically singular and the stability condition reduces to $\|A-A_k\|<\lambda$. The limitation lies in the resolvent bound, not in the approximation. The approximant's features have frequencies in $\mathcal W^{(8d)}$, so this is a convergence statement, not a cheaper evaluator.

\paragraph{Within-family ranking.}\label{sec:numerics-family}
To test whether the second-order truncation tracks the exact risk \emph{across architectures}, we use two families.
F1 has $32$ geometries (delays, projection rank, encoding and readout ages) drawn by stratified sampling from a $256$-cell grid; F2 has $7$ sizes ($4$ to $16$ equally spaced readout ages at the $U8$ geometry).
$\lambda$ is fixed per dataset. We define the validity subfamily as the F1 designs whose median second-order state error is at most $0.2$ and smaller than the first-order error; this secondary analysis was added to the protocol after a smoke run and before the family was evaluated.
\Cref{fig:family} and \cref{tab:family} compare truncated and exact training risks. Inside the validity subfamily the Spearman correlation is $0.95$--$0.99$ and the maximum error is $0.29$--$1.26\%$.
Outside it, the errors are larger (median $0.63\%$ versus $0.12\%$ on solar). On the size family the ranking is exact on all four datasets.
These are statements about the training-segment risk, not about selection on held-out data.

\begin{figure*}[t]
\centering
\includegraphics[width=.9\linewidth]{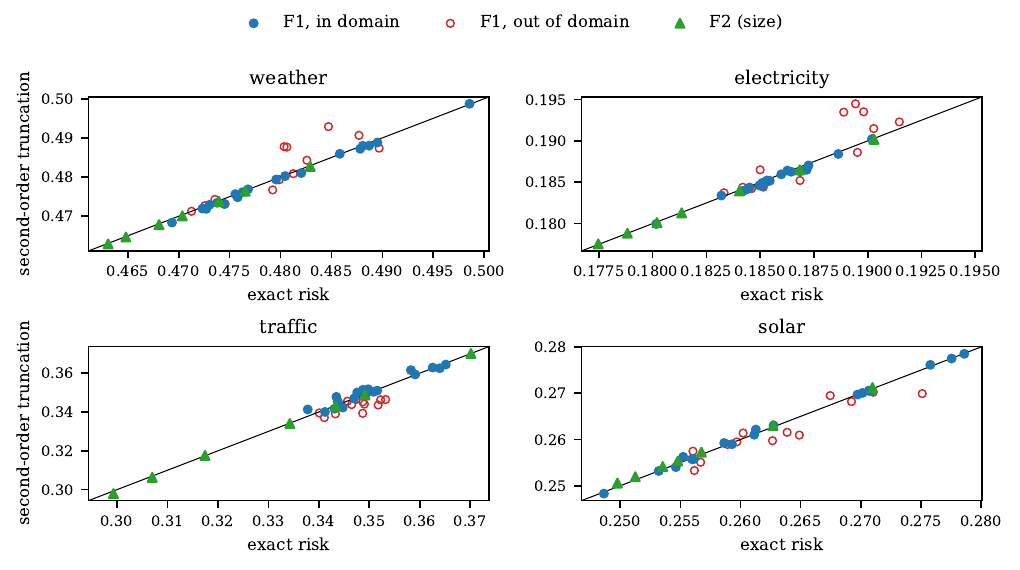}
\caption{Second-order truncation versus exact training risk for the geometry family F1 (filled: validity subfamily; open: outside) and the size family F2.}
\label{fig:family}
\end{figure*}

\begin{table}[tb]
\centering\scriptsize
\caption{Second-order truncation versus exact training risk within candidate families (click receiver, fixed $\lambda$). Entries are Spearman $\rho$ / maximal relative error; the validity subfamily has $19$ designs on every dataset.}
\label{tab:family}
\setlength{\tabcolsep}{2pt}
\begin{tabular*}{\columnwidth}{@{\extracolsep{\fill}}lcccc@{}}
\toprule
 & $\lambda$ & F1, all $32$ & F1, validity & F2, $7$ sizes\\ \midrule
weather     & $1$    & $0.95$ / $1.68\%$ & $0.99$ / $0.29\%$ & $1.00$ / $0.06\%$\\
electricity & $1$    & $0.95$ / $2.66\%$ & $0.99$ / $0.35\%$ & $1.00$ / $0.18\%$\\
traffic     & $0.01$ & $0.78$ / $2.68\%$ & $0.95$ / $1.26\%$ & $1.00$ / $0.44\%$\\
solar       & $1$    & $0.97$ / $1.89\%$ & $0.99$ / $0.40\%$ & $1.00$ / $0.33\%$\\ \bottomrule
\end{tabular*}
\end{table}

\subsection{Structural checks and limits}\label{sec:numerics-support}\label{sec:numerics-limits}
\paragraph{Path rule.}
On $24$ random out-of-sample designs (one to three loops, one or two rails, one to three channels, random operating points), the implemented coupling-gate structural rule classified all $3814$ second-order component pairs without error; first-order directions were checked separately, also without error.
It also classified $400$ stratified pairs at each of two $U8$ operating points without error.
For the stronger, all-drive-value statement, the full interaction function of $200$ randomly chosen non-admissible readable pairs vanishes to $2\times10^{-31}$ (PM2.5) and $6\times10^{-31}$ (traffic), relative to single-component effects, and non-readable components have exactly zero effect.
The sampled designs included coupling-only and combined coupling/phase encoding, but the path-graph implementation enters mixing gates only; these checks therefore certify the coupling-gate rule, not a phase-gate implementation. \Cref{thm:zeros} itself covers phase gates as well.
A single-circuit counterfactual shows what the rule says before fitting any task (\cref{app:checks}, \cref{tab:rail-placement}): the same inter-rail beam splitters placed at the beginning or end of a step act as fixed basis changes, whereas middle placement creates new interaction pairs.

\paragraph{Breakdown of the low-order approximation.}
In design F1-29 ($\tau=(1,2,4,8,16,32)$, coupling plus EOM with $a=1$), the state errors of the Taylor truncations grow with order on four datasets, for example $1.5\to4.9\to20\to86$ on PM2.5 (\cref{fig:truncation}b).
The same delays with coupling encoding only give $E_1=0.27$ and $E_2=0.12$.
The series converges by \eqref{eq:remainder}, but the EOM phase accumulated along paths through many modulated gates places the first orders in the pre-asymptotic regime.
State fidelity and risk accuracy also separate. On the weather task a design with $\tau=(1,12,24,48)$ and both encodings has second-order state error $E_2=7.7$, yet its second-order and exact risks agree to four digits ($0.4713$).

\begin{figure*}[t]
\centering
\includegraphics[width=.98\linewidth]{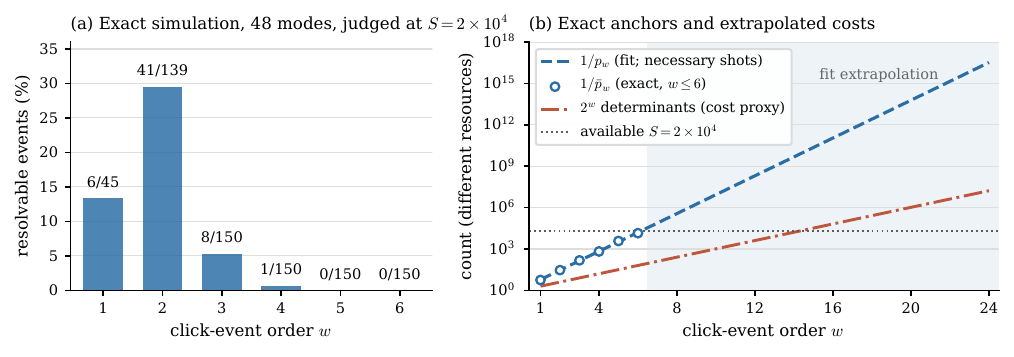}
\caption{A finite-shot boundary for a simulated 48-mode threshold receiver, computed from exact click probabilities on the traffic training windows; no shots are sampled and no hardware is measured. (a) Randomly chosen click events of order $w=1,\ldots,6$ (at most $150$ per order); an event counts as resolvable at $S=2\times10^4$ when its modulation depth $S\operatorname{Var}(p)/[\bar p(1-\bar p)]$ across windows is at least $1$. Labels give resolvable/tested counts. (b) Open circles are exact inverse mean event probabilities. The dashed curve extrapolates the fit $\bar p_w\approx0.206805^w$ beyond $w=6$; $1/\bar p_w$ is only the necessary shot count for one expected event, not sufficient for a resolvable feature. The $2^w$ curve counts determinant evaluations for a $w$-mode marginal and is a cost proxy, \emph{not} a hardness threshold. The shaded region is extrapolation.}
\label{fig:boundary}
\end{figure*}

\section{Discussion and outlook}\label{sec:discussion}\label{sec:design}
The bridge separates what each layer of a TDM reservoir decides. The
topology decides which interactions exist in the state: the path rule
(\cref{thm:zeros}) removes lag--channel interactions for every operating
angle, and independent components remove all interactions between them
(\cref{thm:components}). The encoding decides where nonlinearity enters
and hence the function class (\cref{cor:class}): trigonometric
interactions for gate encoding, an additive per-step model for squeezing
encoding, a linear model for displacement encoding. The receiver decides
which cross-component products a linear readout sees (\cref{cor:span})
and what they cost in shots (\cref{prop:shots}). The task decides which of
the remaining directions matter, through the statistics of
\cref{cor:task}. None of these statements needs a simulation of the
circuit, and each can exclude a design before one is run. The topology
floor is therefore an exclusion certificate, not a ranking of surviving
circuits;
a topology-only preference such as ``$\gcd(\tau)=1$ is better'' can
reverse on a real task (\cref{app:t3}). The exact identity is also
separate from the empirical accuracy of its truncations: when path
phases accumulate, the full circuit must be evaluated.

Squeezing encoding separates the roles of the optics and the receiver.
With the data in the source squeezing and a fixed network,
\cref{cor:squeeze} makes the covariance a linear filter of the per-step
encoded features, so every cross-time product used by a linear readout
must be created at detection by a receiver that is nonlinear in $V$,
for example by photon-number correlations (\cref{prop:gauss}). This
applies to experimentally accessible architectures in which the data
modulate the source squeezing while the network is held fixed. It
concerns only the low-order moments used as features and does not bear
on the sampling hardness of the full output distribution.

\subsection{Quantum--classical boundary}\label{sec:boundary}
TMSV source correlations are nonclassical, but Gaussian states under
Gaussian operations and measurements admit efficient classical
simulation~\cite{bartlett2002efficient,mari2012positive}, and the
finite-frequency path structure has classical wave analogues.
Threshold detection is non-Gaussian and its full sampling problem is
related to Gaussian boson sampling~\cite{hamilton2017gaussian,kruse2019detailed,bulmer2022boundary},
yet in this zero-mean passive model every studied receiver is a
function of the same covariance $V_\theta$. A nonlinear click map can
change the feature class available to a \emph{restricted linear head};
unrestricted nonlinear post-processing of a complete covariance
reconstruction can in principle reproduce those features.
The receivers may still differ in finite-shot efficiency. Bounded-order
click marginals have classical cost
$\mathrm{poly}(n)+m2^w\mathrm{poly}(w)$. In an exactly computed 48-mode
scan with resolvability judged at $S=2\times10^4$, some events up to
order $4$ are resolvable, but none of the randomly chosen order-5 or
order-6 events is (\cref{fig:boundary}a).
Extrapolating the operating-point event decay, as in
\cref{fig:boundary}(b), to $w=16$--$24$ gives
$10^{11}$--$10^{16}$ shots merely for one expected event. This is a
cost/shot mismatch, not a hardness threshold or a general no-go.

Exact GBS simulation can require exponential time even when its working memory is polynomial~\cite{quesada2020exact}, while loss, graph structure and low entanglement define distinct regimes with more efficient classical approximations~\cite{qi2020regimes,oh2022graph,liu2026efficient}.
These complexity results concern sampling or state simulation, not an established advantage in the prediction task studied here.

\subsection{Outlook}
The bridge turns questions about TDM reservoirs that are usually settled
by trial and error into questions about path structure, encoding, receiver
and task statistics. We see four directions in which it can shape the
field.

\paragraph{Predicting and designing TDM chips.}
The lower bounds from topology, encoding and receiver hold for every
operating point, and features whose path amplitudes are known in closed
form, as in the delay-family incumbent, give upper bounds that depend only
on task statistics, because at $\lambda=0$ removing features cannot lower
the population risk. Together they bracket the risk of a candidate
architecture before it is built or simulated, and the second-order
surrogate with a certified error ranks the architectures whose brackets
overlap. Developed into a predictor, this guidance can select TDM chips
tailored to a given dataset, or chips that perform well across a family
of tasks, within the constraints of a photonic platform. The same loop
architectures underlie programmable Gaussian boson sampling and
time-multiplexed cluster states~\cite{madsen2022quantum,yu2023universal,larsen2019deterministic,asavanant2019generation},
so a principled account of what a loop network can compute also informs
the design of photonic quantum processors more broadly.

\paragraph{Photonic attention.}
Attention mechanisms combine pairs of history positions through bilinear
scores. The bridge identifies exactly which such products a photonic
module offers a linear readout: products inside a path component come
from the optics, and products across components come from the
receiver's coincidence events (\cref{thm:components,cor:span}). The
event set therefore acts as a selector of interacting lags, and the
event-span support certifies which interactions a given module can
represent, while \cref{prop:shots} prices them in shots. This provides a
design language for photonic attention-like architectures.

\paragraph{More quantum resources.}
The architectures studied here use Gaussian sources and passive optics.
\Cref{thm:poly} already covers Fock-state inputs, whose photon number
bounds the accessible spectrum (\cref{cor:fock}). Natural next steps are
non-Gaussian sources, photon-number-resolving and adaptive measurements,
and measurement-conditioned feedback or Kerr-type interactions; the last
two break the polynomial structure and call for new analysis. The bridge
makes precise how each added resource changes the function class and the
shot cost for a given task.

\paragraph{Open questions.}
The identities describe the population risk on the training distribution;
a finite-sample or held-out theory requires additional assumptions.
\Cref{prop:shots} compares receivers at equal shots; comparisons at equal
optical energy, and against receivers optimized over all measurements,
remain open.

\section*{Author contributions}
Y.Z.\ designed the architecture, developed and proved the theory, and
carried out its numerical verification. T.Z., Y.J.\ and T.C.\ contributed
to the encoding at the interface with the Gaussian-boson-sampling
receiver. H.T.\ conceived the research direction and supervised the
development and refinement of the project.

\bibliographystyle{unsrtnat}
\bibliography{refs}

\begin{thebibliography}{64}
\providecommand{\natexlab}[1]{#1}
\providecommand{\url}[1]{\texttt{#1}}
\expandafter\ifx\csname urlstyle\endcsname\relax
  \providecommand{\doi}[1]{doi: #1}\else
  \providecommand{\doi}{doi: \begingroup \urlstyle{rm}\Url}\fi

\bibitem[Jaeger and Haas(2004)]{jaeger2004harnessing}
Herbert Jaeger and Harald Haas.
\newblock Harnessing nonlinearity: Predicting chaotic systems and saving energy
  in wireless communication.
\newblock \emph{Science}, 304\penalty0 (5667):\penalty0 78--80, 2004.
\newblock \doi{10.1126/science.1091277}.

\bibitem[Luko{\v{s}}evi{\v{c}}ius and Jaeger(2009)]{lukosevicius2009reservoir}
Mantas Luko{\v{s}}evi{\v{c}}ius and Herbert Jaeger.
\newblock Reservoir computing approaches to recurrent neural network training.
\newblock \emph{Computer Science Review}, 3\penalty0 (3):\penalty0 127--149,
  2009.
\newblock \doi{10.1016/j.cosrev.2009.03.005}.

\bibitem[Appeltant et~al.(2011)Appeltant, Soriano, Van~der Sande, Danckaert,
  Massar, Dambre, Schrauwen, Mirasso, and Fischer]{appeltant2011delay}
Lennert Appeltant, Miguel~C. Soriano, Guy Van~der Sande, Jan Danckaert, Serge
  Massar, Joni Dambre, Benjamin Schrauwen, Claudio~R. Mirasso, and Ingo
  Fischer.
\newblock Information processing using a single dynamical node as complex
  system.
\newblock \emph{Nature Communications}, 2:\penalty0 468, 2011.
\newblock \doi{10.1038/ncomms1476}.

\bibitem[Paquot et~al.(2012)Paquot, Duport, Smerieri, Dambre, Schrauwen,
  Haelterman, and Massar]{paquot2012optoelectronic}
Y.~Paquot, F.~Duport, A.~Smerieri, J.~Dambre, B.~Schrauwen, M.~Haelterman, and
  S.~Massar.
\newblock Optoelectronic reservoir computing.
\newblock \emph{Scientific Reports}, 2:\penalty0 287, 2012.
\newblock \doi{10.1038/srep00287}.

\bibitem[Brunner et~al.(2013)Brunner, Soriano, Mirasso, and
  Fischer]{brunner2013parallel}
Daniel Brunner, Miguel~C. Soriano, Claudio~R. Mirasso, and Ingo Fischer.
\newblock Parallel photonic information processing at gigabyte per second data
  rates using transient states.
\newblock \emph{Nature Communications}, 4:\penalty0 1364, 2013.
\newblock \doi{10.1038/ncomms2368}.

\bibitem[Vandoorne et~al.(2014)Vandoorne, Mechet, Van~Vaerenbergh, Fiers,
  Morthier, Verstraeten, Schrauwen, Dambre, and
  Bienstman]{vandoorne2014experimental}
Kristof Vandoorne, Pauline Mechet, Thomas Van~Vaerenbergh, Martin Fiers, Geert
  Morthier, David Verstraeten, Benjamin Schrauwen, Joni Dambre, and Peter
  Bienstman.
\newblock Experimental demonstration of reservoir computing on a silicon
  photonics chip.
\newblock \emph{Nature Communications}, 5:\penalty0 3541, 2014.
\newblock \doi{10.1038/ncomms4541}.

\bibitem[Fujii and Nakajima(2017)]{fujii2017harnessing}
Keisuke Fujii and Kohei Nakajima.
\newblock Harnessing disordered-ensemble quantum dynamics for machine learning.
\newblock \emph{Physical Review Applied}, 8:\penalty0 024030, 2017.
\newblock \doi{10.1103/PhysRevApplied.8.024030}.

\bibitem[Nakajima et~al.(2019)Nakajima, Fujii, Negoro, Mitarai, and
  Kitagawa]{nakajima2019boosting}
Kohei Nakajima, Keisuke Fujii, Makoto Negoro, Kosuke Mitarai, and Masahiro
  Kitagawa.
\newblock Boosting computational power through spatial multiplexing in quantum
  reservoir computing.
\newblock \emph{Physical Review Applied}, 11\penalty0 (3):\penalty0 034021,
  2019.
\newblock \doi{10.1103/PhysRevApplied.11.034021}.

\bibitem[Mujal et~al.(2021)Mujal, Mart{\'\i}nez-Pe{\~n}a, Nokkala,
  Garc{\'\i}a-Beni, Giorgi, Soriano, and Zambrini]{mujal2021opportunities}
Pere Mujal, Rodrigo Mart{\'\i}nez-Pe{\~n}a, Johannes Nokkala, Jorge
  Garc{\'\i}a-Beni, Gian~Luca Giorgi, Miguel~C. Soriano, and Roberta Zambrini.
\newblock Opportunities in quantum reservoir computing and extreme learning
  machines.
\newblock \emph{Advanced Quantum Technologies}, 4\penalty0 (8):\penalty0
  2100027, 2021.
\newblock \doi{10.1002/qute.202100027}.

\bibitem[Nokkala et~al.(2021)Nokkala, Mart{\'\i}nez-Pe{\~n}a, Giorgi, Parigi,
  Soriano, and Zambrini]{nokkala2021gaussian}
Johannes Nokkala, Rodrigo Mart{\'\i}nez-Pe{\~n}a, Gian~Luca Giorgi, Valentina
  Parigi, Miguel~C. Soriano, and Roberta Zambrini.
\newblock Gaussian states of continuous-variable quantum systems provide
  universal and versatile reservoir computing.
\newblock \emph{Communications Physics}, 4:\penalty0 53, 2021.
\newblock \doi{10.1038/s42005-021-00556-w}.

\bibitem[Garc{\'\i}a-Beni et~al.(2023)Garc{\'\i}a-Beni, Giorgi, Soriano, and
  Zambrini]{garciabeni2023scalable}
Jorge Garc{\'\i}a-Beni, Gian~Luca Giorgi, Miguel~C. Soriano, and Roberta
  Zambrini.
\newblock Scalable photonic platform for real-time quantum reservoir computing.
\newblock \emph{Physical Review Applied}, 20:\penalty0 014051, 2023.
\newblock \doi{10.1103/PhysRevApplied.20.014051}.

\bibitem[Motes et~al.(2014)Motes, Gilchrist, Dowling, and
  Rohde]{motes2014scalable}
Keith~R. Motes, Alexei Gilchrist, Jonathan~P. Dowling, and Peter~P. Rohde.
\newblock Scalable boson sampling with time-bin encoding using a loop-based
  architecture.
\newblock \emph{Physical Review Letters}, 113:\penalty0 120501, 2014.
\newblock \doi{10.1103/PhysRevLett.113.120501}.

\bibitem[Menicucci(2011)]{menicucci2011temporal}
Nicolas~C. Menicucci.
\newblock Temporal-mode continuous-variable cluster states using linear optics.
\newblock \emph{Physical Review A}, 83:\penalty0 062314, 2011.
\newblock \doi{10.1103/PhysRevA.83.062314}.

\bibitem[Yokoyama et~al.(2013)Yokoyama, Ukai, Armstrong, Sornphiphatphong,
  Kaji, Suzuki, Yoshikawa, Yonezawa, Menicucci, and
  Furusawa]{yokoyama2013ultra}
Shota Yokoyama, Ryuji Ukai, Seiji~C. Armstrong, Chanond Sornphiphatphong,
  Toshiyuki Kaji, Shigenari Suzuki, Jun-ichi Yoshikawa, Hidehiro Yonezawa,
  Nicolas~C. Menicucci, and Akira Furusawa.
\newblock Ultra-large-scale continuous-variable cluster states multiplexed in
  the time domain.
\newblock \emph{Nature Photonics}, 7:\penalty0 982--986, 2013.
\newblock \doi{10.1038/nphoton.2013.287}.

\bibitem[Yoshikawa et~al.(2016)Yoshikawa, Yokoyama, Kaji, Sornphiphatphong,
  Shiozawa, Makino, and Furusawa]{yoshikawa2016million}
Jun-ichi Yoshikawa, Shota Yokoyama, Toshiyuki Kaji, Chanond Sornphiphatphong,
  Yu~Shiozawa, Kenzo Makino, and Akira Furusawa.
\newblock Generation of one-million-mode continuous-variable cluster state by
  unlimited time-domain multiplexing.
\newblock \emph{APL Photonics}, 1:\penalty0 060801, 2016.
\newblock \doi{10.1063/1.4962732}.

\bibitem[Asavanant et~al.(2019)Asavanant, Shiozawa, Yokoyama,
  Charoensombutamon, et~al.]{asavanant2019generation}
Warit Asavanant, Yu~Shiozawa, Shota Yokoyama, Baramee Charoensombutamon, et~al.
\newblock Generation of time-domain-multiplexed two-dimensional cluster state.
\newblock \emph{Science}, 366\penalty0 (6463):\penalty0 373--376, 2019.
\newblock \doi{10.1126/science.aay2645}.

\bibitem[Larsen et~al.(2019{\natexlab{a}})Larsen, Guo, Breum,
  Neergaard-Nielsen, and Andersen]{larsen2019deterministic}
Mikkel~V. Larsen, Xueshi Guo, Casper~R. Breum, Jonas~S. Neergaard-Nielsen, and
  Ulrik~L. Andersen.
\newblock Deterministic generation of a two-dimensional cluster state.
\newblock \emph{Science}, 366\penalty0 (6463):\penalty0 369--372,
  2019{\natexlab{a}}.
\newblock \doi{10.1126/science.aay4354}.

\bibitem[Madsen et~al.(2022)Madsen, Laudenbach, Askarani,
  et~al.]{madsen2022quantum}
Lars~S. Madsen, Fabian Laudenbach, Mohsen~Falamarzi Askarani, et~al.
\newblock Quantum computational advantage with a programmable photonic
  processor.
\newblock \emph{Nature}, 606\penalty0 (7912):\penalty0 75--81, 2022.
\newblock \doi{10.1038/s41586-022-04725-x}.

\bibitem[Yu et~al.(2023)Yu, Zhong, Fang, Patel, Li, et~al.]{yu2023universal}
Shang Yu, Zhi-Peng Zhong, Yuhua Fang, Raj~B. Patel, Qing-Peng Li, et~al.
\newblock A universal programmable gaussian boson sampler for drug discovery.
\newblock \emph{Nature Computational Science}, 3:\penalty0 839--848, 2023.
\newblock \doi{10.1038/s43588-023-00526-y}.

\bibitem[Fu et~al.(2026)Fu, Shen, Hu, He, Nie, et~al.]{fu2026chip}
Yu-Xuan Fu, He-Yu Shen, Ke-Ming Hu, Jun-Jie He, Yun-Long Nie, et~al.
\newblock A chip-scale space-time multiplexed {Gaussian} boson sampling
  processor beyond 10,000 photons.
\newblock arXiv:2609.11922, 2026.
\newblock URL \url{https://arxiv.org/abs/2609.11922}.

\bibitem[Enomoto et~al.(2021)Enomoto, Yonezu, Mitsuhashi, Takase, and
  Takeda]{enomoto2021programmable}
Yutaro Enomoto, Kazuma Yonezu, Yosuke Mitsuhashi, Kan Takase, and Shuntaro
  Takeda.
\newblock Programmable and sequential gaussian gates in a loop-based
  single-mode photonic quantum processor.
\newblock \emph{Science Advances}, 7:\penalty0 eabj6624, 2021.
\newblock \doi{10.1126/sciadv.abj6624}.

\bibitem[Yonezu et~al.(2023)Yonezu, Enomoto, Yoshida, and
  Takeda]{yonezu2023time}
Kazuma Yonezu, Yutaro Enomoto, Takato Yoshida, and Shuntaro Takeda.
\newblock Time-domain universal linear-optical operations for universal quantum
  information processing.
\newblock \emph{Physical Review Letters}, 131:\penalty0 040601, 2023.
\newblock \doi{10.1103/PhysRevLett.131.040601}.

\bibitem[Dambre et~al.(2012)Dambre, Verstraeten, Schrauwen, and
  Massar]{dambre2012information}
Joni Dambre, David Verstraeten, Benjamin Schrauwen, and Serge Massar.
\newblock Information processing capacity of dynamical systems.
\newblock \emph{Scientific Reports}, 2:\penalty0 514, 2012.
\newblock \doi{10.1038/srep00514}.

\bibitem[Grigoryeva et~al.(2015)Grigoryeva, Henriques, Larger, and
  Ortega]{grigoryeva2015optimal}
Lyudmila Grigoryeva, Julie Henriques, Laurent Larger, and Juan-Pablo Ortega.
\newblock Optimal nonlinear information processing capacity in delay-based
  reservoir computers.
\newblock \emph{Scientific Reports}, 5:\penalty0 12858, 2015.
\newblock \doi{10.1038/srep12858}.

\bibitem[Garc{\'\i}a-Beni et~al.(2024)Garc{\'\i}a-Beni, Giorgi, Soriano, and
  Zambrini]{garciabeni2024squeezing}
Jorge Garc{\'\i}a-Beni, Gian~Luca Giorgi, Miguel~C. Soriano, and Roberta
  Zambrini.
\newblock Squeezing as a resource for time series processing in quantum
  reservoir computing.
\newblock \emph{Optics Express}, 32:\penalty0 6733--6747, 2024.
\newblock \doi{10.1364/OE.507684}.

\bibitem[Lee et~al.(2024)Lee, Wei, Stenning, Gartside, Prestwood, Seki, Aqeel,
  Karube, Kanazawa, Taguchi, Back, Tokura, Branford, and
  Kurebayashi]{lee2024taskadaptive}
Oscar Lee, Tianyi Wei, Kilian~D. Stenning, Jack~C. Gartside, Dan Prestwood,
  Shinichiro Seki, Aisha Aqeel, Kosuke Karube, Naoya Kanazawa, Yasujiro
  Taguchi, Christian Back, Yoshinori Tokura, Will~R. Branford, and Hidekazu
  Kurebayashi.
\newblock Task-adaptive physical reservoir computing.
\newblock \emph{Nature Materials}, 23:\penalty0 79--87, 2024.
\newblock \doi{10.1038/s41563-023-01698-8}.

\bibitem[Boyd and Chua(1985)]{boyd1985fading}
Stephen Boyd and Leon~O. Chua.
\newblock Fading memory and the problem of approximating nonlinear operators
  with {V}olterra series.
\newblock \emph{IEEE Transactions on Circuits and Systems}, 32\penalty0
  (11):\penalty0 1150--1161, 1985.
\newblock \doi{10.1109/TCS.1985.1085649}.

\bibitem[Gauthier et~al.(2021)Gauthier, Bollt, Griffith, and
  Barbosa]{gauthier2021next}
Daniel~J. Gauthier, Erik Bollt, Aaron Griffith, and Wendson A.~S. Barbosa.
\newblock Next generation reservoir computing.
\newblock \emph{Nature Communications}, 12:\penalty0 5564, 2021.
\newblock \doi{10.1038/s41467-021-25801-2}.

\bibitem[Grigoryeva and Ortega(2018)]{grigoryeva2018universal}
Lyudmila Grigoryeva and Juan-Pablo Ortega.
\newblock Echo state networks are universal.
\newblock \emph{Neural Networks}, 108:\penalty0 495--508, 2018.
\newblock \doi{10.1016/j.neunet.2018.08.025}.

\bibitem[Gonon et~al.(2026)Gonon, Mart{\'\i}nez-Pe{\~n}a, and
  Ortega]{gonon2026feedback}
Lukas Gonon, Rodrigo Mart{\'\i}nez-Pe{\~n}a, and Juan-Pablo Ortega.
\newblock Feedback-driven recurrent quantum neural network universality.
\newblock In \emph{The Fourteenth International Conference on Learning
  Representations}, 2026.
\newblock URL \url{https://arxiv.org/abs/2506.16332}.

\bibitem[Sch{\"u}tte et~al.(2025)Sch{\"u}tte, G{\"o}tting, M{\"u}ntinga, List,
  and Gies]{schuette2025expressive}
Nils-Erik Sch{\"u}tte, Niclas G{\"o}tting, Hauke M{\"u}ntinga, Meike List, and
  Christopher Gies.
\newblock Expressive limits of quantum reservoir computing, 2025.
\newblock URL \url{https://arxiv.org/abs/2501.15528}.

\bibitem[Paparelle et~al.(2026)Paparelle, Henaff, Garc{\'\i}a-Beni, Gillet,
  Montesinos, Giorgi, Soriano, Zambrini, and Parigi]{paparelle2026memory}
Iris Paparelle, Johan Henaff, Jorge Garc{\'\i}a-Beni, {\'E}milie Gillet, Daniel
  Montesinos, Gian~Luca Giorgi, Miguel~C. Soriano, Roberta Zambrini, and
  Valentina Parigi.
\newblock Experimental memory control in continuous-variable optical quantum
  reservoir computing.
\newblock \emph{Nature Photonics}, 20:\penalty0 413--420, 2026.
\newblock \doi{10.1038/s41566-026-01880-9}.

\bibitem[Di~Bartolo et~al.(2026)Di~Bartolo, Piacentini, Ceccarelli, Corrielli,
  Osellame, Cimini, and Sciarrino]{dibartolo2026forecasting}
Rosario Di~Bartolo, Simone Piacentini, Francesco Ceccarelli, Giacomo Corrielli,
  Roberto Osellame, Valeria Cimini, and Fabio Sciarrino.
\newblock Time-series forecasting with multiphoton quantum states and
  integrated photonics.
\newblock \emph{npj Quantum Information}, 12:\penalty0 91, 2026.
\newblock \doi{10.1038/s41534-026-01236-9}.

\bibitem[Cimini et~al.(2026)Cimini, Sohoni, Presutti, Malia, Ma, Yanagimoto,
  Wang, Onodera, Wright, and McMahon]{cimini2026large}
Valeria Cimini, Mandar~M. Sohoni, Federico Presutti, Benjamin~K. Malia,
  Shi-Yuan Ma, Ryotatsu Yanagimoto, Tianyu Wang, Tatsuhiro Onodera, Logan~G.
  Wright, and Peter~L. McMahon.
\newblock Large-scale quantum reservoir computing using a gaussian boson
  sampler.
\newblock \emph{npj Quantum Information}, 12:\penalty0 104, 2026.
\newblock \doi{10.1038/s41534-026-01251-w}.

\bibitem[Kobayashi et~al.(2024)Kobayashi, Fujii, and
  Yamamoto]{kobayashi2024feedback}
Kaito Kobayashi, Keisuke Fujii, and Naoki Yamamoto.
\newblock Feedback-driven quantum reservoir computing for time-series analysis.
\newblock \emph{PRX Quantum}, 5:\penalty0 040325, 2024.
\newblock \doi{10.1103/PRXQuantum.5.040325}.

\bibitem[Schuld et~al.(2021)Schuld, Sweke, and Meyer]{schuld2021effect}
Maria Schuld, Ryan Sweke, and Johannes~Jakob Meyer.
\newblock Effect of data encoding on the expressive power of variational
  quantum-machine-learning models.
\newblock \emph{Physical Review A}, 103\penalty0 (3):\penalty0 032430, 2021.
\newblock \doi{10.1103/PhysRevA.103.032430}.

\bibitem[P{\'e}rez-Salinas et~al.(2020)P{\'e}rez-Salinas, Cervera-Lierta,
  Gil-Fuster, and Latorre]{perezsalinas2020data}
Adri{\'a}n P{\'e}rez-Salinas, Alba Cervera-Lierta, Elies Gil-Fuster, and
  Jos{\'e}~I. Latorre.
\newblock Data re-uploading for a universal quantum classifier.
\newblock \emph{Quantum}, 4:\penalty0 226, 2020.
\newblock \doi{10.22331/q-2020-02-06-226}.

\bibitem[Yu et~al.(2022)Yu, Yao, Li, and Wang]{yu2022power}
Zhan Yu, Hongshun Yao, Mujin Li, and Xin Wang.
\newblock Power and limitations of single-qubit native quantum neural networks.
\newblock In \emph{Advances in Neural Information Processing Systems},
  volume~35, pages 27810--27823, 2022.

\bibitem[Landman et~al.(2023)Landman, Thabet, Dalyac, Mhiri, and
  Kashefi]{landman2023classically}
Jonas Landman, Slimane Thabet, Constantin Dalyac, Hela Mhiri, and Elham
  Kashefi.
\newblock Classically approximating variational quantum machine learning with
  random fourier features.
\newblock In \emph{The Eleventh International Conference on Learning
  Representations}, 2023.
\newblock URL \url{https://openreview.net/forum?id=ymFhZxw70uz}.

\bibitem[Gan et~al.(2022)Gan, Leykam, and Angelakis]{gan2022fock}
Beng~Yee Gan, Daniel Leykam, and Dimitris~G. Angelakis.
\newblock Fock state-enhanced expressivity of quantum machine learning models.
\newblock \emph{EPJ Quantum Technology}, 9:\penalty0 16, 2022.
\newblock \doi{10.1140/epjqt/s40507-022-00135-0}.

\bibitem[Nerenberg et~al.(2025)Nerenberg, Neill, Marcucci, and
  Faccio]{nerenberg2025photon}
Sam Nerenberg, Oliver~D. Neill, Giulia Marcucci, and Daniele Faccio.
\newblock Photon number-resolving quantum reservoir computing.
\newblock \emph{Optica Quantum}, 3\penalty0 (2):\penalty0 201--210, 2025.
\newblock \doi{10.1364/OPTICAQ.553294}.

\bibitem[Hamilton et~al.(2017)Hamilton, Kruse, Sansoni, Barkhofen, Silberhorn,
  and Jex]{hamilton2017gaussian}
Craig~S. Hamilton, Regina Kruse, Linda Sansoni, Sonja Barkhofen, Christine
  Silberhorn, and Igor Jex.
\newblock Gaussian boson sampling.
\newblock \emph{Physical Review Letters}, 119:\penalty0 170501, 2017.
\newblock \doi{10.1103/PhysRevLett.119.170501}.

\bibitem[Kruse et~al.(2019)Kruse, Hamilton, Sansoni, Barkhofen, Silberhorn, and
  Jex]{kruse2019detailed}
Regina Kruse, Craig~S. Hamilton, Linda Sansoni, Sonja Barkhofen, Christine
  Silberhorn, and Igor Jex.
\newblock Detailed study of {G}aussian boson sampling.
\newblock \emph{Physical Review A}, 100\penalty0 (3):\penalty0 032326, 2019.
\newblock \doi{10.1103/PhysRevA.100.032326}.

\bibitem[Quesada et~al.(2018)Quesada, Arrazola, and
  Killoran]{quesada2018gaussian}
Nicol{\'a}s Quesada, Juan~Miguel Arrazola, and Nathan Killoran.
\newblock Gaussian boson sampling using threshold detectors.
\newblock \emph{Physical Review A}, 98:\penalty0 062322, 2018.
\newblock \doi{10.1103/PhysRevA.98.062322}.

\bibitem[Bulmer et~al.(2022{\natexlab{a}})Bulmer, Paesani, Chadwick, and
  Quesada]{bulmer2022threshold}
J.~F.~F. Bulmer, S.~Paesani, R.~S. Chadwick, and N.~Quesada.
\newblock Threshold detection statistics of bosonic states.
\newblock \emph{Physical Review A}, 106:\penalty0 043712, 2022{\natexlab{a}}.
\newblock \doi{10.1103/PhysRevA.106.043712}.

\bibitem[He et~al.(2025)He, Hu, Zhu, Yan, Liang, Zhao, Wang, Guo, Lan, Shang,
  Yin, Jiang, Yang, Tang, and Jin]{he2025deepquantum}
Jun-Jie He, Ke-Ming Hu, Yu-Ze Zhu, Guan-Ju Yan, Shu-Yi Liang, Xiang Zhao, Ding
  Wang, Fei-Xiang Guo, Ze-Feng Lan, Xiao-Wen Shang, Zi-Ming Yin, Xin-Yang
  Jiang, Lin Yang, Hao Tang, and Xian-Min Jin.
\newblock {DeepQuantum}: A {PyTorch}-based software platform for quantum
  machine learning and photonic quantum computing.
\newblock arXiv:2512.18995, 2025.
\newblock URL \url{https://arxiv.org/abs/2512.18995}.

\bibitem[Larsen et~al.(2019{\natexlab{b}})Larsen, Guo, Breum,
  Neergaard-Nielsen, and Andersen]{larsen2019epr}
Mikkel~V. Larsen, Xueshi Guo, Casper~R. Breum, Jonas~S. Neergaard-Nielsen, and
  Ulrik~L. Andersen.
\newblock Fiber-coupled epr-state generation using a single temporally
  multiplexed squeezed light source.
\newblock \emph{npj Quantum Information}, 5:\penalty0 46, 2019{\natexlab{b}}.
\newblock \doi{10.1038/s41534-019-0170-y}.

\bibitem[Tomoda et~al.(2023)Tomoda, Yoshida, Kashiwazaki, Umeki, Enomoto, and
  Takeda]{tomoda2023programmable}
Hiroko Tomoda, Takato Yoshida, Takahiro Kashiwazaki, Takeshi Umeki, Yutaro
  Enomoto, and Shuntaro Takeda.
\newblock Programmable time-multiplexed squeezed light source.
\newblock \emph{Optics Express}, 31:\penalty0 2161--2176, 2023.
\newblock \doi{10.1364/OE.476025}.

\bibitem[Clements et~al.(2016)Clements, Humphreys, Metcalf, Kolthammer, and
  Walmsley]{clements2016optimal}
William~R. Clements, Peter~C. Humphreys, Benjamin~J. Metcalf, W.~Steven
  Kolthammer, and Ian~A. Walmsley.
\newblock Optimal design for universal multiport interferometers.
\newblock \emph{Optica}, 3:\penalty0 1460--1465, 2016.
\newblock \doi{10.1364/OPTICA.3.001460}.

\bibitem[Weedbrook et~al.(2012)Weedbrook, Pirandola, Garc{\'\i}a-Patr{\'o}n,
  Cerf, Ralph, Shapiro, and Lloyd]{weedbrook2012gaussian}
Christian Weedbrook, Stefano Pirandola, Ra{\'u}l Garc{\'\i}a-Patr{\'o}n,
  Nicolas~J. Cerf, Timothy~C. Ralph, Jeffrey~H. Shapiro, and Seth Lloyd.
\newblock Gaussian quantum information.
\newblock \emph{Reviews of Modern Physics}, 84\penalty0 (2):\penalty0 621--669,
  2012.
\newblock \doi{10.1103/RevModPhys.84.621}.

\bibitem[Asavanant et~al.(2021)Asavanant, Charoensombutamon, Yokoyama, Ebihara,
  Nakamura, Alexander, Endo, Yoshikawa, Menicucci, Yonezawa, and
  Furusawa]{asavanant2021timedomain}
Warit Asavanant, Baramee Charoensombutamon, Shota Yokoyama, Takeru Ebihara,
  Tomohiro Nakamura, Rafael~N. Alexander, Mamoru Endo, Jun-ichi Yoshikawa,
  Nicolas~C. Menicucci, Hidehiro Yonezawa, and Akira Furusawa.
\newblock Time-domain-multiplexed measurement-based quantum operations with
  25-mhz clock frequency.
\newblock \emph{Physical Review Applied}, 16:\penalty0 034005, 2021.
\newblock \doi{10.1103/PhysRevApplied.16.034005}.

\bibitem[Wu et~al.(2021)Wu, Xu, Wang, and Long]{wu2021autoformer}
Haixu Wu, Jiehui Xu, Jianmin Wang, and Mingsheng Long.
\newblock Autoformer: Decomposition transformers with auto-correlation for
  long-term series forecasting.
\newblock In \emph{Advances in Neural Information Processing Systems},
  volume~34, pages 22419--22430, 2021.

\bibitem[Lai et~al.(2018)Lai, Chang, Yang, and Liu]{lai2018modeling}
Guokun Lai, Wei-Cheng Chang, Yiming Yang, and Hanxiao Liu.
\newblock Modeling long- and short-term temporal patterns with deep neural
  networks.
\newblock In \emph{Proceedings of the 41st International ACM SIGIR Conference
  on Research and Development in Information Retrieval}, pages 95--104, 2018.
\newblock \doi{10.1145/3209978.3210006}.

\bibitem[Zhou et~al.(2021)Zhou, Zhang, Peng, Zhang, et~al.]{zhou2021informer}
Haoyi Zhou, Shanghang Zhang, Jieqi Peng, Shuai Zhang, et~al.
\newblock Informer: Beyond efficient transformer for long sequence time-series
  forecasting.
\newblock \emph{Proceedings of the AAAI Conference on Artificial Intelligence},
  35\penalty0 (12):\penalty0 11106--11115, 2021.
\newblock \doi{10.1609/aaai.v35i12.17325}.

\bibitem[Liang et~al.(2015)Liang, Zou, Guo, Li, et~al.]{liang2015pm25}
Xuan Liang, Tao Zou, Bin Guo, Shuo Li, et~al.
\newblock Assessing {B}eijing's {PM}$_{2.5}$ pollution: Severity, weather
  impact, {APEC} and winter heating.
\newblock \emph{Proceedings of the Royal Society A}, 471\penalty0
  (2182):\penalty0 20150257, 2015.
\newblock \doi{10.1098/rspa.2015.0257}.

\bibitem[Zeng et~al.(2023)Zeng, Chen, Zhang, and Xu]{zeng2023transformers}
Ailing Zeng, Muxi Chen, Lei Zhang, and Qiang Xu.
\newblock Are transformers effective for time series forecasting?
\newblock \emph{Proceedings of the AAAI Conference on Artificial Intelligence},
  37\penalty0 (9):\penalty0 11121--11128, 2023.
\newblock \doi{10.1609/aaai.v37i9.26317}.

\bibitem[Bartlett et~al.(2002)Bartlett, Sanders, Braunstein, and
  Nemoto]{bartlett2002efficient}
Stephen~D. Bartlett, Barry~C. Sanders, Samuel~L. Braunstein, and Kae Nemoto.
\newblock Efficient classical simulation of continuous variable quantum
  information processes.
\newblock \emph{Physical Review Letters}, 88\penalty0 (9):\penalty0 097904,
  2002.
\newblock \doi{10.1103/PhysRevLett.88.097904}.

\bibitem[Mari and Eisert(2012)]{mari2012positive}
Andrea Mari and Jens Eisert.
\newblock Positive {W}igner functions render classical simulation of quantum
  computation efficient.
\newblock \emph{Physical Review Letters}, 109\penalty0 (23):\penalty0 230503,
  2012.
\newblock \doi{10.1103/PhysRevLett.109.230503}.

\bibitem[Bulmer et~al.(2022{\natexlab{b}})Bulmer, Bell, Chadwick, Jones, Moise,
  Rigazzi, Thorbecke, Haus, Van~Vaerenbergh, Patel, Walmsley, and
  Laing]{bulmer2022boundary}
Jacob F.~F. Bulmer, Bryn~A. Bell, Rachel~S. Chadwick, Alex~E. Jones, Diana
  Moise, Alessandro Rigazzi, Jan Thorbecke, Utz-Uwe Haus, Thomas
  Van~Vaerenbergh, Raj~B. Patel, Ian~A. Walmsley, and Anthony Laing.
\newblock The boundary for quantum advantage in gaussian boson sampling.
\newblock \emph{Science Advances}, 8:\penalty0 eabl9236, 2022{\natexlab{b}}.
\newblock \doi{10.1126/sciadv.abl9236}.

\bibitem[Quesada and Arrazola(2020)]{quesada2020exact}
Nicol{\'a}s Quesada and Juan~Miguel Arrazola.
\newblock Exact simulation of gaussian boson sampling in polynomial space and
  exponential time.
\newblock \emph{Physical Review Research}, 2:\penalty0 023005, 2020.
\newblock \doi{10.1103/PhysRevResearch.2.023005}.

\bibitem[Qi et~al.(2020)Qi, Brod, Quesada, and
  Garc{\'\i}a-Patr{\'o}n]{qi2020regimes}
Haoyu Qi, Daniel~J. Brod, Nicol{\'a}s Quesada, and Ra{\'u}l
  Garc{\'\i}a-Patr{\'o}n.
\newblock Regimes of classical simulability for noisy gaussian boson sampling.
\newblock \emph{Physical Review Letters}, 124:\penalty0 100502, 2020.
\newblock \doi{10.1103/PhysRevLett.124.100502}.

\bibitem[Oh et~al.(2022)Oh, Lim, Fefferman, and Jiang]{oh2022graph}
Changhun Oh, Youngrong Lim, Bill Fefferman, and Liang Jiang.
\newblock Classical simulation of boson sampling based on graph structure.
\newblock \emph{Physical Review Letters}, 128:\penalty0 190501, 2022.
\newblock \doi{10.1103/PhysRevLett.128.190501}.

\bibitem[Liu et~al.(2026)Liu, Jin, Xiang, and Tu]{liu2026efficient}
Tong Liu, Hui-Ke Jin, Tao Xiang, and Hong-Hao Tu.
\newblock Efficient simulation of low-entanglement bosonic gaussian states in
  polynomial time.
\newblock \emph{npj Quantum Information}, 12:\penalty0 110, 2026.
\newblock \doi{10.1038/s41534-026-01254-7}.

\bibitem[Trefethen(2013)]{trefethen2013approximation}
Lloyd~N. Trefethen.
\newblock \emph{Approximation Theory and Approximation Practice}.
\newblock SIAM, Philadelphia, 2013.

\end{thebibliography}

\onecolumn
\makeatletter\@secpenalty=0\makeatother
\appendix
\section{Proofs}\label{app:proofs}
\begingroup
\setlength{\parskip}{0.35\baselineskip}
\setlength{\jot}{4pt}

\subsection{Finite-shot ridge risk}
Let $\xi=\hat z-z_b(G)$, $\tilde y=y-\E y$, and
$\tilde z=\hat z-\E\hat z=(z_b-\E z_b)+\xi$. Conditional
unbiasedness implies $\E[\xi\mid G]=0$ and hence
$\E[\xi(z_b-\E z_b)^\top]=0$. Conditions (C2)--(C3) then give
\begin{align*}
\E[\tilde z\tilde z^\top]
 &=\Sigma_z+\E\,\operatorname{Cov}(\hat z\mid G)
   =\Sigma_z+\bar\Omega_b/S=A,\\
\E[\tilde z\tilde y^\top]
 &=\operatorname{Cov}(z_b,y)+\E[\xi\tilde y^\top]=C_z.
\end{align*}
These are moments over a random training window and its measurement
shots; independence between different training windows is not required
for this population calculation. Expanding the square for an arbitrary
readout $W$ yields
\begin{equation*}
\mathcal R(W)=\frac1q\E\|\tilde y-W^\top\tilde z\|^2
=R_0-\frac2q\tr(C_z^\top W)+\frac1q\tr(W^\top AW).
\end{equation*}
The fitted \emph{population} ridge head minimizes
$\mathcal R(W)+\lambda\|W\|_F^2/q$. Differentiating this objective
gives $(A+\lambda I)W=C_z$, and therefore
$W_\lambda=(A+\lambda I)^{-1}C_z$ when the inverse exists.
Diagonalize $A=U\operatorname{diag}(a_i)U^\top$ and set
$\tilde c_i=(U^\top C_z)_{i,:}$. In this basis the $i$th row of
$U^\top W_\lambda$ is $\tilde c_i/(a_i+\lambda)$, so
\begin{align*}
\mathcal R(W_\lambda)
&=R_0-\frac1q\sum_i\|\tilde c_i\|^2
  \left(\frac{2}{a_i+\lambda}-\frac{a_i}{(a_i+\lambda)^2}\right)\\
&=R_0-\frac1q\sum_i\|\tilde c_i\|^2
  \frac{a_i+2\lambda}{(a_i+\lambda)^2},
\end{align*}
which proves~\eqref{eq:risk}. If $\lambda=0$ and $a_i=0$, then
$u_i^\top\tilde z=0$ almost surely, so $\tilde c_i=u_i^\top C_z=0$;
the corresponding summand is defined as zero and the Moore--Penrose
solution is used. Finally, with $A,C_z$ fixed,
\[
\frac{\partial R_{\theta,b,S}}{\partial\lambda}
=\frac{2\lambda}{q}\sum_{i:a_i+\lambda>0}
  \frac{\|\tilde c_i\|^2}{(a_i+\lambda)^3}\ge0.
\]
This is the derivative of prediction risk, not of the penalized
objective: evaluated at the same $W_\lambda$, the latter exceeds
prediction risk by $\lambda\|W_\lambda\|_F^2/q$.

\paragraph{Proof of \cref{prop:shots}.}
Collect the generalized eigenvectors in $W=[w_1,w_2,\dots]$, so that
$W^\top\bar\Omega_bW=I$ and $W^\top\Sigma_zW=\operatorname{diag}(g_i)$. Then
\begin{equation}
A(S)=\Sigma_z+\frac{\bar\Omega_b}{S}
=W^{-\top}\bigl(\operatorname{diag}(g_i)+S^{-1}I\bigr)W^{-1},\qquad
A(S)^{-1}=W\bigl(\operatorname{diag}(g_i)+S^{-1}I\bigr)^{-1}W^\top .
\end{equation}
At $\lambda=0$,~\eqref{eq:risk} reads
$R=R_0-\tfrac1q\tr(C_z^\top A^{-1}C_z)$, hence
\begin{equation}
R_{\theta,b,S}(0)
=R_0-\frac1q\sum_i\frac{\|w_i^\top C_z\|^2}{g_i+S^{-1}}
=R_0-\sum_{i:\,g_i>0}\omega_i\,\frac{g_iS}{1+g_iS}.
\end{equation}
If $g_i=0$, then $\operatorname{Var}(w_i^\top z_b)=w_i^\top\Sigma_zw_i=0$, so
$w_i^\top(z_b-\E z_b)=0$ almost surely and $w_i^\top C_z=0$; such directions
drop out. Writing $g_iS/(1+g_iS)=1-1/(1+g_iS)$ gives~\eqref{eq:shotlaw}, and
$1/(1+g_iS)\le1/(g_iS)$ gives the sufficient budget
$S\le\sum_i\omega_i/(g_i\varepsilon)$.

\subsection{Encoding-feature polynomial representation}\label{app:proof-poly}
\paragraph{Proof of \cref{thm:poly}.}
\emph{Path expansion.} Write the network as $a_{\rm out}=T(G)a_{\rm in}+Eb$
with vacuum environment modes $b$. The time-unrolled transfer matrix is an
ordered product of embedded gate matrices in which every gate instance
occurs once, and an admitted gate is affine in its features:
\begin{equation}\label{eq:app-T}
T=A_K\cdots A_1,\qquad
A_k=A_k^{(0)}+\sum_jA_k^{(j)}\phi_{k,j}(G).
\end{equation}
Distributing the product over the intermediate modes gives the path sum
\begin{equation}\label{eq:app-path}
T_{os}=\sum_{\pi:s\to o}\operatorname{amp}(\pi),\qquad
\operatorname{amp}(\pi)=\prod_{k\in\pi}(A_k)_{m_k'm_k},
\end{equation}
where $m_k\to m_k'$ is the step of $\pi$ through gate $k$. A path meets a
gate instance at most once, so $\operatorname{amp}(\pi)$ is multilinear:
it has degree at most $\mathbf 1[g\in\pi]$ in $(\phi_g,\bar\phi_g)$.

\emph{Moment expansion.} Take an order-$k$ normally ordered output moment
with $p$ creation and $q$ annihilation operators, $p+q=k$. Substituting the
input--output relation preserves normal order, and every term that
contains an environment operator has zero vacuum expectation. Hence
\begin{equation}\label{eq:app-moment}
\bigl\langle a^\dagger_{o_1}\cdots a^\dagger_{o_p}a_{o'_1}\cdots a_{o'_q}\bigr\rangle
=\sum_{\boldsymbol s,\boldsymbol s'}\ \prod_{i=1}^p\overline{T_{o_is_i}}\,
\prod_{j=1}^qT_{o'_js'_j}\,
\bigl\langle a^\dagger_{s_1}\cdots a^\dagger_{s_p}a_{s'_1}\cdots a_{s'_q}\bigr\rangle_{\rm in}.
\end{equation}
Inserting~\eqref{eq:app-path} writes every term as a product of $k$ path
amplitudes, the \emph{legs} $\pi_1,\dots,\pi_k$, times one input moment.
Terms whose input moment vanishes drop out, so only source-compatible
$k$-tuples contribute.

\emph{(a) Gate degree.} Each leg through $g$ contributes one entry of $A_g$
or $\bar A_g$, which is affine in $(\phi_g,\bar\phi_g)$; therefore
\begin{equation}\label{eq:app-gatedeg}
\deg_{\phi_g}\prod_{\ell=1}^k\operatorname{amp}(\pi_\ell)
\le\#\{\ell:\ g\in\pi_\ell\}\le k .
\end{equation}

\emph{(b) Source degree.} For a product input state
$\rho_{\rm in}=\bigotimes_s\rho_s$ the input moment factorizes,
\begin{equation}\label{eq:app-factor}
\bigl\langle a^\dagger_{s_1}\cdots a_{s'_q}\bigr\rangle_{\rm in}
=\prod_s\Bigl\langle{:}\prod_{\ell\in\mathcal L_s}c_\ell{:}\Bigr\rangle_{\rho_s},
\end{equation}
where $c_\ell\in\{a,a^\dagger\}$ is the input operator of leg $\ell$ and
$\mathcal L_s$ is the set of the $k_s=|\mathcal L_s|$ legs that start at a
mode of $s$, $\sum_sk_s=k$. The factor of $s$ has degree at
most $d_s(k_s)\le d_s(k)$ in $(\psi_s,\bar\psi_s)$, and it equals $1$ if
$k_s=0$.

\emph{(c) Support.} By~\eqref{eq:app-gatedeg} and~\eqref{eq:app-factor}, a
term depends on $\phi_g$ only if a leg passes $g$, and on $\psi_s$ only if
$k_s>0$, that is, only if a leg starts at a mode of $s$. Summing terms can
cancel a monomial but cannot create one that is absent from every term.

\paragraph{Proof of \cref{prop:gauss}.}
(i) For a product of zero-mean states the second moments between different
sources vanish, so the input moment matrices are block diagonal,
\begin{equation}
N_{\rm in}=\bigoplus_sN_s,\qquad M_{\rm in}=\bigoplus_sM_s,
\end{equation}
with $N_s,M_s$ affine in $(\psi_s,\bar\psi_s)$; one TMSV pair counts as one
source. With $T_{:,s}$ the columns of the modes of $s$,~\eqref{eq:NM} gives
\begin{equation}\label{eq:app-NMsum}
N_{\rm out}=\sum_s\bar T_{:,s}N_sT_{:,s}^\top,\qquad
M_{\rm out}=\sum_sT_{:,s}M_sT_{:,s}^\top .
\end{equation}
Every summand contains the features of a single source to first order, and
$V$ is affine in $(N,M,\bar M)$ up to a drive-independent vacuum term.

(ii) Passive linear optics maps Gaussian states to Gaussian states. For a
zero-mean Gaussian state, Wick's theorem for normally ordered products of
operators $c_1,\dots,c_{2j}\in\{a_o,a_o^\dagger\}$ reads
\begin{equation}\label{eq:app-wick}
\bigl\langle{:}c_1\cdots c_{2j}{:}\bigr\rangle
=\sum_{P}\ \prod_{\{a,b\}\in P}\bigl\langle{:}c_ac_b{:}\bigr\rangle ,
\end{equation}
the sum running over perfect matchings $P$ of $\{1,\dots,2j\}$; odd moments
vanish. Every factor is an entry of $N$, $M$ or $\bar M$, affine in the
features of one source by (i). Each term therefore has total degree at
most $j$, and its factors may belong to different sources. For the
examples, $N=u$ and $|M|^2=\sinh^2r\cosh^2r=u^2+u$ give
\begin{equation}
\langle a^{\dagger2}a^2\rangle=2N^2+|M|^2=3u^2+u,\qquad
\langle a^\dagger_1a_1a^\dagger_2a_2\rangle=N_{11}N_{22}+|N_{12}|^2+|M_{12}|^2=u_1u_2,
\end{equation}
where $N_{12}=M_{12}=0$ for independent, unmixed modes.

\paragraph{Proof of \cref{cor:squeeze}.}
With drive-independent gates $T$ is fixed. A squeezed vacuum and a TMSV
pair with squeezing $r$ and fixed phase $\vartheta$ have
\begin{equation}
N_s=u\,I,\qquad
M_s=e^{i\vartheta}v\ \ \text{(single mode)},\qquad
M_s=v\begin{pmatrix}0&1\\1&0\end{pmatrix}\ \ \text{(TMSV)},
\end{equation}
both affine in $(u,v)$. Inserting them into~\eqref{eq:app-NMsum} gives
\eqref{eq:squeeze} with fixed $A_t=\partial V/\partial u_t$ and
$B_t=\partial V/\partial v_t$, and \cref{prop:gauss}(i) excludes products
of two steps. Collect $\psi=(u_1,v_1,\dots,u_L,v_L)$. An affine receiver has
features $z_b=z_0+F\psi$ with fixed $F$, hence
\begin{equation}
\Sigma_z=F\operatorname{Cov}(\psi)F^\top,\qquad
C_z=F\operatorname{Cov}(\psi,y).
\end{equation}
Its Wick noise covariance is quadratic in $V$ and hence in $\psi$, so
$\bar\Omega_b$ requires only $\E\psi$ and $\E\psi\psi^\top$. Photon-number
second moments are quadratic in $V$ by~\eqref{eq:app-wick}, hence
quadratic in $\psi$.

\paragraph{Proof of \cref{cor:displace}.}
Write $a_{\rm in}=a^c_{\rm in}+\mu_{\rm in}$ with $(\mu_{\rm in})_s=\beta_s(G)$
and a centered state $a^c_{\rm in}$ that does not depend on the drive. Then
\begin{equation}
\mu_{\rm out}=T\mu_{\rm in},\qquad
N^c_{\rm out}=\bar TN^c_{\rm in}T^\top,\qquad
M^c_{\rm out}=TM^c_{\rm in}T^\top,
\end{equation}
so $V$ does not depend on the drive. Homodyne means are linear in
$\mu_{\rm out}$, and
\begin{equation}
\langle n_i\rangle=|\mu_i|^2+N^c_{ii}
\end{equation}
adds terms quadratic in $(\beta,\bar\beta)$.

\paragraph{Proof of \cref{cor:fock}.}
An input with at most $N$ photons is a mixture of pure states in
$\operatorname{span}\{\prod_s(a_s^\dagger)^{n_s}|0\rangle:\sum_sn_s\le N\}$.
The network acts as
$a_s^\dagger\mapsto\sum_oT_{os}a_o^\dagger+(\text{environment})$, so the
amplitude of an output pattern $\mathbf m$ is a sum over assignments
$\sigma$ of photons to output modes,
\begin{equation}\label{eq:app-fock}
\Bigl\langle\mathbf m\Bigm|\prod_s\Bigl(\sum_oT_{os}a_o^\dagger\Bigr)^{n_s}\Bigm|0\Bigr\rangle
=\sum_\sigma c_\sigma\prod_{p=1}^{N'}T_{\sigma(p)\,s(p)},\qquad
N'=\sum_sn_s\le N,
\end{equation}
with combinatorial constants $c_\sigma$ and $s(p)$ the source of photon
$p$. Expanding each $T$ entry into paths, the amplitude has degree at most
$n_g$ in $(\phi_g,\bar\phi_g)$, where $n_g$ is the largest number of photons
whose paths can pass $g$. A detection probability is a sum of squared
moduli of such amplitudes, also after tracing out loss modes, and has
degree at most $2n_g$. A coupling gate has entries in
$\operatorname{span}\{e^{\pm i\theta}\}$ on its two modes, whereas a phase
gate multiplies one mode by $e^{i\varphi}$ and leaves the others fixed.
Hence, per gate,
\begin{equation}
\begin{aligned}
&\text{coupling:}&&\text{amplitude}\in\operatorname{span}\{e^{in\kappa G_\alpha}\}_{|n|\le n_g},
&&\text{probability}\in\operatorname{span}\{e^{in\kappa G_\alpha}\}_{|n|\le2n_g},\\
&\text{phase:}&&\text{amplitude}\in\operatorname{span}\{e^{inaG_\alpha}\}_{0\le n\le n_g},
&&\text{probability}\in\operatorname{span}\{e^{inaG_\alpha}\}_{|n|\le n_g}.
\end{aligned}
\end{equation}

\paragraph{Proof of \cref{prop:noise}.}
The shot-averaged state is $\bar\rho=\E_\epsilon\,\rho(G+\epsilon e_\alpha)$,
and expectation values are linear in the state. For one harmonic and
$\epsilon\sim\mathcal N(0,\sigma^2)$,
\begin{equation}
\E_\epsilon\,e^{i\omega_\alpha(G_\alpha+\epsilon)}
=e^{i\omega_\alpha G_\alpha}\,e^{-\omega_\alpha^2\sigma^2/2}.
\end{equation}
For $f(G)=\sum_\omega\hat f_\omega e^{i\omega\cdot G}$ with
$\sum_\omega|\hat f_\omega|<\infty$, dominated convergence then gives
\begin{equation}
\E_\epsilon f(G+\epsilon e_\alpha)
=\sum_\omega\hat f_\omega\,e^{-\omega_\alpha^2\sigma^2/2}\,e^{i\omega\cdot G},
\end{equation}
and independent perturbations of several components multiply the factors.
The covariance of the measured features decomposes as
\begin{equation}
\operatorname{Cov}(\hat z)
=\E_\epsilon\operatorname{Cov}(\hat z\mid\epsilon)
+\operatorname{Cov}_\epsilon\E(\hat z\mid\epsilon),
\end{equation}
and the second term follows by applying the same rule to products of
harmonics.

\subsection{Gate encoding and the characteristic-function interface}\label{app:proof-trig}
Let $e_\alpha$ denote the coordinate vector of drive component
$G_\alpha$. Every time-unrolled gate has a finite harmonic expansion
$A_k(G)=\sum_{h\in\mathcal H_k}A_{k,h}e^{ih\cdot G}$.
For example, on its two modes a beam splitter is
\[
\begin{pmatrix}\cos\theta&-\sin\theta\\ \sin\theta&\cos\theta\end{pmatrix},
\qquad
\cos\theta=\tfrac12(e^{i\theta}+e^{-i\theta}),\quad
\sin\theta=\tfrac1{2i}(e^{i\theta}-e^{-i\theta}).
\]
With~\eqref{eq:affine}, its nonconstant frequencies belong to
$\{\pm\kappa e_\alpha\}$; the identity on unaffected modes adds
$0$. An EOM has frequencies in $\{0,ae_\alpha\}$. Fixed phases,
readout beam splitters and loss beam splitters have only frequency
$0$. The source moments $N_{\rm in},M_{\rm in}$ are independent of $G$.

Write the ordered product as $T=A_K\cdots A_1$. Distributing this
product gives the explicit finite convolution
\[
T(G)=\sum_\nu T_\nu e^{i\nu\cdot G},\qquad
T_\nu=\sum_{h_1+\cdots+h_K=\nu}
 A_{K,h_K}\cdots A_{1,h_1}.
\]
Thus a transfer coefficient is a sum of terms, each selecting one
harmonic at every gate on an allowed optical path. Coincident numerical
frequencies are combined in $T_\nu$. Conjugating $T$ reverses the
frequency sign. Substitution into~\eqref{eq:NM} therefore gives
\begin{align*}
(\widehat N_{\mathrm{out}})_\omega
&=\sum_{\nu'-\nu=\omega}\bar T_\nu N_{\mathrm{in}}T_{\nu'}^\top,\\
(\widehat M_{\mathrm{out}})_\omega
&=\sum_{\nu+\nu'=\omega}T_\nu M_{\mathrm{in}}T_{\nu'}^\top .
\end{align*}
The normal moment uses a frequency \emph{difference}, whereas the
anomalous moment uses a frequency \emph{sum}. Across the two legs of
either expression, each coupling gate fed by $\alpha$ contributes at
most two units of $\kappa$ in absolute value, and each EOM at most two
units of $a$. Since $V$ is affine in $(N,M,\bar M)$, including a
drive-independent vacuum term, every covariance frequency obeys
\[
\omega_\alpha=n_1\kappa+n_2a,
\qquad |n_1|\le 2B_\alpha,\quad |n_2|\le 2E_\alpha .
\]
We include $0$ in the candidate set even if a particular coefficient
vanishes. Reality of $V$ gives
$\hat V_{-\omega}=\overline{\hat V_\omega}$.

It remains to justify the receiver-noise part of the theorem. In a
fixed heterodyne or homodyne setting $s$, the measured quadrature vector
$X_s$ is zero-mean Gaussian with covariance $Q_s$, an affine function
of $V_\eta$. For a single-shot quadratic feature, Wick's identity gives
\[
\operatorname{Cov}(X_{s,i}X_{s,j},X_{s,k}X_{s,l})
=(Q_s)_{ik}(Q_s)_{jl}+(Q_s)_{il}(Q_s)_{jk}.
\]
Each setting's fixed linear reconstruction and its fixed shot fraction
preserve this quadratic dependence; shots from different settings are
independent. Consequently $\Omega_b(V)$ is a quadratic polynomial in
$V$, and its drive frequencies lie in
$\mathcal W_\theta+\mathcal W_\theta$. The fixed training-segment diagonal
standardization multiplies features by a constant matrix $D$ and
noise covariances by $D$ and $D^\top$; it creates no frequency.

Finally, for $z_b=\sum_\omega\hat z_\omega e^{i\omega\cdot G}$,
averaging finite sums term by term yields
\begin{align*}
\E z_b&=\sum_\omega\hat z_\omega\Gamma(\omega),\\
\Sigma_z&=\sum_{\omega,\omega'}\hat z_\omega\hat z_{\omega'}^\top
 [\Gamma(\omega+\omega')-\Gamma(\omega)\Gamma(\omega')],\\
C_z&=\sum_\omega\hat z_\omega
 [\Gamma_y(\omega)-\Gamma(\omega)\E y^\top],\\
\bar\Omega_b&=\sum_\nu\hat\Omega_\nu\Gamma(\nu).
\end{align*}
The use of ordinary transpose is correct because the real feature
vector contains both members of each conjugate frequency pair.
With $0\in\mathcal W_\theta$, all required unweighted characteristic
values belong to $\mathcal W_\theta+\mathcal W_\theta$. Inserting these
moments into~\eqref{eq:risk} proves the finite-frequency risk interface.

\paragraph{Proof of \cref{cor:task}.}
By \cref{thm:bridge}, the risk depends on the task only through
\begin{equation}\label{eq:app-taskmoments}
\E m,\qquad \E(m\tilde m),\qquad \E(my^\top),\qquad \E m',
\qquad m,\tilde m\in\mathcal M,\ m'\in\mathcal M',
\end{equation}
together with $\E y$ and $R_0$. Each row of \cref{tab:task} identifies these
monomials as functions of the drive.

\emph{Gate encoding.} By~\eqref{eq:app-gatedeg} an order-$k$ moment has at
most $k$ legs per gate, and each leg through a coupling gate contributes a
factor in $\operatorname{span}\{1,e^{\pm i\kappa G_\alpha}\}$, through an EOM
one in $\operatorname{span}\{1,e^{\pm iaG_\alpha}\}$. Feature monomials are
therefore $e^{i\omega\cdot G}$ with $\omega\in\mathcal W^{(k)}$, noise monomials
have $\omega'\in\mathcal W^{(k')}$, and
\begin{equation}
\E\bigl(e^{i\omega\cdot G}e^{i\tilde\omega\cdot G}\bigr)=\Gamma(\omega+\tilde\omega),\qquad
\E\bigl(e^{i\omega\cdot G}y^\top\bigr)=\Gamma_y(\omega),\qquad
\E\,e^{i\omega'\cdot G}=\Gamma(\omega').
\end{equation}
With at most $N$ photons, \cref{cor:fock} bounds each gate by $2n_g\le2N$
units, so detection probabilities lie in $\mathcal W^{(2N)}$, and their
multinomial noise $p_i\delta_{ij}-p_ip_j$ in
$\mathcal W^{(2N)}+\mathcal W^{(2N)}$.

\emph{Squeezing encoding} with $r=r^0+\Delta G_\alpha$. Writing
$\sinh^2r$ and $\sinh r\cosh r$ in exponentials,
\begin{equation}
u=\tfrac14\bigl(e^{2r^0}e^{2\Delta G_\alpha}+e^{-2r^0}e^{-2\Delta G_\alpha}-2\bigr),\qquad
v=\tfrac14\bigl(e^{2r^0}e^{2\Delta G_\alpha}-e^{-2r^0}e^{-2\Delta G_\alpha}\bigr),
\end{equation}
so a monomial of degree $j$ in the $(u_t,v_t)$ is a combination of
$e^{\zeta\cdot G}$ with $\zeta\in\Lambda^{+j}$, and
\begin{equation}
\E\bigl(e^{\zeta\cdot G}e^{\tilde\zeta\cdot G}\bigr)=\mathcal L(\zeta+\tilde\zeta),\qquad
\E\bigl(e^{\zeta\cdot G}y^\top\bigr)=\mathcal L_y(\zeta).
\end{equation}
An affine receiver has features of degree $1$ and Wick noise of degree $2$
in $(u,v)$, which gives $\mathcal L$ on $\Lambda^{+2}$ and $\mathcal L_y$ on
$\Lambda$. Photon-number second moments have degree $2$ by
\cref{prop:gauss}(ii), and their noise involves moments of order $8$, of
degree $4$; this gives $\mathcal L$ on $\Lambda^{+4}$ and $\mathcal L_y$ on
$\Lambda^{+2}$. For a general map $r(G)$ the moments of $(u_t,v_t)$ enter
directly (\cref{cor:squeeze}).

\emph{Displacement encoding} with $\beta=\beta^0+bG$. By \cref{cor:displace},
\begin{equation}
\mu_{\rm out}=T\mu_{\rm in}\ \ \text{is affine in }G,\qquad
\langle n_i\rangle=|\mu_i|^2+N^c_{ii}\ \ \text{is quadratic in }G.
\end{equation}
Quadrature means are affine in $G$ and their noise covariance is the
drive-independent centered covariance, so only $\E G$, $\operatorname{Cov}G$
and $\operatorname{Cov}(G,y)$ enter. Second moments and $\langle n_i\rangle$
are quadratic in $G$ with Wick noise quadratic in $\mu$, so moments of $G$ up
to order $4$ and $\E(G_\alpha G_\beta y)$, $\E(G_\alpha y)$ enter.

\paragraph{Proof of \cref{cor:class}.}
(i) By \cref{cor:displace}, $\mu_{\rm out}=T\mu_{\rm in}$ is affine in
$\beta$ and hence in $G$, and quadrature means are affine in
$\mu_{\rm out}$; components outside $\mathcal R$ have zero coefficient.
Thus $z_b-\E z_b=B\,(G_{\mathcal R}-\E G_{\mathcal R})$ for a fixed $B$, and the
argument of \cref{app:proof-topology-floor} with the dictionary
$G_{\mathcal R}$ gives $R\ge R_{\rm or}(G_{\mathcal R})$.
(ii) By \cref{cor:squeeze},
\begin{equation}
z_b=z_0+\sum_s\bigl[a_s\,u(G_{\alpha(s)})+b_s\,v(G_{\alpha(s)})\bigr],
\end{equation}
which is additive over drive components, and the same argument with the
dictionary $\psi_{\mathcal R}$ gives the floor. If $y$ is orthogonal to every
additive function of single drive components, then $C_z=\operatorname{Cov}(z_b,y)=0$,
so~\eqref{eq:risk} gives $R=R_0$ for every $S$ and $\lambda$.
(iii) By \cref{thm:zeros} a nonzero coefficient needs a source-compatible
path pair through the interacting sites; such a pair ends in retained
modes of one component, so by \cref{thm:components} all its sites lie on
paths into that component.

\subsection{Threshold-click approximants}\label{app:proof-click}
\paragraph{Spectral bound.} Represent the passive network together with its
environment modes by an orthogonal symplectic matrix $O$ on the quadratures
of sources and environment. Then
\begin{equation}
V_{\rm full}=O\,(V_{\rm in}\oplus I)\,O^\top,\qquad
\operatorname{spec}V_{\rm full}=\operatorname{spec}(V_{\rm in}\oplus I)\subset[e^{-2r},e^{2r}],
\end{equation}
because the vacuum eigenvalue $1$ lies in that interval. Every block $V_J$
of the retained covariance is a principal submatrix of $V_{\rm full}$, so
Cauchy interlacing gives
$e^{-2r}\le\lambda_{\min}(V_J)\le\lambda_{\max}(V_J)\le e^{2r}$ for every
drive value. Detection loss maps an eigenvalue $\lambda$ to
$\eta_{\det}\lambda+1-\eta_{\det}$, hence
\begin{equation}
\operatorname{spec}Q_J\subset[a,b],\qquad
a=1-\tfrac12\eta_{\det}\bigl(1-e^{-2r}\bigr),\qquad
b=1+\tfrac12\eta_{\det}\bigl(e^{2r}-1\bigr).
\end{equation}

\paragraph{Determinant.} Let $q_1,\dots,q_n$, $n=2|J|$, be the eigenvalues of
$Q_J$ and write $p_d(q_i)=q_i^{-1/2}(1+\delta_i)$ with
$|\delta_i|\le\varepsilon_d$. Then
\begin{equation}
P_{J,d}=\det p_d(Q_J)=\prod_{i=1}^np_d(q_i)=p_0(J)\prod_{i=1}^n(1+\delta_i),\qquad
\Bigl|\prod_{i=1}^n(1+\delta_i)-1\Bigr|\le(1+\varepsilon_d)^n-1,
\end{equation}
which is~\eqref{eq:p0-approx}. The entries of $p_d(Q_J)$ are polynomials of
degree at most $d$ in the entries of $Q_J$, which are affine in $V$, and the
determinant has degree $n$ in them; hence $P_{J,d}$ has degree $2|J|d$ in
$V$. The function $q^{-1/2}$ is analytic inside the Bernstein ellipse with
foci $a,b$ that passes through $0$, whose parameter is
$R=(\sqrt b+\sqrt a)/(\sqrt b-\sqrt a)$. For $1<\varrho<R$ the Chebyshev
interpolant therefore obeys~\cite{trefethen2013approximation}
\begin{equation}
\max_{q\in[a,b]}\bigl|q^{-1/2}-p_d(q)\bigr|\le\frac{4M_\varrho\,\varrho^{-d}}{\varrho-1},\qquad
M_\varrho=\Bigl[\tfrac12(a+b)-\tfrac14(b-a)(\varrho+\varrho^{-1})\Bigr]^{-1/2},
\end{equation}
where $M_\varrho$ bounds $|q^{-1/2}|$ on the ellipse; multiplying by
$\sqrt b$ bounds $\varepsilon_d$.

\paragraph{Events, noise and frequencies.} With $p_0(\emptyset)=1$ and
$0\le p_0(J)\le1$,
\begin{equation}
p_A=\sum_{J\subseteq A}(-1)^{|J|}p_0(J),\qquad
|\tilde p_A-p_A|\le\sum_{\emptyset\ne J\subseteq A}\bigl[(1+\varepsilon_d)^{2|J|}-1\bigr]=e_A,
\end{equation}
and $\tilde p_A$ is a polynomial of degree $2|A|d$ in $V$. For the noise,
using $|p_A|\le1$ and $|\tilde p_A|\le1+e_A$,
\begin{equation}
\bigl|(\tilde p_{A\cup B}-\tilde p_A\tilde p_B)-(p_{A\cup B}-p_Ap_B)\bigr|
\le e_{A\cup B}+e_A+e_B+e_Ae_B,
\end{equation}
with degree at most $2|A\cup B|d\le4wd$. Under gate encoding every entry of
$V$ has frequencies in $\mathcal W^{(2)}$ and a product of $D$ entries in
$\mathcal W^{(2D)}$, which gives $\mathcal W^{(4wd)}$ and $\mathcal W^{(8wd)}$.
The approximant is moment-polynomial, so \cref{thm:bridge} applies to it.
Finally, in the fixed standardized coordinates of (C5), with training
standard deviations $s_A$,
\begin{equation}
\rho^2\le\sum_A\frac{e_A^2}{s_A^2},\qquad
\bigl|(\bar\Omega_b-\bar\Omega_d)_{AB}\bigr|\le\frac{e_{A\cup B}+e_A+e_B+e_Ae_B}{S\,s_As_B},
\end{equation}
where $\bar\Omega_d$ is the approximant's averaged noise covariance.
Inserting these bounds into \cref{app:proof-riskerror} gives the risk
certificate.

\subsection{All-order structural zeros}
For an edge $e$ in the time-unrolled graph, write its gate weight as a
finite sum $w_e(G)=\sum_{h\in\mathcal H_e}c_{e,h}e^{ih\cdot G}$.
The product along a path $\pi$ is consequently
$a_\pi(G)=\operatorname{amp}(\pi)
=\sum_\nu a_{\pi,\nu}e^{i\nu\cdot G}$; a nonzero component
$\nu_\alpha$ requires a gate fed by $G_\alpha$ on that path.
Substituting $T_{js}=\sum_{\pi:s\to j}a_\pi$ into the two input-moment
transformations gives, entry by entry,
\begin{align*}
N_{jk}(G)
&=\sum_{s,s'}(N_{\rm in})_{ss'}
  \sum_{\pi:s\to j}\sum_{\pi':s'\to k}
  \overline{a_\pi(G)}a_{\pi'}(G),\\
M_{jk}(G)
&=\sum_{s,s'}(M_{\rm in})_{ss'}
  \sum_{\pi:s\to j}\sum_{\pi':s'\to k}
  a_\pi(G)a_{\pi'}(G).
\end{align*}
For example, a summand in $N_{jk}$ with local path frequencies
$\nu,\nu'$ has global frequency $\omega=\nu'-\nu$, while the
corresponding summand in $M_{jk}$ has $\omega=\nu+\nu'$.
The nonzero entries of $N_{\rm in}$ are local to a TMSV source pair
(including single-mode diagonal entries); $M_{\rm in}$ also correlates
the two arms of that pair. Environment vacuum modes have zero $N$ and
$M$ moments. Hence every non-vacuum term uses a source-compatible path
pair as defined in \cref{thm:zeros}.

Let $E(\pi,\pi')$ be the set of drive components feeding gates on
$\pi\cup\pi'$. If $\alpha\notin E(\pi,\pi')$, then both path
amplitudes are independent of $G_\alpha$ and the corresponding term
has $\omega_\alpha=0$. Thus a term with
$S\subseteq\supp\omega$ requires a compatible pair with
$S\subseteq E(\pi,\pi')$. If no such pair exists, every Fourier
coefficient in $\mathcal I_S V_\theta$ is zero, which proves the
all-order assertion. Summing allowed terms can only cancel
coefficients, never create a frequency missing from every term;
source-compatible reachability is therefore necessary, not sufficient.

\subsection{Independent components}\label{app:proof-components}
\paragraph{Proof of \cref{thm:components}.}
For complex $\xi$ on the retained modes let
$\chi(\xi)=\bigl\langle e^{\sum_o\xi_oa_o^\dagger}e^{-\sum_o\bar\xi_oa_o}\bigr\rangle$
be the normally ordered characteristic function. Substituting
$a_{\rm out}=Ta_{\rm in}+Eb$, the input and environment operators commute,
and the vacuum environment contributes the factor $1$; hence
\begin{equation}
\chi_{\rm out}(\xi)=\chi_{\rm in}(T^\dagger\xi),\qquad
(T^\dagger\xi)_m=\sum_o\overline{T_{om}}\,\xi_o .
\end{equation}
Write $\xi=\sum_k\xi^{(k)}$ with $\xi^{(k)}$ supported on $C_k$, and let
$\mathcal S_k$ be the sources joined to $C_k$. By the definition of the
components, $T_{om}=0$ for $o\in C_k$ and $m$ a mode of a source outside
$\mathcal S_k$, so $T^\dagger\xi^{(k)}$ is supported on the modes of
$\mathcal S_k$. The sets $\mathcal S_k$ are disjoint and the input is a
product over sources, therefore
\begin{equation}
\chi_{\rm out}(\xi)=\chi_{\rm in}\Bigl(\sum_kT^\dagger\xi^{(k)}\Bigr)
=\prod_k\chi_{\mathcal S_k}\bigl(T^\dagger\xi^{(k)}\bigr),
\end{equation}
where $\chi_{\mathcal S_k}$ belongs to $\bigotimes_{s\in\mathcal S_k}\rho_s$. The
right side is the characteristic function of $\bigotimes_k\rho_k$, where
$\rho_k$ has characteristic function
$\xi^{(k)}\mapsto\chi_{\mathcal S_k}(T^\dagger\xi^{(k)})$; since $\chi$
determines the state, $\rho_{\rm out}=\bigotimes_k\rho_k$. The state $\rho_k$
depends on $T$ only through its rows $C_k$ and the columns of
$\mathcal S_k$, which by~\eqref{eq:app-path} involve only the sites on
paths from $\mathcal S_k$ into $C_k$. These site sets are disjoint: a gate
on paths into $C_k$ and $C_{k'}$ would connect the two paths, so a source
of one component would reach the other. For a product observable,
$\langle\bigotimes_kO_k\rangle=\prod_k\operatorname{tr}(\rho_kO_k)$.
(a) A zero-mean state has $\langle x_i\rangle=\langle p_i\rangle=0$, so for $i,j$
in different components $\langle x_ix_j\rangle=\langle x_i\rangle\langle x_j\rangle=0$,
and likewise for the other quadrature pairs.
(b) The event ``every mode of $E$ clicks'' is
$\bigotimes_{o\in E}(I-|0\rangle\langle0|_o)$; every click pattern on $E$ and
every photon-number monomial $\prod_on_o^{m_o}$ is likewise a product over
modes, hence over components.

\paragraph{Proof of \cref{cor:span}.}
By \cref{thm:components}(b) the feature of $E$ is
$\prod_{k\in K(E)}f_{E,k}$ with $f_{E,k}$ a function of the sites of
$C_k$ alone, so no monomial involves a site outside the components met
by $E$. (i) If every event lies in one component, every feature is
component-local and every linear head gives
\begin{equation}
W^\top z_b=c+\sum_kh_k\bigl(\text{sites of }C_k\bigr),
\end{equation}
an additive function across components. The proof of
\cref{cor:topology-floor} in \cref{app:proof-topology-floor} uses only that
the centered features are a fixed linear map of the chosen dictionary plus
conditionally unbiased noise; it therefore holds for any dictionary that
spans these additive functions. In the delay family the certificate
of \cref{app:analytic-certificate} uses exactly such a dictionary.
(ii) A feature of an event that meets at most $w$ components is a product
of at most $w$ component-local functions. (iii) follows from (a).

\subsection{Topology-conditioned oracle risk floor}\label{app:proof-topology-floor}
Center the dictionary, $\tilde\phi=\phi-\E\phi$, and set
$\tilde y=y-\E y$. For the exact affine case,
\cref{thm:bridge,thm:zeros} show that every nonconstant Fourier term
of $z_b$ belongs to the real sine--cosine outer dictionary. Hence
$z_b-\E z_b=B\tilde\phi_{\mathcal T}^{\rm ex}$ for a real matrix
$B$, irrespective of cancellations among its columns. For the
second-order surrogate, the chain rule in~\eqref{eq:second} and the
outer support~\eqref{eq:second-support} similarly give
$z_b^{(2)}-\E z_b^{(2)}=B\tilde\phi_{\mathcal T,b}^{(2)}$ when
the remainder is set to zero. The following argument applies to either
dictionary, provided its surrogate shot covariance is positive
semidefinite.

Write $\tilde{\hat z}=B\tilde\phi+\xi$ and $L=B^\top W$.
Conditional unbiasedness gives
$\E[\xi\tilde\phi^\top]=0$; (C3) gives
$\E[\xi\tilde y^\top]=0$, and
$\E[\xi\xi^\top]=\bar\Omega/S\succeq0$. Expanding the
readout's risk without discarding any term therefore yields
\begin{align*}
\mathcal R_\phi(W)
={}&R_0-\frac2q\tr(C_\phi^\top L)
 +\frac1q\tr(L^\top\Sigma_\phi L)\\
&+\frac1{qS}\tr(W^\top\bar\Omega W).
\end{align*}
If $v\in\ker\Sigma_\phi$, then
$\E(v^\top\tilde\phi)^2=0$, so
$v^\top\tilde\phi=0$ almost surely and $v^\top C_\phi=0$.
Consequently $C_\phi$ lies in $\operatorname{range}(\Sigma_\phi)$,
even when $\Sigma_\phi$ is singular. Put
$L_*=\Sigma_\phi^\dagger C_\phi$; then
$\Sigma_\phi L_*=C_\phi$. Adding and subtracting $L_*$ in the
quadratic expression gives the exact completion of the square
\begin{align*}
\mathcal R_\phi(W)=R_0
&-\frac1q\tr(C_\phi^\top\Sigma_\phi^\dagger C_\phi)\\
&+\frac1q\|\Sigma_\phi^{1/2}(L-L_*)\|_F^2
 +\frac1{qS}\tr(W^\top\bar\Omega W).
\end{align*}
The first line is $R_{\rm or}(\phi)$, while the other terms are
nonnegative. For the exact affine dictionary this proves
\eqref{eq:exact-floor} for every $W$, including the
fixed-$\lambda$ ridge solution, and proves~\eqref{eq:oracle-gap}.
For the truncated dictionary the identical calculation instead
establishes the surrogate floor~\eqref{eq:second-floor}.

To incorporate the feature budget, set
$T=\Sigma_\phi^{\dagger/2}C_\phi$ and
$M=B\Sigma_\phi^{1/2}$. The range argument above implies
$C_\phi=\Sigma_\phi^{1/2}T$; thus the noiseless feature covariance
and label cross-covariance are $MM^\top$ and $MT$. Minimizing over
all linear heads on those noiseless features captures at most
\[
\frac1q\tr[T^\top M^\top(MM^\top)^\dagger MT]
=\frac1q\|P_{\mathcal U}T\|_F^2,
\quad \mathcal U=\operatorname{range}(M^\top).
\]
Indeed, an SVD of $M$ shows that
$M^\top(MM^\top)^\dagger M=P_{\mathcal U}$, the projector onto
its row space. This space has dimension at most the receiver's
$d_b$ features. By the variational principle applied to
$TT^\top=\Sigma_\phi^{\dagger/2}C_\phi C_\phi^\top
\Sigma_\phi^{\dagger/2}$,
$\|P_{\mathcal U}T\|_F^2\le\sum_{i=1}^{d_b}\mu_i$.
Measurement noise and a restricted ridge head cannot increase the
noiseless optimum; subtracting this upper bound on captured target
variance from $R_0$ proves~\eqref{eq:oracle-rank}. Finally, a
certified $|R-R^{(2)}|\le\varepsilon_R$ combines with
\eqref{eq:second-floor} to give
$R\ge R_{\rm or}(\phi_{\mathcal T,b}^{(2)})-\varepsilon_R$.

\subsection{Derivation of the delay-family certificate}\label{app:analytic-certificate}
In a one-loop circuit of delay $d$, a time-bin amplitude either exits
at its present step or makes a loop traversal from $t$ to $t+d$.
Repeated traversals therefore keep a path within one residue class
modulo $d$. The normal moments of the independent TMSV inputs are
diagonal in injection time, and an anomalous moment connects only the
reference and memory modes injected at the \emph{same} time. A
source-compatible path pair consequently also stays within one
residue class. Denote the modulated indices in class $r$ by
$I_r=\{i\in\{1,\ldots,8\}:i\equiv r\pmod d\}$.
Because a covariance entry is a \emph{sum} over source-compatible path
pairs, every feature of a fixed, input-independent affine receiver
belongs to the additive function space
\[
\left\{c_0+\sum_{r=0}^{d-1}f_r(u_{I_r}):
       f_r\text{ arbitrary functions}\right\}.
\]
This statement grants the circuit more freedom than its actual gate
amplitudes, and so is safe for an optimistic lower bound.

Under the full-factorial measure the characters
$\chi_A(u)=\prod_{i\in A}u_i$ form an orthonormal basis:
$\E[\chi_A\chi_B]=\mathbf1_{A=B}$. A function of $u_{I_r}$
has Walsh coefficients only on subsets $A\subseteq I_r$.
Thus every target pair $u_i u_j=\chi_{\{i,j\}}$ with
$i,j$ in different residue classes is orthogonal to \emph{every}
attainable affine feature, while the $\sum_r\binom{n_r}{2}$
within-class pairs are optimistically granted in full. Since the
$28$ pair terms of~\eqref{eq:analytic-target} are orthonormal,
their missing squared norm is
\[
\frac{28-\sum_{r=0}^{d-1}\binom{n_r}{2}}{28}.
\]
This proves~\eqref{eq:analytic-floor} after dropping shot noise,
ridge penalty and the feature-dimension constraint. For example, the
class sizes at $d=2,3,4,8$ are respectively
$(4,4)$, $(3,3,2)$, $(2,2,2,2)$ and eight singletons, yielding
the four floors stated in the text.

For $d=1$, the specified memory path enters through a fixed
$\pi/4$ coupler, stays in the loop through all eight modulated gates,
and exits through another fixed $\pi/4$ coupler. Its two fixed
amplitude factors give $\sin^2(\pi/4)=1/2$, and each staying
factor is
$\cos(\pi/4+\delta u_i)=2^{-1/2}(c-su_i)$.
Multiplication by the TMSV cross-quadrature moment
$2\sinh r\cosh r$ gives~\eqref{eq:analytic-feature}, with
$|A_0|=\sinh r\cosh r$. Expanding the product in the Walsh basis,
\[
z(u)=A_0 2^{-4}\sum_{A\subseteq\{1,\ldots,8\}}
 (-s)^{|A|}c^{8-|A|}\chi_A(u).
\]
Every pair character has the same coefficient
$A_0 2^{-4}c^6s^2$. Orthogonality now gives, step by step,
\begin{align*}
\E z&=A_0 2^{-4}c^8,\\
\operatorname{Var}(z)
 &=A_0^2 2^{-8}\bigl((c^2+s^2)^8-c^{16}\bigr)
   =A_0^2 2^{-8}(1-c^{16}),\\
\operatorname{Cov}(z,y)
 &=A_0 2^{-4}\sqrt{28}\,c^6s^2.
\end{align*}
The measured feature is unbiased and has unconditional variance
$\operatorname{Var}(z)+\bar\Omega_z/S$; its covariance with $y$
is unchanged. The one-feature least-squares reduction
$\operatorname{Cov}(z,y)^2/
(\operatorname{Var}(z)+\bar\Omega_z/S)$ gives
\eqref{eq:analytic-gap}. At $\delta=\pi/6$, the ideal reduction is
$20412/58975$, and hence the ideal risk is $38563/58975$.
More explicitly, the same feasible circuit has risk below the
$d=3$ floor of $3/4$ whenever
\[
S>\frac{2^{24}}{22673}\frac{\bar\Omega_z}{A_0^2}.
\]
This supplies a finite-shot witness for excluding
$d\in\{3,\ldots,8\}$ in the stated family; it asserts nothing about
nonlinear click features.

\subsection{Gate-level directional recursion}\label{app:gate-recursion}
Let the gates be applied in order $k=1,2,\dots$ with matrices $A_k$,
and let $x_k,y_k$ be the directional drive increments seen by gate $k$.
Write $T^{(k)}=A_k\cdots A_1$. For a gate driven by one scalar
coordinate, its first and mixed directional derivatives are
$DA_k[x]=x_kA_k'$ and $D^2A_k[x,y]=x_ky_kA_k''$; a fixed gate has
$x_k=y_k=0$. Differentiating
$T^{(k)}=A_kT^{(k-1)}$ once gives
$DT^{(k)}[x]=A_kDT^{(k-1)}[x]+x_kA_k'T^{(k-1)}$.
Differentiating this identity in direction $y$ gives four terms:
\begin{align*}
D^2T^{(k)}[x,y]={}&A_kD^2T^{(k-1)}[x,y]
 +x_ky_kA_k''T^{(k-1)}\\
&+x_kA_k'DT^{(k-1)}[y]
 +y_kA_k'DT^{(k-1)}[x].
\end{align*}
The second term differentiates one gate twice; the final two select
two distinct, time-ordered gates. We track these contributions
separately as $T_S$ and $T_P$.
Initialize $(T,T_x,T_y,T_S,T_P)=(I,0,0,0,0)$ and update
\begin{align*}
T&\leftarrow A_kT,\\
T_x&\leftarrow A_kT_x+x_kA_k'T,\\
T_y&\leftarrow A_kT_y+y_kA_k'T,\\
T_S&\leftarrow A_kT_S+x_ky_kA_k''T,\\
T_P&\leftarrow A_kT_P+x_kA_k'T_y+y_kA_k'T_x,
\end{align*}
using pre-update values on every right-hand side. The product rule
then proves inductively that $DT[x]=T_x$ and
$D^2T[x,y]=T_S+T_P=:T_{xy}$. In particular, the update for $T_P$
uses the \emph{old} $T_x,T_y$; otherwise it would count the current
gate twice.

The covariance has two transfer-matrix legs. Applying the product
rule to each moment transformation in~\eqref{eq:NM} yields
\begin{align*}
D^2N[x,y]={}&\bar T_{xy}N_{\rm in}T^\top
 +\bar TN_{\rm in}T_{xy}^\top\\
&+\bar T_xN_{\rm in}T_y^\top
 +\bar T_yN_{\rm in}T_x^\top,\\
D^2M[x,y]={}&T_{xy}M_{\rm in}T^\top
 +TM_{\rm in}T_{xy}^\top\\
&+T_xM_{\rm in}T_y^\top
 +T_yM_{\rm in}T_x^\top.
\end{align*}
The first two terms in each line include the single-path
single-gate curvature $T_S$ and the ordered same-path pairs $T_P$.
The last two place the differentiated gates on opposite legs joined
by a nonzero source moment. Since $V$ is affine in
$(N,M,\bar M)$, these expressions also give $D^2V[x,y]$ without a
full Hessian.

\subsection{Conditional risk error}\label{app:proof-riskerror}
Let $z^{(k)}$ denote the \emph{centered} $k$th-order feature
surrogate in~\eqref{eq:lifted}, and write the exact centered feature
as $z^{(k)}+r_c$. The exact and surrogate risks below use the same
feature coordinates, target centering and fixed ridge parameter
$\lambda$; any difference in shot covariance is counted explicitly.
Put $\rho^2=\E\|r_c\|^2$,
$\sigma_k^2=\E\|z^{(k)}\|^2$,
$\sigma_y^2=qR_0$,
$A_k=\Sigma_k+\bar\Omega_k/S$,
$\delta=\|A-A_k\|$, and
$\alpha=\lambda_{\min}(A_k)+\lambda>0$.
If $\delta<\alpha$, the same fixed-$\lambda$ prediction risks obey
\[
\begin{aligned}
|\Delta R|\le\frac1q\Bigl[&
  2\|C_k\|_F\|\mathcal G_k\|\|\Delta_C\|_F
 +\|\mathcal G_k\|\|\Delta_C\|_F^2\\
&+\|\Delta\mathcal G\|
  (\|C_k\|_F+\|\Delta_C\|_F)^2\Bigr],
\end{aligned}
\]
where $C_k=\operatorname{Cov}(z^{(k)},y)$,
$\mathcal G_k=(A_k+\lambda I)^{-1}+\lambda(A_k+\lambda I)^{-2}$,
$\|\mathcal G_k\|\le2/\alpha$, and
$\|\Delta\mathcal G\|\le
\delta[2+\alpha/(\alpha-\delta)]/[\alpha(\alpha-\delta)]$.
The bounds below prove this certificate.
Indeed, with $\tilde y=y-\E y$, expanding the covariance and
cross-covariance of $z^{(k)}+r_c$ gives
\begin{align*}
\Delta_C&=\E[r_c\tilde y^\top],
&
\Delta_\Sigma
&=\E[z^{(k)}r_c^\top+r_c(z^{(k)})^\top+r_cr_c^\top].
\end{align*}
Cauchy--Schwarz yields
$\|\Delta_C\|_F\le\rho\sigma_y$ and
$\|\Delta_\Sigma\|\le2\rho\sigma_k+\rho^2$:
for example,
$\|\E[z^{(k)}r_c^\top]\|
\le\E[\|z^{(k)}\|\|r_c\|]\le\sigma_k\rho$.
If shot covariances are not held fixed, the total Gram perturbation is
$\Delta_A=\Delta_\Sigma+(\bar\Omega_b-\bar\Omega_k)/S$,
so $\delta=\|\Delta_A\|\le2\rho\sigma_k+\rho^2+
\|\bar\Omega_b-\bar\Omega_k\|/S$.

Set $R=(A_k+\lambda I)^{-1}$ and
$R'=(A_k+\Delta_A+\lambda I)^{-1}$. The assumption $\delta<\alpha$
implies
\[
\|R\|\le\frac1\alpha,\qquad
\|R'\|\le\frac1{\alpha-\delta},\qquad
\|R'-R\|\le\frac{\delta}{\alpha(\alpha-\delta)}
\]
by the resolvent identity. Since
$R'^2-R^2=R'(R'-R)+(R'-R)R$ and
$\mathcal G_k=R+\lambda R^2$, using $\lambda\le\alpha$ gives
\begin{align*}
\|\Delta\mathcal G\|
&\le\frac{\delta}{\alpha(\alpha-\delta)}
 +\lambda\left[
   \frac{\delta}{\alpha(\alpha-\delta)^2}
  +\frac{\delta}{\alpha^2(\alpha-\delta)}
 \right]\\
&\le\frac{\delta}{\alpha(\alpha-\delta)}
  \left(2+\frac{\alpha}{\alpha-\delta}\right).
\end{align*}
Finally, put $C'=C_k+\Delta_C$ and
$\mathcal G'=\mathcal G_k+\Delta\mathcal G$. Expanding the quadratic form,
\begin{align*}
\tr(C'^\top\mathcal G'C')-\tr(C_k^\top\mathcal G_kC_k)
={}&\tr(C'^\top\Delta\mathcal G\,C')\\
&+2\tr(\Delta_C^\top\mathcal G_kC_k)
 +\tr(\Delta_C^\top\mathcal G_k\Delta_C).
\end{align*}
For matrices with multiple target columns, use
$|\tr(X^\top GX)|\le\|G\|\|X\|_F^2$ and
$|\tr(X^\top GY)|\le\|G\|\|X\|_F\|Y\|_F$.
Applying these inequalities and the resolvent bounds gives the claimed risk bound.
\endgroup

\section{Numerical checks}\label{app:checks}
\paragraph{Gate encoding.}
We sample $V_\theta$ on an equispaced grid covering a full period of the base frequency, obtain the Fourier coefficients by a discrete Fourier transform, and reconstruct $V_\theta$ at random off-grid points. Any frequency outside the predicted set would alias and spoil the reconstruction (\cref{tab:exact}). The predicted frequency bound is tight when a component feeds one gate: the $\pm2\kappa$ harmonic carries a median $56\%$ of the energy. When a component feeds several gates the predicted set is sufficient but not tight at its corners.

\begin{table}[htb]
\centering\small
\caption{Checks of \cref{cor:gate,thm:bridge} for gate encoding (PM2.5 training windows). Residuals are relative off-grid reconstruction errors.}
\label{tab:exact}
\begin{tabular}{@{}>{\raggedright}p{6.0cm}>{\raggedright}p{3.0cm}>{\raggedright\arraybackslash}p{7.2cm}@{}}
\toprule
Check & Design & Result\\ \midrule
single components (40), 5 harmonics & $U8$, coupling & residual $\le3.6\times10^{-14}$; out-of-band energy $\le7.7\times10^{-29}$\\
single components (12), 81 harmonics & F1-29, coupling + EOM & residual $\le4.2\times10^{-13}$; out-of-band energy $\le2.0\times10^{-26}$\\
component pairs (20), $5\times5$ grid & $U8$ & residual $\le2.0\times10^{-14}$\\
component triples (10), $5^3$ grid & $U8$ & residual $\le2.0\times10^{-14}$\\
spectral risk identity, one triple, heterodyne & $U8$ & $\max_{\lambda\in\{0.01,1,100\}}|\Delta R|/R=2.4\times10^{-15}$\\ \bottomrule
\end{tabular}
\end{table}

\paragraph{Beyond gate encoding.}
\Cref{tab:general} checks the remaining statements of \cref{sec:bridge} on small synthetic circuits with random drives. With source squeezing varied step by step, the covariance is affine in $(u_t,v_t)$, distinct steps never multiply in it, and the gate harmonics keep the predicted bound when a drive component feeds two gates. Every gate--source product predicted absent by the path rule vanishes, and every predicted one is present in this circuit. At photon-number order the picture changes as \cref{prop:gauss} states: a product of two sources appears in $\langle n_in_j\rangle$ exactly when one source reaches mode $i$ and the other mode $j$, products of three sources never appear, and the gate harmonics extend to four legs. For Fock inputs, computed exactly from permanents in a two-loop circuit whose paths interfere, the harmonic bounds of \cref{cor:fock} hold and are attained for coupling encoding. The damping factor of \cref{prop:noise} agrees with direct quadrature over the drive noise.

\begin{table}[htb]
\centering\small
\caption{Checks of \cref{thm:poly,prop:gauss,cor:squeeze,cor:fock,prop:noise} beyond gate encoding (synthetic drives; residuals are relative).}
\label{tab:general}
\begin{tabular}{@{}>{\raggedright}p{7.2cm}>{\raggedright}p{3.2cm}>{\raggedright\arraybackslash}p{5.8cm}@{}}
\toprule
Statement & Circuit & Result\\ \midrule
covariance affine in one step's $(u_t,v_t)$ (\cref{cor:squeeze}) & 2 loops, 1 rail, $L=10$ & fit residual $\le4.3\times10^{-14}$ over $9$ steps\\
no product of two sources in the covariance (\cref{prop:gauss}(i)) & same & all $45$ step pairs: $\le2.0\times10^{-16}$\\
gate harmonics $|n|\le2B_\alpha=4$ under two squeezing profiles & same, fan-out $2$ & residual $\le2.1\times10^{-14}$; out-of-band energy $\le2.1\times10^{-33}$\\
gate--source products vs.\ path rule (\cref{thm:poly}(c)) & same & $100$ pairs: $36$ predicted, $36$ nonzero; no miss, no false alarm\\
photon-number moments at most quadratic in one source; non-affine iff the source reaches both modes (\cref{prop:gauss}(ii)) & same & $172$ cases: quadratic-fit residual $\le4.7\times10^{-12}$; $53$ predicted non-affine, $53$ found\\
two-source products in photon-number moments vs.\ path rule (\cref{thm:poly}(c)) & same & $1620$ cases: $239$ predicted, $239$ nonzero; no miss, no false alarm\\
no three-source products in photon-number moments (\cref{prop:gauss}(ii)) & same & $120$ triples: $\le2.0\times10^{-15}$\\
gate harmonics of photon-number moments $|n|\le4B_\alpha=8$ (\cref{thm:poly}(a)) & same, fan-out $2$ & residual $\le2.9\times10^{-14}$; out-of-band $\le9.5\times10^{-33}$; highest harmonic $8$ (covariance: $4$)\\
Fock inputs, $N=1,2,3$, coupling encoding: $|n|\le2N$ (\cref{cor:fock}) & 2 loops $\tau=(1,2)$, lossless & residual $\le1.8\times10^{-15}$; out-of-band $\le7.1\times10^{-32}$; bound attained\\
Fock inputs, $N=1,2,3$, phase encoding: $|n|\le N$ & same & residual $\le5.4\times10^{-15}$; out-of-band $\le1.7\times10^{-32}$\\
drive-noise damping $e^{-\omega^2\sigma^2/2}$ (\cref{prop:noise}) & 2 loops, fan-out $2$ & $\sigma\in\{0.3,1,2.5\}$: $\le8.1\times10^{-12}$\\ \bottomrule
\end{tabular}
\end{table}

\paragraph{Taylor hierarchy.}
\Cref{tab:hierarchy} lists the state and risk errors of the Taylor truncations at $U8$ summarized in \cref{sec:numerics-hierarchy}.

\begin{table}[htb]
\centering\small
\caption{Taylor truncations at $U8$: median state error $E_k$ and relative risk error $\delta_k$ for the homodyne-type and click receivers. Click risk errors use the full composite Taylor expansion along each path, which includes receiver curvature; on exchange the risk is insensitive to the features, so its risk errors carry no information.}
\label{tab:hierarchy}
\begin{tabular}{@{}lccc@{}}
\toprule
Dataset & $E_1\to E_2\to E_3\to E_4$ & homodyne-type $\delta_1\to\delta_4$ (\%) & click $\delta_1\to\delta_4$ (\%)\\ \midrule
PM2.5    & $25.7\to3.4\to0.97\to0.30\%$ & $0.82\to0.37\to0.08\to0.02$ & $0.55\to0.11\to0.13\to0.04$\\
solar    & $26.2\to4.1\to1.2\to0.35\%$  & $1.13\to0.20\to0.17\to0.02$ & $1.47\to0.11\to0.24\to0.02$\\
ETTh1    & $33.1\to4.1\to1.2\to0.34\%$  & $0.02\to0.16\to0.05\to0.00$ & $0.28\to0.06\to0.10\to0.00$\\
exchange & $28.2\to2.6\to0.78\to0.15\%$ & ${\approx}0$ & ${\approx}0$\\
traffic  & $28.3\to4.5\to1.1\to0.45\%$  & $0.34\to0.28\to0.14\to0.00$ & $1.19\to0.51\to0.53\to0.05$\\ \bottomrule
\end{tabular}
\end{table}

\paragraph{Interaction-order truncation.}
\Cref{tab:interaction} compares truncation in total Taylor order $T_k$ with truncation in interaction order $I_r$, which keeps every function of at most $r$ drive components exactly. Single-component exactness alone ($I_1$) is worse than second-order Taylor, and $I_2$ lies between $T_2$ and $T_3$.

\begin{table}[htb]
\centering\small
\caption{Median state error on 64 windows at $U8$: Taylor order $T_k$ versus interaction order $I_r$.}
\label{tab:interaction}
\begin{tabular}{@{}lcccccc@{}}
\toprule
 & $T_1$ & $I_1$ & $T_2$ & $I_2$ & $T_3$ & $T_4$\\ \midrule
PM2.5   & $26.6\%$ & $7.8\%$ & $3.3\%$ & $1.7\%$  & $0.87\%$ & $0.23\%$\\
traffic & $30.1\%$ & $8.0\%$ & $4.5\%$ & $1.24\%$ & $1.22\%$ & $0.42\%$\\ \bottomrule
\end{tabular}
\end{table}

\paragraph{Click approximants.}
\Cref{tab:click} lists the checks of \cref{prop:click} summarized in \cref{sec:numerics-click}. Computed from the eigenvalues of $Q_J$, the exact click features and noise reproduce the implementation of \cref{sec:numerics-setup} to $1.3\times10^{-14}$. The a-priori certificate inserts the bounds of \cref{app:proof-click} into \cref{app:proof-riskerror}; the a-posteriori certificate inserts the actual $\rho$, $\sigma_k$ and $\|\bar\Omega_b-\bar\Omega_d\|$ into the same inequality.

\begin{table}[htb]
\centering\small
\caption{Order-$d$ click approximants at $U8$ (maxima over seven datasets). The $p_0$ bound is $(1+\varepsilon_d)^8-1$ for mode sets of up to four modes. Certificates bound $|\Delta R|/R$; a dash means that the stability condition fails or the bound exceeds $1$.}
\label{tab:click}
\setlength{\tabcolsep}{4pt}
\begin{tabular}{@{}rccccc@{}}
\toprule
 & & & risk error & a-priori certificate & a-posteriori certificate\\
$d$ & $\varepsilon_d$ & $p_0$ error (bound) & $\lambda=0.01/1/100$ & $\lambda=100\,/\,1$ & $\lambda=100\,/\,1\,/\,0.01$\\ \midrule
$2$  & $7.0\times10^{-3}$ & $3.3\times10^{-2}$ ($5.7\times10^{-2}$) & $3.3/1.5/1.7\times10^{-3}$ & --- / --- & $4.9\times10^{-2}$ / --- / ---\\
$3$  & $1.3\times10^{-3}$ & $8.7\times10^{-3}$ ($1.1\times10^{-2}$) & $2.8/2.3/0.39\times10^{-3}$ & --- / --- & $1.9\times10^{-2}$ / --- / ---\\
$4$  & $2.5\times10^{-4}$ & $1.6\times10^{-3}$ ($2.0\times10^{-3}$) & $3.6/4.5/1.7\times10^{-4}$ & --- / --- & $2.5\times10^{-3}$ / --- / ---\\
$6$  & $9.7\times10^{-6}$ & $6.5\times10^{-5}$ ($7.8\times10^{-5}$) & $28/6.9/1.9\times10^{-6}$ & $6.2\times10^{-3}$ / --- & $8.3\times10^{-5}$ / $0.12$ / ---\\
$8$  & $3.9\times10^{-7}$ & $2.6\times10^{-6}$ ($3.1\times10^{-6}$) & $15/10/1.5\times10^{-7}$ & $2.3\times10^{-4}$ / $0.37$ & $5.0\times10^{-6}$ / $6.7\times10^{-3}$ / ---\\
$10$ & $1.6\times10^{-8}$ & $1.1\times10^{-7}$ ($1.3\times10^{-7}$) & $40/43/11\times10^{-9}$ & $9.4\times10^{-6}$ / $1.3\times10^{-2}$ & $2.5\times10^{-7}$ / $3.3\times10^{-4}$ / ---\\
$12$ & $6.6\times10^{-10}$ & $4.4\times10^{-9}$ ($5.3\times10^{-9}$) & $2.8/1.3/0.12\times10^{-9}$ & $3.9\times10^{-7}$ / $5.5\times10^{-4}$ & $9.4\times10^{-9}$ / $1.3\times10^{-5}$ / $0.12$\\ \bottomrule
\end{tabular}
\end{table}

\paragraph{Gate placement.}
To illustrate what the path rule says before fitting any task, we place the same inter-rail beam splitters at the beginning, in the middle, or at the end of a four-rail, two-loop step ($\tau=(3,7)$, two driven rails, equal TMSV squeezing). Every pair certified structurally zero remains numerically zero among the $1176$ checked drive pairs in each placement. The beginning placement has $38$ additional zero pairs that the graph does not certify: equal-squeezing source symmetry cancels their amplitudes. The placements are not equivalent as design choices: the beginning and end cases are fixed output-basis changes in this symmetric-source example, while middle placement creates paths through both modulated loop interactions (\cref{tab:rail-placement}). This is a single-circuit structural counterfactual using synthetic drives, not a forecast improvement or a general statement that boundary beam splitters are always redundant.

\begin{table}[htb]
\centering\small
\caption{Inter-rail beam-splitter placement in one four-rail counterfactual. Readable drives and nonzero interaction pairs are evaluated numerically; parenthesized counts are cross-channel pairs. The variation rank is the rank of sampled covariance changes, not a universal capacity.}
\label{tab:rail-placement}
\begin{tabular}{@{}lccc@{}}
\toprule
Placement & Drives & Pairs & Variation rank\\ \midrule
none & $16$ & $51\;(0)$ & $17$\\
beginning & $16$ & $51\;(0)$ & $17$\\
end & $16$ & $51\;(0)$ & $17$\\
middle & $25$ & $170\;(80)$ & $43$\\ \bottomrule
\end{tabular}
\end{table}

\section{Illustration: task-conditioned readout ages}\label{app:t3}
This development-stage comparison illustrates how task statistics and
circuit structure combine; it is not part of the theory and uses the
development validation segment. The B1 history profile of each task
measures how much its own-history baseline depends on lags $64$--$95$.
The paired test T3 then moved only selected readout ages into that lag
block (\cref{tab:t3}). Traffic and weather favor the aligned arm, while
exchange does not. These comparisons are not independent confirmation
or a universal age-selection rule. A topology-only prescription tested
in the same way, ``$\gcd(\tau)=1$ is better'', reversed on traffic
($+1.47\%$).

\begin{table}[h]
\centering\small
\caption{T3 paired age-alignment test (development validation segment): task history, changed readout ages and finite-shot forecasting error.}
\label{tab:t3}
\begin{tabular}{@{}lll@{}}
\toprule
Task / B1 & Ages $B\to A$ & MSE $B\to A$ ($\Delta$)\\ \midrule
traffic, $0.148$ & $32,40,48,56\to64,72,80,95$ & $0.5134\to0.5041$ $(-1.82\%)$\\
weather, $0.105$ & $24,40,56\to64,80,95$ & $0.4750\to0.4688$ $(-1.30\%)$\\
exchange, $0.00094$ & $32,40,56\to64,80,95$ & $0.1264\to0.1282$ $(+1.42\%)$\\ \bottomrule
\end{tabular}
\par\smallskip\raggedright\footnotesize
B1 is the increase in own-history baseline MSE when lags $64$--$95$ are removed, not GBS gain. Only changed ages are shown; the others remain fixed. $\Delta=(\mathrm{MSE}_A-\mathrm{MSE}_B)/\mathrm{MSE}_B$. Within each pair: $\tau=(1,12,36)$, eight ages, two rails, equal source energy, $S=2\times10^4$ threshold-click shots; MSEs average eight noise realizations. On exchange, the control head nearly suppresses the reservoir features and the aligned arm has larger shot variation, so its reversal is descriptive.
\end{table}

\section{Protocols and stopped studies}\label{app:protocol}
The risk comparisons in \cref{tab:risk} and the family comparisons in \cref{tab:family} were run under protocols registered before each run (A16, A20, A21), as was the second-order state test behind the $U8$ state errors (A15). The exactness checks (\cref{tab:exact}), the Taylor and interaction hierarchies (\cref{tab:hierarchy,tab:interaction}), the structural-support counts, the gate-placement counterfactual (\cref{tab:rail-placement}), the synthetic checks (\cref{tab:general}), the squeezing-encoding identity check (\cref{tab:squeeze}), the click-approximant check (\cref{tab:click}), the component and receiver checks (\cref{tab:components}) and the shot-law check (\cref{tab:shots}) are descriptive and were not pre-registered. Two studies were stopped by decision before completion and are reported for transparency.
(i) A selection study on the development validation segment was stopped after its second-order predictions had been locked and before any of its criteria were evaluated. Only its training-segment part enters \cref{tab:family}.
(ii) A four-dataset family study was stopped once \cref{thm:bridge} made exact risk available. Its single-point part is complete (\cref{tab:risk}, last four rows), and its family part is complete for solar only (\cref{tab:family}).
No computation in this paper reads the test segments (the final $20\%$ of each series). These segments are not unexposed data, however: earlier stages of this project evaluated forecasts on the same time ranges. We therefore make no held-out claim from them, and none of the results here requires one.

\section{Reproducibility}\label{app:repro}
\sloppy
The optical circuit replay uses the DeepQuantum Python framework~\cite{he2025deepquantum} and is checked against an independent NumPy transfer-matrix implementation.
Study-specific receiver, second-order propagation and structural-support routines are implemented in
\texttt{general\_tdm.py}, \texttt{bridge\_layers.py} and \texttt{second\_order\_state.py}. The A15, A16, A20 and A21 scripts write a pre-registration file before touching data; every script writes a JSON result file, and the remaining checks are descriptive (\cref{app:protocol}). \Cref{tab:repro} maps reported numbers and generated figures to their sources. Unqualified script names are under \texttt{scripts/}; unqualified results are under \texttt{experiments/}.
Code and result files are available from the corresponding author upon reasonable request.

{\footnotesize
\renewcommand{\arraystretch}{0.84}
\begin{longtable}{@{}>{\raggedright}p{6.0cm}>{\raggedright}p{4.8cm}>{\raggedright\arraybackslash}p{5.8cm}@{}}
\caption{Provenance of reported results.}\label{tab:repro}\\
\toprule
Result & Script & Evidence or output\\ \midrule
\endfirsthead
\multicolumn{3}{@{}l}{\emph{\cref{tab:repro}, continued}}\\
\toprule
Result & Script & Evidence or output\\ \midrule
\endhead
\bottomrule
\endlastfoot
Implementation checks, receiver layers (\cref{sec:numerics-setup}) & \texttt{a14\_layer\_split.py} & \texttt{a14/layer\_split.json}\\
State error $E_1,E_2$ at $U8$, Taylor regime & \texttt{a15\_second\_order.py} & \texttt{a15/second\_order.json}\\
Gate-level recursion vs.\ finite differences & \texttt{a17\_hvp\_support.py} & \texttt{a17/hvp\_support.json}\\
Exactness of \cref{thm:bridge} (\cref{tab:exact}) & \texttt{a22\_trig\_exact.py} & \texttt{a22/a22\_trig\_exact.json}\\
Taylor hierarchy (\cref{fig:truncation}, \cref{tab:hierarchy}); F1-29 & \texttt{a21b\_higher\_order.py} & \texttt{a21/a21b\_higher\_order.json}\\
Interaction truncation (\cref{tab:interaction}); non-admissible pairs & \texttt{a21c\_anova\_screen.py -{}-windows 64} & \texttt{a21/a21c\_w64.json}\\
Risk accuracy, weather/electricity/traffic (\cref{tab:risk}) & \texttt{a16\_second\_order\_risk.py} & \texttt{a16/second\_order\_risk.json}\\
Risk accuracy, solar/ETTh1/PM2.5/exchange (\cref{tab:risk}) & \texttt{a21\_theory\_check.py} (part 1) & \texttt{a21/a21\_backends.json}\\
Family ranking, weather/electricity/traffic (\cref{fig:family}, \cref{tab:family}) & \texttt{a20\_selection\_regret.py -{}-stage bridge} & \texttt{a20/a20\_bridge.json}\\
Family ranking, solar & \texttt{a21\_theory\_check.py} (part 2) & \texttt{a21/a21\_family.json}\\
Structural support, out of sample (3814 pairs) & \texttt{a17c\_dag\_rule\_oos.py} & \texttt{a17/dag\_rule\_oos.json}\\
Structural support at $U8$ (400 pairs) & \texttt{a17d\_u8\_support.py} & \texttt{a17/u8\_support.json}\\
Gate-placement counterfactual (\cref{tab:rail-placement}) & \texttt{a23\_rail\_place\_check.py} & \texttt{a23/a23\_rail\_place.json}\\
Squeezing encoding, source products, gate--source support (\cref{tab:general}) & \texttt{a24\_unified\_encoding\_check.py} & \texttt{a24/a24\_unified\_encoding.json}\\
Fock inputs and drive-noise damping (\cref{tab:general}) & \texttt{a25\_fock\_and\_jitter.py} & \texttt{a25/a25\_fock\_and\_jitter.json}\\
Photon-number-order checks (\cref{tab:general}) & \texttt{a26\_photon\_number\_order.py} & \texttt{a26/a26\_photon\_number\_order.json}\\
Squeezing-encoded risk from moments (\cref{tab:squeeze}) & \texttt{a27\_squeeze\_encoding\_risk.py} & \texttt{a27/a27\_squeeze\_risk.json}\\
Same risk from $\mathcal L$ and $\mathcal L_y$ only (\cref{tab:squeeze}) & \texttt{a27b\_mgf\_route.py} & \texttt{a27/a27b\_mgf\_route.json}\\
Click approximants (\cref{tab:click}) & \texttt{a28\_click\_polynomial.py} & \texttt{a28/a28\_click\_polynomial.json}\\
A-posteriori click certificates (\cref{tab:click}) & \texttt{a28b\_click\_certificate\_post.py} & \texttt{a28/a28b\_click\_\allowbreak certificate\_post\_d12.json}\\
Components and receivers, delay family (\cref{tab:components}) & \texttt{a29\_event\_span.py} & \texttt{a29/a29\_event\_span.json}\\
Event-span click support at $U8$ (\cref{sec:numerics-components}) & \texttt{a19c\_event\_span\_support.py} & \texttt{a19/a19c\_event\_span\_support.json}\\
Shot law (\cref{tab:shots}) & \texttt{a31\_shot\_law.py} & \texttt{a31/a31\_shot\_law.json}\\
Delay-family certificate (\cref{fig:analytic-witness}) & Analytic construction in \cref{app:analytic-certificate} & \cref{eq:analytic-floor,eq:analytic-gap}; figure drawn in TikZ in the manuscript source\\
Support sizes $41{,}616\to2{,}346\to238$ & \texttt{a19\_lifted\_closed\_form.py} & \texttt{a19/lifted\_closed\_form.json}\\
Click curvature support ($997$ pairs) & \texttt{a19b\_receiver\_\allowbreak curvature\_support.py} & \texttt{a19/receiver\_curvature\_support.json}\\
Paired tests T1--T3 (\cref{tab:t3}) & \texttt{a4\_t123.py} & \texttt{a4/t123.json}, \texttt{a4/PROTOCOL\_T123.md}\\
Shot and classical-cost walls (\cref{sec:boundary}) & \texttt{a5\_two\_walls.py} & \texttt{a5/two\_walls.json}\\
Stopped studies (\cref{app:protocol}) & --- & \texttt{a20/A20\_STOPPED.md}, \texttt{a21/A21\_STOPPED.md}\\
Circuit in \cref{fig:chain} & \texttt{paper/figs/\allowbreak make\_u8\_deepquantum.py} & \texttt{paper/figs/u8\_step\_dq.svg}\\
\Cref{fig:truncation,fig:family,fig:boundary} & \texttt{paper/figs/make\_figs.py}, \texttt{make\_boundary.py} & read the JSON files above\\
\end{longtable}}

\end{document}